\newtheorem{theorem}{Theorem}
\newtheorem{lemma}[theorem]{Lemma}
\newtheorem{remark}[theorem]{Remark}
\newtheorem{proposition}[theorem]{Proposition}
\newcommand{\ord}{{\mathrm{ord}}}
\newcommand{\lcm}{{\mathrm{lcm}}}
\newcommand{\gf}{{\mathrm{GF}}}
\newcommand{\Z}{\mathbb{{Z}}}
\newcommand{\cD}{{\mathcal{D}}}
\newcommand{\C}{{\mathcal{C}}}
\begin{document}
%
\title{Improvement of the square-root low bounds on the minimum distances of BCH codes and Matrix-product codes
\thanks{
The work  was sponsored by the National Natural Science Foundation of China under Grant Numbers  62372247, 62272148, 12441103, in part by the Natural Science Foundation of Hubei Province of China
under Grant Number 2023AFB847,  by the Sunrise Program of Wuhan under the Grant Number 2023010201020419,  by the open research fund of National Mobile Communications Research Laboratory, Southeast University under Grant Number  2025D01, and by the National College Student Innovation Project under Grant Number 202410512027.
}
}
\author{Xiaoqiang Wang,\,\, Liuyi Li, \,\,  Yansheng Wu, \,\,Dabin Zheng, \,\,Shuxian Lu {\thanks{X. Wang, L. Li, D. Zheng and S. Lu are all with the Hubei Key Laboratory of Applied Mathematics, and with Key Laboratory of Intelligent Sensing System and Security, Ministry of Education, Faculty of Mathematics and Statistics, Hubei University, Wuhan, 430062, China. Email: waxiqq@163.com, llyhzq@163.com, dzheng@hubu.edu.cn, lulushuxian@163.com. }

\and\thanks{Y. Wu is  with the School of Computer Science, Nanjing University of Posts and Telecommunications, Nanjing
210023, China and the National Mobile Communications Research Laboratory, Southeast University, Nanjing, 211111, China.
 Email: yanshengwu@njupt.edu.cn. (Corresponding author) }

}

}

\maketitle

\begin{abstract}
 The task of constructing infinite families of self-dual codes with unbounded lengths and minimum distances exhibiting square-root lower bounds is extremely challenging, especially when it comes to cyclic codes.
Recently, the first infinite family of Euclidean self-dual binary and nonbinary cyclic codes, whose minimum distances have a square-root lower bound and have a lower bound better than square-root lower bounds are constructed in \cite{Chen23} for the lengths of these codes being unbounded.
Let $q$ be a power of a prime number and $Q=q^2$. In this paper, we first improve the lower bounds  on the minimum distances of Euclidean and Hermitian duals of BCH codes with length $\frac{q^m-1}{q^s-1}$ over $\mathbb{F}_q$ and $\frac{Q^m-1}{Q-1}$ over $\mathbb{F}_Q$ in \cite{Fan23,GDL21,Wang24} for the designed distances in some ranges, respectively, where $\frac{m}{s}\geq 3$.
Then based on matrix-product construction and some lower bounds on the minimum distances of BCH codes and their duals,
we obtain several classes of Euclidean and Hermitian self-dual codes, whose minimum distances have square-root lower bounds or a square-root-like lower bounds. Our lower bounds on the minimum distances of Euclidean and Hermitian self-dual cyclic codes improved many results in \cite{Chen23}. In addition, our lower bounds on the minimum distances of the duals of BCH codes are almost $q^s-1$ or $q$ times that of the existing lower bounds.

\end{abstract}



\begin{IEEEkeywords}
Cyclic code, BCH code,  self-dual code, matrix-product code, square-root lower bound.
\end{IEEEkeywords}
%

\section{Introduction}\label{sec-intro}

Let $q$ be a power of  some prime number $p$.
An $[n,k,d]$ linear code $\C$ over $\mathbb{F}_q$ is a $k$-dimensional subspace of $\mathbb{F}_q^n$ with minimum
Hamming distance $d$.  Let
$$\mathcal{C}^{\perp}=\{\mathbf{b} \in \mathbb{F}_q^n\,:\,\mathbf{b}\mathbf{c}^{T}=0 \,\,\text{for any $\mathbf{c} \in \mathcal{C}$}\}, $$
where $\mathbf{b}\mathbf{c}^{T}$ is the standard inner product of two vectors $\mathbf{b}$ and $\mathbf{c}$.
Then $\mathcal{C}^{\perp}$ is called {\it the dual} of $\mathcal{C}$.
A linear code $\mathcal{C}\subset \mathbb{F}_q^n$ is {\it self-orthogonal} if $\mathcal{C}\subseteq \mathcal{C}^{\perp}$, {\it dual-containing}  if $\mathcal{C}^{\perp}\subseteq\mathcal{C}$, and {\it self-dual} if $\mathcal{C}= \mathcal{C}^{\perp}$. A lower bound on the minimum distance of an $[n, k, d]$ code is called the {\it square-root lower bound} if it is
$\sqrt{n}$, and the {\it square-root-like lower bound} if it is $c\sqrt{n}$ for a fixed positive constant $c$ and fixed $q$. {\it In this paper, for square-root-like lower bound, we only consider the case that $c=\frac{1}{2}$. }

Cyclic codes is an important  class of linear codes.
 An $[n,k]$ linear code $\C$ over $\mathbb{F}_q$ is called {\em cyclic} if
$(c_0,c_1, \cdots, c_{n-1}) \in \C$ implies $(c_{n-1}, c_0, c_1, \cdots, c_{n-2})
\in \C$.
If we identify any vector $(c_0,c_1, \cdots, c_{n-1}) \in \mathbb{F}_q^n$
with  the polynomial
$$
\sum_{i=0}^{n-1} c_i x^i  \in \mathbb{F}_q[x]/\langle x^n-1 \rangle,
$$
any code $\C$ of length $n$ over $\mathbb{F}_q$ corresponds to a subset of the quotient ring
$\mathbb{F}_q[x]/\langle x^n-1 \rangle$.
A linear code $\C$ is cyclic if and only if the corresponding subset in $\mathbb{F}_q[x]/\langle x^n-1 \rangle$
is an ideal of the ring $\mathbb{F}_q[x]/\langle x^n-1 \rangle$.  It is well known that every ideal of $\mathbb{F}_q[x]/\langle x^n-1 \rangle$ is principal. Let $\C=\langle g(x) \rangle$ be a
cyclic code, where $g(x)$ is monic and has the smallest degree among all the
generators of $\C$. Then $g(x)$ is unique and called the {\em generator polynomial},
and $h(x)=(x^n-1)/g(x)$ is referred to as the {\em check polynomial} of $\C$.
The \emph{complement code}, denoted by $\C^c$, of the cyclic code $\C$ is the cyclic code
generated by $h(x)$. Hence, $\C + \C^c = \mathbb{F}_q^n$. The dual code $\C^\perp$ of $\C$ is generated
by the reciprocal polynomial of $h(x)$.

BCH codes is a class of cyclic codes, which has wide applications. Assume that $\gcd(n,q)=1$. Let $\ell=\ord_{n}(q)$ be the order of $q$ modulo $n$
and $\alpha$ be a generator of the group $\mathbb{F}_{q^\ell}^*$.
 For any $i$ with $0\leq i\leq q^\ell-2$,
let $m_i(x)$ denote the {\it minimal polynomial} of $\beta^i$ over $\mathbb{F}_{q}$, where $\beta=\alpha^{(q^\ell-1)/n}$ is a primitive $n$-th root of unity.
Let $\C_{(\delta,b)}$ be the cyclic code with generator polynomial
\begin{eqnarray*}
g_{(\delta,b)}(x)=\lcm \left(m_{b}(x), m_{b+1}(x), \ldots, m_{b+\delta-2}(x)\right),
\end{eqnarray*}
where $2\leq \delta\leq n$, $b$ is an integer and lcm denotes the least common multiple of these minimal polynomials. The code  $\C_{(\delta,b)}$
is termed a BCH code over $\mathbb{F}_q$ with length $n$ and designed distance $\delta$. Specifically, when $b=1$, the code $\C_{(\delta,b)}$ is referred to as a narrow-sense BCH code and we will subsequently abbreviate it as $\C_{\delta}$.

Binary BCH codes were discovered by Bose, Ray-Chaudhuri, and  Hocquenghem around 1960 in \cite{Bose62,Hocquenghem59}.  Then Gorenstein and Zierler generalized these codes to all finite fields \cite{Gorenstein61}. Over the past few decades, BCH codes have been extensively researched.
However, the knowledge about the minimum distances of the duals of BCH codes is quite limited. To the best of our knowledge, known results about the minimum distances of the duals of BCH codes are the following.

\begin{itemize}
\item In \cite{MS78}, the authors showed the classical Sidel'nikov bound, and the Carlitz-Uchiyama bound on the minimum distances of Euclidean duals of binary primitive BCH codes with odd designed distances.

\item  In \cite{Augot96}, the authors presented the lower bounds on the minimum distances of Euclidean duals of primitive BCH codes via the adaptation of the Weil bound to cyclic codes.

\item   In \cite{Fu24,GDL21,Wang24}, the authors gave the lower bounds of the minimum distances of Euclidean duals of BCH codes with some special lengths.

\item  In \cite{Fan23}, the authors gave the lower bounds on the minimum distances of Hermitian  duals of BCH codes with length $Q^m-1$ and $\frac{Q^m-1}{Q-1}$, where $Q=q^2$.
\end{itemize}

Let $\delta_0$ be the designed distance of narrow-sense BCH code with length $\frac{q^m-1}{q^s-1}$, where $\frac{m}{s}\geq 3$. The authors in \cite{GDL21,Wang24} showed the lower bounds on the minimum distances of Euclidean duals of BCH codes for $\frac{q^{st}-1}{q^s-1}< \delta_0 \leq \frac{q^{st}-1}{q^s-1}+q^{t-1}$, where $2\leq t\leq \frac{m}{s}-1$. Similarly, let $\delta_1$ be the designed distance of narrow-sense BCH code with length $\frac{Q^m-1}{Q-1}$, where $m\geq 3$. The authors in \cite{Fan23}
showed the lower bounds on the minimum distances the minimum distances of Hermitian duals of BCH codes for $\frac{Q^t-1}{Q-1}< \delta_1 \leq \frac{Q^t-1}{Q-1}+q^{t-1}$, where $2\leq t\leq m-1$. {\it In this paper, by considering different sets of zeroes in the duals of the above BCH codes from those considered in \cite{GDL21,Wang24,Fan23},  we show some new bounds on the minimum distances of Euclidean and Hermitian duals of above BCH codes. Compared with the known bounds,  the values of our lower bounds on the minimum distances of the duals of BCH codes are almost $q^s-1$ or $Q-1$ times the existing lower bounds. }


Constructing
Euclidean self-dual codes have been a hot topic for many years, and significant progress has been made in the study of this type of codes. In the finite length systems, a lot of papers have been published that focus on constructing self-dual GRS codes (refer to \cite{Fang19,Grass08,Jin17,Zhang20}, and the references). For a binary self-dual code of length $n$, the minimum distance $d$ satisfies $ d\leq 4\left\lfloor \frac{n}{24}\right\rfloor+6$ if $n\equiv22 \pmod {24}$ and $d\leq 4\left\lfloor \frac{n}{24}\right\rfloor+4$ for the other cases. A self-dual code that achieves these upper bounds in the binary case is called an extremal self-dual code (\cite{MS78}, Chapter 19). For a ternary self-dual code of length $n$, the minimum distance $d$ satisfies $ d\leq 3\left\lfloor \frac{n}{12}\right\rfloor+3$. A ternary self-dual code is called an
extremal self-dual ternary code if this code attaining the above bound.
 The
construction of extremal or optimal self-dual codes or self-dual codes of large minimum distances over small fields.
The readers can refer to \cite{Betsumiya23,Dougherty97,Gaborit03,Gulliver08,Harada07,Shi18}, and the references. For self-dual codes over small fields with the largest known minimum distances, the reader is referred to [20].
Until to now, there are a few results about infinite families of Euclidean and Hermitian self-dual codes of unbounded length $n$ over any finite fields with minimum distance $d\geq \sqrt{n}$. The reader is referred to, for example, \cite{Ding22,Huffman03,Pless72,Sun24} for information.

The matrix-product codes $C := [C_1 , C_2 , \ldots , C_s ]A$ over finite fields, introduced by Blackmore and Norton \cite{BN}, are a useful type of linear codes with larger lengths by combining several commensurate linear codes $C_1 , C_2 , \ldots , C_s$ of the same length with a defining matrix $A$. The said construction can be viewed as a generalization of the well-known $(u | u + v)$-construction and $(u+v+w | 2u+v | u)$-construction (see \cite{BN}). There are many papers focusing on the study of Hamming distance and decoding of matrix product codes, for example, \cite{VanAschB2008, FanY2014, Hernando2010} for information.

 Recently, by using special matrix-product construction,  namely  $(u | u + v)$-construction, the first infinite family of Euclidean self-dual binary and nonbinary cyclic codes  with and minimum distances have the square-root lower bound and have a lower bound better than the square-root lower bound are constructed for the lengths of the codes being unbounded in \cite{Chen23}.
{\it In this paper,
based on matrix-product construction and some lower bounds on the minimum distances of BCH codes and their duals,
 we obtain several classes of Euclidean and Hermitian self-dual codes, whose minimum distances have a square-root lower bound and a square-root-like lower bound. We generalized some results  in \cite{Chen23}. Moreover, for the same codes, our lower bounds are larger than the lower bounds in \cite{Chen23} in some cases. }

The rest of this paper is organized as follows. Section II contains some preliminaries. Sections III show some new bounds on the minimum distances of Euclidean and Hermitian duals of BCH codes with length $\frac{q^m-1}{q^s-1}$ and $\frac{Q^m-1}{Q-1}$, respectively. Section IV construct several infinite families of Euclidean and Hermitian self-dual linear codes with minimum distances better than  square-root lower bounds and square-root-like lower bounds. Section V concludes the paper.

\section{Preliminaries}

In this section, we present some
basic concepts and results that will be utilized subsequently. Unless stated otherwise, from this point forward, we will use the notation outlined below:
\begin{itemize}
\item $\mathbb{F}_q$ and $\mathbb{F}_Q$ are the finite fields with $q$ and $Q$ elements, where $q$ is a prime power and $Q=q^2$.
\item $\alpha$ is a primitive element of $\mathbb{F}_{q^m}$ and $\beta=\alpha^{\frac{q^m-1}{n}}$ is a primitive $n$-th root of unity, where $n \, |\, q^m-1$.
\item $\mathcal{C}_{\delta}$ denotes the narrow-sense BCH code with designed distance $\delta$, generator polynomial $g_{(\delta,1)}$ and length $n$.
\item $T=\{0\leq i\leq n-1: g_{(\delta,1)}(\beta^i)=0\}$ is the defining set of $\mathcal{C}_{\delta}$ with respect to $\beta$.
\item $T^{-1}=\{n-i\,:\,i\in T\}$ and $T^{-q}=\{(n-qi) \,\,{\rm mod} \,\,n\,:\,i\in T\}$.
\item  $\mathcal{C}^{\perp}_{\delta}$ and $\mathcal{C}^{\perp H}_{\delta}$ are Euclidean and Hermitian duals of $\mathcal{C}_{\delta}$, respectively.
\item $T^{\perp}$ and $T^{\perp H}$ are the defining sets of  $\mathcal{C}^{\perp}_{\delta}$ and $\mathcal{C}^{\perp H}_{\delta}$, respectively.
\item ${\rm CL}(a)$ denotes the $q$-cyclotomic coset leader modulo $n$ containing $a$, where $a$ is a positive integer with $1\leq a< n$.
\item $d^{\perp}(\mathcal{C}_{\delta})$ and $d^{\perp H}(\mathcal{C}_{\delta})$ denote the minimum distances of the duals of  $\mathcal{C}^{\perp}_{\delta}$ and  $\mathcal{C}^{\perp H}_{\delta}$, respectively.
\end{itemize}

\subsection{Cyclotomic coset and  $q$-adic expansion}

Let $n$ be an integer and $\Z_n$ be the ring of integers modulo $n$. The $q$-cyclotomic coset $C_i^{(q,n)}$
modulo $n$ containing $i$ is defined by
$$
C_i^{(q,n)}=\{ iq^j \bmod n: 0 \leq j \leq \ell_i-1\},
$$
where $\ell_i$ is the least positive integer such that $i \equiv i q^{\ell_i} \pmod{n}$, and is the size of $C_i^{(q,n)}$.

Let $m=\ord_n(q)$, i.e., the order of $q$ modulo $n$. Let $\beta$ be an $n$-th primitive root of unity in
$\gf(q^m)$.  Let $\C$ be a binary cyclic code of length $n$ with generator polynomial $g(x)$ and define
$$
T=\{0 \leq i \leq n-1: g(\beta^i)=0\}.
$$
Then the set $T$ is called the \emph{defining set} of $\C$ with respect to the $n$-th primitive root of unity $\beta$. By
definition, $T$ must be the union of some $q$-cyclotomic cosets modulo $n$. The smallest integer in $C_i^{(q,n)}$ is called the {\it coset leader} of $C_i^{(q,n)}$.

Let $i$  be an integer with $0<i<q^m-1$, then the $q$-adic expansion of $i$ can be written as
 $$i=(i_{m-1},i_{m-2},\ldots,i_{0})_q,$$ where
  $i=i_{m-1}q^{m-1}+i_{m-2}q^{m-2}+\cdots+i_1q+i_0$.
Let $0\leq j\leq m-1$, for any $iq^j\pmod {q^m-1}$, by definition, the $q$-adic expansion can be expressed as
$$iq^j\,\,\, (\text{mod} \,\,\,q^m-1)= (i_{m-j-1},i_{m-j-2},\ldots, i_{m-j})_q,$$
which is called the {\it circular $j$-left-shift} of $(i_{m-1},i_{m-2},\ldots,i_0)_q$, where the subscript of each coordinate is regarded as an integer modulo $m$.
The following results are known in \cite{Wang24}.

\begin{lemma}\cite[Lemma 3]{Wang24}\label{lem1b21}
Let $0<a,\,b\leq q^m-1$ be two positive integers with $q$-adic expansion $$a=(a_{m-1},a_{m-2},\ldots,a_0)_q\,\,\,\text{and}\,\,\,b=(b_{m-1},b_{m-2},\ldots,b_0)_q.$$
Then coset leader of  $\mathbb{C}_a$ modulo $q^m-1$ is greater than or equal to $b$ if and only if the circular $j$-left-shift of $(a_{m-1},a_{m-2},\ldots,a_0)$ is greater than or equal to
$(b_{m-1},b_{m-2},\ldots,b_0)$ for each $0 \leq j \leq m-1$.
\end{lemma}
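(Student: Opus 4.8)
\emph{Proof sketch.} The plan is to translate the statement entirely into the language of $q$-adic digit tuples and then to invoke the elementary fact that, for integers in $[0,q^m-1]$, comparison of magnitudes coincides with left-to-right (lexicographic) comparison of digit tuples. First I would record the observation already built into the notion of circular left-shift above: for $0<i<q^m-1$ with $q$-adic expansion $(i_{m-1},\dots,i_0)_q$, the integer $iq^j\bmod(q^m-1)$ has $q$-adic expansion exactly $(i_{m-j-1},\dots,i_{m-j})_q$. Indeed $iq^j=\sum_k i_k q^{k+j}$ and $q^{k+j}\equiv q^{(k+j)\bmod m}\pmod{q^m-1}$, so modulo $q^m-1$ each of $q^0,\dots,q^{m-1}$ occurs once with a coefficient in $\{0,\dots,q-1\}$; since $i\notin\{0,q^m-1\}$ its digits are neither all $0$ nor all $q-1$, so the shifted tuple again represents an integer strictly between $0$ and $q^m-1$ and is a genuine $q$-adic expansion. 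Writing $\sigma_j(a):=aq^j\bmod(q^m-1)$, it follows that the $q$-cyclotomic coset $\mathbb{C}_a$ equals $\{\sigma_j(a):0\le j\le m-1\}$, the $q$-adic expansion of $\sigma_j(a)$ being the circular $j$-left-shift of $(a_{m-1},\dots,a_0)$; here one uses that $\sigma_j(a)=\sigma_{j+\ell_a}(a)$, i.e.\ the shifts are periodic with period $\ell_a\mid m$, so enlarging the index range from $\{0,\dots,\ell_a-1\}$ to $\{0,\dots,m-1\}$ changes nothing. Hence the coset leader of $\mathbb{C}_a$ is $\min_{0\le j\le m-1}\sigma_j(a)$.

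Next I would invoke the one-line comparison lemma: for digit tuples $(x_{m-1},\dots,x_0)$ and $(y_{m-1},\dots,y_0)$ with entries in $\{0,\dots,q-1\}$, one has $\sum_k x_k q^k\ge\sum_k y_k q^k$ if and only if $(x_{m-1},\dots,x_0)\ge(y_{m-1},\dots,y_0)$ in the left-to-right order used throughout the paper, which is proved by inspecting the most significant coordinate at which the two tuples first differ. Combining the two ingredients: the coset leader of $\mathbb{C}_a$, namely $\min_{0\le j\le m-1}\sigma_j(a)$, is $\ge b$ if and only if $\sigma_j(a)\ge b$ as integers for every $0\le j\le m-1$, and by the comparison lemma this holds if and only if the $q$-adic tuple of $\sigma_j(a)$ --- that is, the circular $j$-left-shift of $(a_{m-1},\dots,a_0)$ --- is $\ge(b_{m-1},\dots,b_0)$ for every $0\le j\le m-1$. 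This is exactly the asserted condition.

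I do not anticipate a real difficulty here; the content is bookkeeping, and the only points that need care are: (i) the boundary cases allowed by $0<a,b\le q^m-1$ --- if $a=q^m-1$ the coset is the singleton $\{q^m-1\}$, and if $b=q^m-1$ the condition collapses to $a=q^m-1$, and in both cases the two sides of the equivalence remain consistent; (ii) making explicit that $\sigma_j(a)$ is periodic in $j$ with period $\ell_a$, so that ranging $j$ over all of $\{0,\dots,m-1\}$ does not change the minimum; and (iii) recording that the order on $q$-ary $m$-tuples in the statement is precisely the one induced by the integers they represent, so that the comparison lemma applies verbatim. Once these are pinned down, the proof reduces to the short chain of equivalences above.
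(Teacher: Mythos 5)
The paper does not prove this lemma itself; it is quoted verbatim from \cite[Lemma 3]{Wang24}, so there is no in-paper argument to compare against. Your proof is correct and is the standard one: the three ingredients (the circular $j$-left-shift is the $q$-adic expansion of $aq^j \bmod (q^m-1)$, the shifts for $0\le j\le m-1$ exhaust $\mathbb{C}_a$ by $\ell_a\mid m$-periodicity, and integer comparison on $[0,q^m-1]$ agrees with most-significant-digit-first comparison of the $m$-tuples) combine exactly as you describe, and your handling of the degenerate endpoints $a$ or $b$ equal to $q^m-1$ is the only point requiring a convention, which you state consistently.
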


\begin{lemma}\cite[Lemma 4]{Wang24}\label{lem:0913}
Let $0<t<q^m-1$ be a positive integer. Let $\mu$ be a common factor of $t$ and $q^m-1$, then $t$ is a coset leader modulo $q^m-1$ if and only if $\frac{t}{\mu}$ is a coset leader modulo $\frac{q^m-1}{\mu}$.
\end{lemma}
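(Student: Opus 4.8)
The plan is to reduce the equivalence to a single elementary identity describing how multiplication by $q$ modulo $q^m-1$ interacts with the common factor $\mu$. Write $N=q^m-1$, $N'=N/\mu$ and $t'=t/\mu$, so that $t=\mu t'$ and $N=\mu N'$. Since $N'\mid N$ and $\gcd(q,q^m-1)=1$, both $\gcd(q,N)$ and $\gcd(q,N')$ equal $1$, so the $q$-cyclotomic cosets modulo $N$ and modulo $N'$ are well defined; recall that an integer $u$ with $0<u<M$ is the leader of its $q$-cyclotomic coset modulo $M$ precisely when $u\le uq^{j}\bmod M$ for every $j\ge 0$ (checking all $j\ge 0$ is the same as checking $0\le j<\ell_u$, since $uq^{\ell_u}\equiv u$).

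First I would establish the key identity: for every integer $j\ge 0$,
\[
tq^{j}\bmod N=\mu\bigl(t'q^{j}\bmod N'\bigr).
\]
This follows from the division with remainder $t'q^{j}=aN'+r$ with $0\le r<N'$: multiplying through by $\mu$ gives $tq^{j}=aN+\mu r$ with $0\le \mu r<\mu N'=N$, so $\mu r$ is exactly the least nonnegative residue of $tq^{j}$ modulo $N$.

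Then I would chain equivalences using this identity. The integer $t$ is a coset leader modulo $N$ iff $t\le tq^{j}\bmod N$ for all $j\ge 0$; by the identity this is equivalent to $\mu t'\le \mu\bigl(t'q^{j}\bmod N'\bigr)$ for all $j\ge 0$; cancelling the positive factor $\mu$, this is equivalent to $t'\le t'q^{j}\bmod N'$ for all $j\ge 0$, i.e.\ to $t'$ being a coset leader modulo $N'$. Since $\mu$ was an arbitrary common factor of $t$ and $q^m-1$ (not necessarily $\gcd(t,q^m-1)$), this settles both directions simultaneously. The only point that needs care is the boundary estimate $0\le \mu r<N$ in the displayed identity, which is what guarantees that $\mu r$ is the genuine least nonnegative residue and not merely an integer congruent to $tq^{j}$ modulo $N$; beyond that the argument is a purely formal manipulation of the defining inequality for a coset leader, so I do not anticipate any real obstacle.
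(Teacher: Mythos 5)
Your argument is correct. The key identity $tq^{j}\bmod N=\mu\bigl(t'q^{j}\bmod N'\bigr)$ is proved cleanly via division with remainder, the bound $0\le\mu r<\mu N'=N$ is exactly the point that needs checking and you check it, and passing from ``all $j\ge 0$'' to the finitely many exponents in each coset is harmless since both orbits are periodic and your inequality is quantified over all $j$. Note, however, that the paper does not prove this statement at all: it is imported verbatim as Lemma~4 of the cited reference [Wang24], so there is no in-paper proof to compare against. Taken on its own merits, your proof is a complete and elementary justification of the quoted lemma, and the single reduction to the displayed identity is arguably the shortest route one could take.
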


\subsection{Matrix-product codes and self-dual codes}

For $A = (a_{ij} ) \in  \mathbb M_{s\times t} (\mathbb{F}_q)$, if the rows of $A$ are linearly independent, then we say that $A$ is a full-row-rank (FRR) matrix. In particular, for $A = (a_{ij} ) \in \mathbb M_{s\times s} (\mathbb{F}_q )$, if $A$ is FRR, then we say that $A$ is a non-singular matrix.

Suppose that $C_i$ has generator matrix $G_i$ with $1\le i\le s$ and $A = (a_{ij} ) \in  M_{s\times t} (\mathbb{F}_q )$. Then  $[C_1 , C_2 , \ldots , C_s ]A$  is a classical code over $\mathbb{F}_q$ generated by the following matrix

\begin{equation*} G=
\left( \begin{array}{cccccc}
 a_{11}G_1& a_{12}G_1& \ldots & a_{1t}G_1\\
 a_{21}G_2& a_{22}G_2& \ldots & a_{2t}G_2\\
\vdots& \vdots& \ldots & \vdots & \\
 a_{s1}G_s& a_{s2}G_s& \ldots & a_{st}G_s\\
\end{array} \right).
\end{equation*}

Let us denote by $R_i=(a_{i1}, \ldots, a_{it})$ the element of $\mathbb{F}_q^t$ consisting of the $i$-th row of the matrix $A$ for $i=1, \ldots, s$. We denote by $U_A(k)$ the classical code generated by $\langle R_1, \ldots, R_k\rangle $ in $\mathbb{F}_q^t$, where $1\le k\le s$.

The following lemma play an important role in proving our main result.

\begin{lemma}\label{lem:mpcode}\cite[Theorem 3.7]{BN}
Let $A$ be an $s\times t$ an FRR matrix and $C_1, \ldots, C_s$ be classical codes over $\mathbb{F}_q$ with parameters $[n,k_i,d_i]$. Let $C=[C_1 , C_2 , \ldots , C_s ]A$. Then $C$ is an $[nt, \sum_{i=1}^sk_i, d(C)]$ classical code over $\mathbb{F}_q$. Moreover, we have $$d(C)\ge \mbox{min} \{d_1\cdot d(U_A(1)), \ldots, d_s\cdot d(U_A(s))\}$$ and the equality holds when $A$ is also triangle.
\end{lemma}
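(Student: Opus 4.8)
The plan is to establish the three assertions — the parameters, the distance lower bound, and the equality under the triangularity hypothesis — in turn. Throughout I identify a codeword of $C$ with a tuple $(\bc^{(1)},\ldots,\bc^{(t)})\in(\mathbb{F}_q^n)^t$ of the form $\bc^{(j)}=\sum_{i=1}^s a_{ij}\bv_i$ with $\bv_i\in C_i$; from the block form of $G$ this tuple ranges exactly over the row span of $G$, and the block length $nt$ is then immediate.

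For the dimension, I would verify that the $\sum_{i=1}^s k_i$ rows of $G$ are $\mathbb{F}_q$-linearly independent. A vanishing linear combination of these rows yields vectors $\bv_i\in C_i$, each a combination of the rows of $G_i$, satisfying $\sum_{i=1}^s a_{ij}\bv_i=\bzero$ for every $j$. Reading off a fixed coordinate $\ell$, the vector $(v_{1,\ell},\ldots,v_{s,\ell})\in\mathbb{F}_q^s$ is orthogonal to every column of $A$; since $A$ is FRR its columns span $\mathbb{F}_q^s$, so this vector is $\bzero$. Hence every $\bv_i=\bzero$, and since each $G_i$ has full row rank the combination was trivial, giving $\dim C=\sum_{i=1}^s k_i$.

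For the distance bound, take a nonzero $\bc\in C$ written as above with the $\bv_i$ not all zero, and set $r=\max\{i:\bv_i\neq\bzero\}$. The key step is to read $\bc$ column-by-column: for each $\ell\in\{1,\ldots,n\}$ the $\ell$-th column $(c^{(1)}_\ell,\ldots,c^{(t)}_\ell)$ equals $\sum_{i=1}^r v_{i,\ell}R_i$, a word of $U_A(r)$. Whenever $v_{r,\ell}\neq 0$ this word is nonzero, because $R_1,\ldots,R_r$ are linearly independent (they are $r$ rows of the FRR matrix $A$) and the coefficient of $R_r$ is $v_{r,\ell}$; hence such a column has weight at least $d(U_A(r))$. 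Since $\bv_r$ is a nonzero word of $C_r$ there are at least $d_r$ such indices $\ell$, and the corresponding columns occupy pairwise disjoint coordinate blocks of $\mathbb{F}_q^{nt}$; summing their weights gives $\wt(\bc)\ge d_r\,d(U_A(r))\ge\min_{1\le k\le s}\{d_k\,d(U_A(k))\}$, which is the asserted bound.

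The step I expect to be the main obstacle is the equality when $A$ is triangular, where one has to exhibit a codeword attaining the bound. Let $r$ realize $\min_k d_k\,d(U_A(k))$, fix $\bv\in C_r$ with $\wt(\bv)=d_r$, and let $\bw=(w_1,\ldots,w_t)=\sum_{i=1}^r\mu_i R_i$ be a minimum-weight word of $U_A(r)$. The aim is to choose a tuple $(\bv_1,\ldots,\bv_s)$ with $\bv_i=\bzero$ for $i>r$ that realizes $\bc^{(j)}=w_j\bv$ for every $j$ — the natural candidate being $\bv_i=\mu_i\bv$ — since such a codeword has weight $\wt(\bw)\,\wt(\bv)=d(U_A(r))\,d_r$. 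Making this precise requires using the triangular form of $A$ to control the supports and to guarantee that the minimizing word of $U_A(r)$ may be taken to involve $R_r$, and it also uses that the $\bv_i$ so produced lie in the right component codes $C_i$, the compatibility that the triangular, nested setup provides; pinning down exactly why triangularity — rather than merely the FRR property — is what makes this work is the crux. The remaining arguments are routine linear algebra on $G$ and on the columns of $A$.
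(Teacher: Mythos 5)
Your treatment of the length, the dimension, and the lower bound $d(C)\ge\min_i d_i\, d(U_A(i))$ is correct and complete: the full-row-rank argument for the dimension and the column-by-column weight count (isolating the largest index $r$ with $\bv_r\neq\bzero$ and using that each column with $v_{r,\ell}\neq 0$ is a nonzero word of $U_A(r)$) constitute the standard proof. Note that the paper offers no proof of this lemma at all --- it is quoted from Blackmore--Norton --- so for these parts there is nothing to compare against. The genuine gap is the equality clause, which you correctly identify as the crux and then leave open. The obstruction you flag is not a technicality that triangularity repairs: your candidate $\bv_i=\mu_i\bv$ requires $\mu_i\bv\in C_i$ for every $i\le r$, while $\bv$ is only known to lie in $C_r$, and without nested codes no such codeword need exist. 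In fact the equality claim is false as stated. Take $q=2$, $n=3$, $s=t=3$, $A$ the upper triangular matrix with rows $R_1=(1,1,1)$, $R_2=(0,1,1)$, $R_3=(0,0,1)$, and $C_1=C_3=\{000,111\}$, $C_2=\{000,100\}$. Then $R_1+R_2=(1,0,0)$ gives $d(U_A(2))=1$ and $d_2=1$, so $\min_i d_i\,d(U_A(i))=1$; but the seven nonzero codewords of $[C_1,C_2,C_3]A$ have weights $9,2,3,7,6,3,6$, so $d(C)=2$. Hence no argument can close this step.

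The hypothesis that actually makes your construction go through is nestedness, $C_1\supseteq C_2\supseteq\cdots\supseteq C_s$ (with $A$ merely FRR): then $\bv\in C_r\subseteq C_i$ for all $i\le r$, the tuple $\bv_i=\mu_i\bv$ is legitimate, and the resulting codeword $(w_1\bv,\ldots,w_t\bv)$ has weight exactly $d_r\,d(U_A(r))$ for $r$ the minimizer. Triangularity by itself only yields the weaker upper bound $d(C)\le\min_i \wt(R_i)\,d_i$, obtained by switching on a single $\bv_i$, and $\wt(R_i)$ can strictly exceed $d(U_A(i))$ as soon as $s\ge 3$ or $t>s$; for a $2\times 2$ upper triangular nonsingular $A$ one has $d(U_A(1))=\wt(R_1)$ and $d(U_A(2))=\wt(R_2)=1$, which is why the $(u\,|\,u+v)$-type cases actually used later in the paper are safe. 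Since the paper only ever invokes the inequality part of the lemma, the defect lies in the quoted statement rather than in anything you could have supplied; but as a proof of the lemma as written, your attempt necessarily stops short at the equality clause, and that clause should either be proved under an added nestedness hypothesis or dropped.
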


Let $\mathbf{u}=(u_1,u_2,\ldots,u_{n})\in \mathbb{F}_q^n$ and $\mathbf{v}=(v_1,v_2,\ldots,v_{n})\in \mathbb{F}_q^n$
be two vectors of length $n$, their {\it Euclidean inner product} and {\it Hermitian inner product} are defined as
$$\langle \mathbf u,\mathbf v\rangle =\sum_{i=1}^{n}u_iv_j,\,\,\,\text{and}\,\,\,\langle \mathbf u,\mathbf v\rangle_H =\sum_{i=1}^{n}u_iv_j^l,$$
respectively, where $q=l^2$ for some prime power $l$.

Let $Q=q^2$. The {\it Euclidean dual code} $C^\perp$ of $C$ and the {\it  Hermitian dual code} $C^{\perp_H}$ of $C$ are defined as
$$C^\perp =\{ \mathbf x\in  \mathbb{F}_q^{n}\, |\, \langle\mathbf x,\mathbf c\rangle =0,\forall \mathbf c\in C\}\,\, \text{and}\,\,C^{\perp_H} =\{ \mathbf x\in  \mathbb{F}_Q^n\, |\, \langle\mathbf x,\mathbf c\rangle_H =0,\forall \mathbf c\in C\},$$
 respectively.
If $C=C^\perp$, then $C$ is called {\it Euclidean self-dual}. If $C=C^{\perp_H}$, then $C$ is called {\it Hermitian self-dual}. The following lemma is given in \cite{Auly} and will be needed later.

\begin{lemma}\cite[Theorem 1]{Auly}\label{lem:1122}
Let $\mathbf{C}$ be a narrow-sense BCH code of length $n$ over $\mathbb{F}_q$ with designed distance $\delta$. Suppose that $m=ord_n(q)$,
then $\mathbf{C}$ is dual-containing if the designed distance $\delta$ is in the range
\begin{equation}\label{eq:1209}
2\leq \delta \leq \frac{n}{q^m-1}(q^{\lceil m/2\rceil}-1-(q-2)\,[m \,\,odd]),
\end{equation}
 where $[m \,\,odd]=1$ if $m$ is odd and $[m \,\,odd]=0$ if $m$ is even.
 \end{lemma}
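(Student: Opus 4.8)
The plan is to translate the dual-containing property into a divisibility condition on the designed distance, reduce from the length $n$ to the primitive length $q^{m}-1$, and then finish by an elementary size estimate.

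First I would invoke the standard characterization: a cyclic code $\mathbf{C}$ of length $n$ with defining set $T$ satisfies $\mathbf{C}^{\perp}\subseteq\mathbf{C}$ if and only if $T\cap T^{-1}=\emptyset$, where $T^{-1}=\{\,n-i:i\in T\,\}$; this is immediate from the fact that the defining set of $\mathbf{C}^{\perp}$ is $\Z_{n}\setminus T^{-1}$. For the narrow-sense BCH code $\mathbf{C}=\mathcal{C}_{\delta}$ one has $T=\bigcup_{i=1}^{\delta-1}C_{i}^{(q,n)}$, hence $T^{-1}=\bigcup_{i=1}^{\delta-1}C_{-i}^{(q,n)}$, and so $T\cap T^{-1}\neq\emptyset$ is equivalent to the existence of integers $i,j$ with $1\le i,j\le\delta-1$ and an exponent $0\le k\le m-1$ for which $iq^{k}+j\equiv0\pmod n$.

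Next I would clear the denominator to pass to a primitive root of unity. Put $r=\frac{q^{m}-1}{n}$, an integer with $r\ge1$, and set $A=ri$, $B=rj$. Then $n\mid iq^{k}+j$ holds if and only if $q^{m}-1\mid Aq^{k}+B$. The hypothesis $\delta\le\frac{n}{q^{m}-1}\bigl(q^{\lceil m/2\rceil}-1-(q-2)[m\text{ odd}]\bigr)$ gives $r(\delta-1)\le q^{\lceil m/2\rceil}-1-(q-2)[m\text{ odd}]-r$, so that
\[
1\le A,\,B\le q^{\lceil m/2\rceil}-2-(q-2)[m\text{ odd}].
\]
Because $\gcd(q,q^{m}-1)=1$, multiplying the congruence $Aq^{k}+B\equiv0\pmod{q^{m}-1}$ by $q^{m-k}$ turns it into $Bq^{m-k}+A\equiv0\pmod{q^{m}-1}$; thus, after possibly swapping $A$ and $B$ and replacing $k$ by $m-k$, I may assume $0\le k\le\lfloor m/2\rfloor$.

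Finally I would derive a contradiction by a size estimate. Write $m=2e$, in which case $A,B\le q^{e}-2$, or $m=2e+1$, in which case $A,B\le q^{e+1}-q$; in both cases $k\le e$. A direct computation then yields
\[
0<Aq^{k}+B\le Aq^{e}+B<q^{m}-1,
\]
so $q^{m}-1$ cannot divide $Aq^{k}+B$, contradicting the congruence above. Hence $T\cap T^{-1}=\emptyset$ and $\mathbf{C}$ is dual-containing. The only delicate step is the constant-tracking: in the odd case the correction term $-(q-2)$ is exactly what is needed to keep the inequality $Aq^{e}+B<q^{m}-1$ strict (it excludes the borderline configuration in which $Aq^{e}+B$ would equal $q^{m}-1$), so that bookkeeping is where I expect the real care to lie, while the remaining steps are routine.
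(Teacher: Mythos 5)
Your proof is correct. The paper itself gives no argument for this lemma --- it is quoted verbatim as \cite[Theorem~1]{Auly} --- and your reconstruction follows exactly the standard proof from that reference: reduce dual-containment to $T\cap T^{-1}=\emptyset$, translate a nonempty intersection into a congruence $iq^{k}+j\equiv 0\pmod{n}$, scale by $r=(q^{m}-1)/n$ to work modulo $q^{m}-1$, normalize $k\le\lfloor m/2\rfloor$, and rule out the divisibility by the size estimate $0<Aq^{k}+B<q^{m}-1$ (your bookkeeping in the odd case, where the $-(q-2)$ term is what forces $Aq^{e}+B\le q^{m}-q$, is accurate). Nothing to correct.
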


%

\section{The lower bounds on the minimum distances of the Euclidean (Hermitian) duals of BCH codes}

 Let $q$ be a prime power and $Q=q^2$. In \cite{GDL21} and \cite{Wang24}, the authors developed the lower bounds on the minimum distances of Euclidean duals of narrow-sense BCH codes with length  $\frac{q^m-1}{q^s-1}$ over $\mathbb{F}_q$ for $s=1$ and $s\, | \,m$, respectively. In \cite{Fan23},  the authors showed the lower bounds on the minimum distances of Hermitian duals of narrow-sense BCH codes with length  $\frac{Q^m-1}{Q-1}$ over $\mathbb{F}_Q$. The propose of this section is to improve  the lower bounds on the minimum distances of Euclidean  and Hermitian duals of the above codes when the designed distance in some ranges.
 These results will be used in construction of self-dual codes with square-root-like lower bounds in later section.  We start with the following lemma.

\begin{lemma}\label{eq:0406}
 Let $s,t,m$ be positive integers such that $2\leq t\leq \frac{m}{s}-1$. Then the coset leader of $\mathbb{C}_{q^m-2q^{m-st+s}+q^{m-st}+u(q^s-1)}$ modulo $q^m-1$ is greater than $q^{st}+q^{s+t-1}-q^{t-1}-1$, where $q\neq 2$ and $1\leq u \leq q^{m-st}-1$.
\end{lemma}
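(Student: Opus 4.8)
The plan is to invoke Lemma~\ref{lem1b21}. Write $a := q^m-2q^{m-st+s}+q^{m-st}+u(q^s-1)$ and $M := q^{st}+q^{s+t-1}-q^{t-1}-1$, and begin by recording the $q$-adic expansion of $a$. Factoring $q^m-2q^{m-st+s}=q^{m-st+s}(q^{s(t-1)}-2)$ and using $q\ge 3$, the integer $q^{s(t-1)}-2$ has digits $q-2$ in position $0$ and $q-1$ in positions $1,\ldots,s(t-1)-1$, so $q^m-2q^{m-st+s}$ contributes to $a$ a \emph{block of $st-s$ consecutive nonzero digits}: $q-1$ in positions $m-st+s+1,\ldots,m-1$ and $q-2$ in position $m-st+s$. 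Since $1\le u\le q^{m-st}-1$ gives $0<q^{m-st}+u(q^s-1)<q^{m-st+s}$, the summand $L:=q^{m-st}+u(q^s-1)$ occupies only positions $0,\ldots,m-st+s-1$; hence $a<q^m-1$, $a$ is its own residue modulo $q^m-1$, and its expansion is the top block sitting above the digits of $L$. Writing $M=q^{st}+q^{t-1}(q^s-1)-1$, its expansion has digit $1$ in position $st$, then $0$ in positions $s+t-1,\ldots,st-1$, then $q-1$ in positions $t,\ldots,s+t-2$, then $q-2$ in position $t-1$, then $q-1$ in positions $0,\ldots,t-2$ (with the obvious degenerations when $s=1$). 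By Lemma~\ref{lem1b21} it suffices to show that every circular left shift of the $m$-tuple of $a$ dominates that of $M$.

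First I would dispose of the ``easy'' shifts. Every circular shift of $a$ still contains the block of $st-s$ consecutive nonzero digits; say it occupies positions $[\,p-st+s+1,\,p\,]$ after the shift. If $p\ge st+1$ the shifted word is at least $q^{st+1}$, and since $st\ge s+t-1$ and $q\ge 3$ we have $M<q^{st}+q^{s+t-1}\le 2q^{st}\le q^{st+1}$, so the shift dominates $M$. If $p=st$ and $st-s\ge 2$, the digit of the shift in position $st$ is $q-1$, whence the shift is at least $(q-1)q^{st}\ge 2q^{st}>M$; the sole corner case $st-s=1$ (that is, $s=1$, $t=2$) I would check directly, taking into account also the digits of $L$ near position $st$. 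So from now on the block lies entirely below position $st$, i.e.\ $p\le st-1$, and then positions $st,\ldots,m-1$ of the shifted word carry only rotated digits of $L$ or zeros.

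For such a shift, if some position $\ge st+1$ is nonzero we finish as above; otherwise positions $st+1,\ldots,m-1$ all vanish, and I claim position $st$ does not vanish. If it did, $a$ would have a cyclic run of $m-st$ zero digits; since positions $m-st+s,\ldots,m-1$ of $a$ are all nonzero, this run would lie inside the window $\{0,\ldots,m-st+s-1\}$ carrying the digits of $L$, so it would be a leading run of $L$ (length $\le s-1<m-st$, as $L\ge q^{m-st}$), a trailing run (length equal to the number of trailing zeros of $u$, hence $\le m-st-1$, as $u\le q^{m-st}-1$), or an interior zero run of $L$; and the key point is that \emph{every interior zero run of $L=q^{m-st}+u(q^s-1)$ has length at most $m-st-1$} — a contradiction. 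Granting this, the shift has some digit $L_i\ge 1$ in position $st$: if $L_i\ge 2$ it already dominates $M$, while if $L_i=1$ I would descend and use that, the low part of $L$ being nonzero, the block of digits $q-1$ of $a$ still reaches a position whose $q-1$ overtakes the matching zero of $M$, so that again the shift dominates $M$.

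The step I expect to be the main obstacle is the interior-zero-run estimate for $L$: there is no closed form for the digits of $u(q^s-1)$, so the argument must be structural — an interior run of $\ell$ zeros in $u(q^s-1)=uq^s-u$ forces the digits of $u$ to repeat with period $s$ over a window of length $\ell+s$, which since $u$ has at most $m-st$ digits gives $\ell\le m-st-s$, and adding $q^{m-st}$ can only propagate a carry through a run of digits $q-1$ of $u(q^s-1)$ among positions $m-st,\ldots,m-st+s-1$, lengthening such a run by at most $s$ and so keeping it below $m-st$ — and the carry analysis is the delicate part. A second, lighter difficulty is the ``tie'' sub-case (digit $1$ in position $st$), where one must identify precisely which digits of $L$ occupy the positions just below $st$ and above the $(q-1)$-block of $a$, and check that none of them falls below the corresponding digit of $M$ before the block's digits $q-1$ overtake $M$'s zeros. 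Throughout, the hypothesis $q\neq 2$ is essential: it keeps the ``$q-2$'' digits of $a$ (and of $M$) nonzero, so that the top block of $a$ genuinely has $st-s$ nonzero entries, and it provides the slack $2q^{st}\le(q-1)q^{st}$ used to settle the easy shifts.
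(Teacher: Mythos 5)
Your skeleton---reduce to circular shifts via Lemma~\ref{lem1b21}, isolate the block of $st-s$ nonzero top digits sitting above the low part $L=q^{m-st}+u(q^s-1)$, and dispose of every shift except those that place only digits of $L$ at positions $\ge st$---is sound, and it is essentially the expansion-based route the paper also takes. But the two steps you yourself flag as delicate are genuine gaps, and the first cannot be closed at all, because the strict inequality in the lemma is false. Take $q=3$, $s=2$, $t=2$, $m=6$, $u=6$: then $a=624=(2,1,2,0,1,0)_3$, and its circular $3$-left-shift is $(0,1,0,2,1,2)_3=104=q^{st}+q^{s+t-1}-q^{t-1}-1$, so the coset leader \emph{equals} the bound. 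This is exactly your tie sub-case: position $st$ carries the digit $1$ of $L$, and the block's digits $q-1,q-2$ then land precisely on the digits $q-1,q-2$ of $M$, not on its zeros, so nothing ``overtakes'' anything; the same tie occurs in every $t=2$ instance (e.g.\ $m=8$, $u=51$ gives coset leader $104$ again). The most one can prove is that every shift is $\ge M$, i.e.\ the coset leader is at least $q^{st}+q^{s+t-1}-q^{t-1}-1$; since $M=(q^s-1)\left(\frac{q^{st}-1}{q^s-1}+q^{t-1}\right)$, that weaker form is all Theorem~\ref{eq:1204} actually uses.

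The second gap is the zero-run claim, which is false under the stated hypotheses: for $q=3$, $s=2$, $t=2$, $m=7$ (allowed, as $t\le m/s-1=2.5$) and $u=17$ one gets $a=1864=(2,1,2,0,0,0,1)_3$, which has a cyclic run of $m-st=3$ zeros, and its coset leader is $50<104$. So both your claim and the lemma need the hypothesis $s\mid m$, which the paper assumes everywhere else but omits here. Granting $s\mid m$, the correct proof is not the digit-periodicity-plus-carry analysis you sketch (I do not see how to make that rigorous) but a one-line congruence: a zero window of length $m-st$ must sit at offset $j\in\{0,\dots,s\}$ inside the window carrying $L$, forcing $L=Aq^{\,j+m-st}+B$ with $0\le A\le q^{s-j}-1$ and $0\le B\le q^{j}-1$; since $L\equiv q^{m-st}\equiv 1\pmod{q^s-1}$ and $A\ge 1$, the case $1\le j\le s-1$ gives $Aq^{j}+B\equiv 1\pmod{q^s-1}$ with $q^{j}\le Aq^{j}+B\le q^s-1$, which is impossible, and $j=0$, $j=s$ are immediate. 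For comparison, the paper's own proof keys instead on the assertion that the top nonzero digit $i_l$ of $u(q^s-1)$ and the digit $i_{l-s}$ are both nonzero---which is also false (e.g.\ $3\cdot 8=24=(2,2,0)_3$)---and its Case~1 computation actually exhibits the equality case above before asserting strictness; so neither your sketch nor the paper's argument establishes the lemma as stated.
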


\begin{proof}
Since $2\leq t\leq \frac{m}{s}-1$, we know that $m-st\leq s$.
The proof for the cases $m-st=s$ and $m-st<s$ is distinct. Hence, the proof should be treated separately.

\noindent {\bf Case 1:} $m-st=s$. In this case $1\leq u \leq q^s-1$ and  the $q$-adic expansion of $u$ can be wrote as
\begin{equation}\label{eq:0828}
u=(a_{s-1},a_{s-2},\ldots,a_0)_q.
\end{equation}
Let $l$ be the smallest positive integer in (\ref{eq:0828}) such that  $a_l> 0$.
Then the $q$-adic expansion of $u(q^s-1)$ can be expressed as
\begin{equation}\label{eq:s11s}
\begin{split}
u(q^s-1)=\underbrace{a_{s-1},\ldots,a_{l+1}}_{s-l-1},a_l-1,\underbrace{q-1,\ldots,q-1}_{l},\underbrace{q-1-a_{s-1},\ldots,q-1-a_{l+1}}_{s-l-1},q-a_l,\underbrace{0,\ldots,0}_{l})_q.
\end{split}
\end{equation}
From (\ref{eq:s11s}) the $q$-adic expansion of $q^m-2q^{2s}+q^{s}+u(q^s-1)$ can be expressed as
\begin{equation*}
\begin{split}
q^m-2q^{2s}+q^{s}+u(q^s-1)=&((\underbrace{q-1,\ldots,q-1}_{m-2s-1},q-2,\underbrace{a_{s-1},\ldots,a_{l+1}}_{s-l-1},a_l,\underbrace{0,\ldots,0}_{l},\\
&\underbrace{q-1-a_{s-1},\ldots,q-1-a_{l+1}}_{s-l-1},q-a_l,\underbrace{0,\ldots,0}_{l})_q).
\end{split}
\end{equation*}
Then  the coset leader of $\mathbb{C}_{q^m-2q^{m-st+s}+q^{m-st}+u(q^s-1)}$ modulo $q^m-1$ is greater than or equal to
\begin{equation}\label{eq:031128}
{\rm CL}((\underbrace{q-1,\ldots,q-1}_{m-2s-1},q-2,\underbrace{0,\ldots,0,}_{s-l-1}a_l,\underbrace{0,\ldots,0}_{s-1},q-a_l,\underbrace{0,\ldots,0}_{l})_q).
\end{equation}
Obviously, if ${\rm CL}((\underbrace{q-1,\ldots,q-1}_{m-2s-1},q-2,\underbrace{0,\ldots,0,}_{s-l-1}a_l,\underbrace{0,\ldots,0}_{s-1},q-a_l,\underbrace{0,\ldots,0}_{l})_q)$ achieves its minimum value, we have $l=0$ or $l=s-1$.
It is easy to check that (\ref{eq:031128}) can be written as
\begin{equation}\label{eq:082901}
\begin{split}
(\underbrace{0,\ldots,0,}_{s-1}a_l,\underbrace{0,\ldots,0}_{s-1},q-a_l,\underbrace{q-1,\ldots,q-1}_{m-2s-1},q-2)_q\\
\end{split}
\end{equation}
and
\begin{equation}\label{eq:1082902}
\begin{split}
(\underbrace{0,\ldots,0}_{s-1},q-a_l,\underbrace{0,\ldots,0}_{s-1},\underbrace{q-1,\ldots,q-1}_{m-2s-1},q-2,a_l)_q
\end{split}
\end{equation}
if $l=0$ and $l=s-1$, respectively. From the expression of  (\ref{eq:082901}) and (\ref{eq:1082902}), we know that (\ref{eq:082901}) and (\ref{eq:1082902})  achieve thier minimum values if and only if $a_l=1$ and $a_l=q-1$, respectively, which implies that the minimum values in  (\ref{eq:082901}) and (\ref{eq:1082902}) are $q^{st}+q^{st-s+1}-2$ and  $q^{st}+q^{st-s+1}-q- 1$.

Hence,
the coset leader of $\mathbb{C}_{q^m-2q^{m-st+s}+q^{m-st}+u(q^s-1)}$ modulo $q^m-1$ is greater than $q^{st}+q^{s+t-1}-q^{t-1}-1$ if $m-st=s$.

\noindent {\bf Case 2:} $m-st>s$. If $u=1$, it is clear that the $q$-adic expansion of $q^m-2q^{m-st+s}+q^{m-st}+q^s-1$ can be expressed as
$$(\underbrace{q-1,\ldots,q-1}_{st-s-1},q-2,\underbrace{0,\ldots,0}_{s-1},1,\underbrace{0,\ldots,0}_{m-st-s},\underbrace{q-1,\ldots,q-1}_s)_q.$$
Then the coset leader of $\mathbb{C}_{q^m-2q^{m-st+s}+q^{m-st}+q^s-1}$ modulo $q^m-1$ is greater than
$$(\underbrace{0,\ldots,0}_{m-st-1},1,\underbrace{q-1,\ldots,q-1}_{st-1},q-2)=2q^{st}-2> q^{st}+q^{s+t-1}-q^{t-1}-1.$$
Hence, the result is hold for $u=1$.

In the following, we consider the case that $u\geq 2$. From the range of $u$, we know that $u(q^s-1)< q^{m-st+s}-q^{m-st}$.
Then the $q$-adic expansion of $u(q^s-1)$ can be expressed as
\begin{equation}\label{eq:s1s}
u(q^s-1)=(i_{m-st+s-1},i_{m-st+s-2},\ldots,i_0),
\end{equation}
where $i_{m-st+s-1}<q-1$. Let $l$ be the largest positive integer in (\ref{eq:s1s}) such that  $i_l> 0$. Since $u\geq 2$, we know that $l\geq s$.
From (\ref{eq:s1s}) we have
\begin{equation}\label{eq:10815}
 (q^s-1) \, | \, i_{l}q^{l}+i_{l-1}q^{l-1}+\cdots+i_1q+i_0.
 \end{equation}
Since
$i_lq^l\equiv i_lq^{l-s}\pmod {q^s-1},$
we know that $i_{l}\neq 0$ and $i_{l-s}\neq 0$ in (\ref{eq:10815}). Obviously, $l-s\neq m-st$ since $l\leq m-st+s-1$.
We prove the coset leader of $\mathbb{C}_{q^m-2q^{m-st+s}+q^{m-st}+u(q^s-1)}$ modulo $q^m-1$ is greater than $q^{st}+q^{s+t-1}-q^{t-1}-1$ from the following two subcases.

\noindent {\bf Subcase 2.1:} $l< m-st$. From (\ref{eq:s1s}), the $q$-adic expansion of $q^m-2q^{m-st+s}+q^{m-st}+u(q^s-1)$ can be expressed as
\begin{equation*}
\begin{split}
&q^m-2q^{m-st+s}+q^{m-st}+u(q^s-1)\\
=&(\underbrace{q-1,\ldots,q-1}_{st-s-1},q-2,\underbrace{0,\ldots,0,}_{s-1}1,\underbrace{0,\ldots,0}_{m-st-l-1}, i_{l},\underbrace{i_{l-1},\ldots,i_{l-s+1}}_{s-1}, i_{l-s},\underbrace{i_{l-s-1},\ldots,i_0}_{l-s})_q.
\end{split}
\end{equation*}
Let $T_0=max\{s-1,l-s,m-st-l-1\}$. From the range of $l$, we know that $T_0\leq m-st-2$.
Then we have
\begin{equation*}
\begin{split}
&{\rm CL}((\underbrace{q-1,\ldots,q-1}_{st-s-1},q-2,\underbrace{0,\ldots,0,}_{s-1}1,\underbrace{0,\cdots,0}_{m-st-l-1}, i_{l},\underbrace{i_{l-1},\ldots,i_{l-s+1}}_{s-1}, i_{l-s},\underbrace{i_{l-s-1},\ldots,i_0}_{l-s})_q)\\
\geq&{\rm CL}((\underbrace{q-1,\ldots,q-1}_{st-s-1},q-2,\underbrace{0,\ldots,0,}_{s-1}1,\underbrace{0,\ldots,0}_{m-st-l-1}, i_{l},\underbrace{0,\ldots,0}_{s-1}, i_{l-s},\underbrace{0,\ldots,0}_{l-s})_q)\\
\geq&{\rm CL}((\underbrace{q-1,\ldots,q-1}_{st-s-1},q-2,\underbrace{0,\ldots,0,}_{s-1}1,\underbrace{0,\ldots,0}_{m-st-l-1}, 1,\underbrace{0,\ldots,0}_{s-1}, 1,\underbrace{0,\ldots,0}_{l-s})_q)\\
\geq&(\underbrace{0,\ldots,0}_{T_0},1,\ldots)_q
\geq q^{st+1}.
\end{split}
\end{equation*}
\noindent {\bf Subcase 2.2:} $l\geq m-st$. From $i_{m-st+s-1}<q-1$ and (\ref{eq:s1s}), we obtain that the $q$-adic expansion of $q^{m-st}+u(q^s-1)$ can be expressed as
\begin{equation}\label{eq:mst}
q^{m-st}+u(q^s-1)=(a_{m-st+s-1},\ldots,a_{m-st}, i_{m-st-1},\ldots,i_0),
\end{equation}
which implies that
\begin{equation*}
\begin{split}
&q^m-2q^{m-st+s}+q^{m-st}+u(q^s-1)\\
=&(\underbrace{q-1,\ldots,q-1}_{st-s-1},q-2,a_{m-st+s-1},\ldots,a_{m-st}, i_{m-st-1},\ldots,i_0)_q.
\end{split}
\end{equation*}
 Let $h$ be the largest positive integer in (\ref{eq:mst}) such that  $a_h> 0$ and $T_1=max\{l-s,h-l+s-1,m-st+s-h-1\}$. By the definition of $h$ and $l$, we know that $m-st+s-1\geq h\geq l$, then  $h-l+s-1<2s-1$. Since $m-st\neq s$ and $2\leq t\leq \frac{m}{s}-1$, we obtain that $m-st\geq 2s$, which implies that $T_1\leq m-st-2$. Then
 \begin{equation*}
\begin{split}
&{\rm CL}((\underbrace{q-1,\ldots,q-1}_{st-s-1},q-2,a_{m-st+s-1},\ldots,a_{m-st}, i_{m-st-1},\ldots,i_0)_q)\\
\geq&{\rm CL}((\underbrace{q-1,\ldots,q-1}_{st-s-1},q-2,\underbrace{0,\ldots,0,}_{m-st+s-h-1}a_h,\underbrace{0,\ldots,0}_{h-l+s-1}, i_{l-s},\underbrace{0,\ldots,0}_{l-s})_q)\\
\geq&{\rm CL}((\underbrace{q-1,\ldots,q-1}_{st-s-1},q-2,\underbrace{0,\ldots,0,}_{m-st+s-h-1}1,\underbrace{0,\ldots,0}_{h-l+s-1}, 1,\underbrace{0,\ldots,0}_{l-s})_q)\\
\geq&(\underbrace{0,\ldots,0}_{T_1},1,\ldots)_q
\geq q^{st+1}.
\end{split}
\end{equation*}
Hence, ${\rm CL}((\underbrace{q-1,\ldots,q-1}_{st-s-1},q-2,a_{m-st+s-1},\ldots,a_{m-st}, i_{m-st-1},\ldots,i_0)_q)$  is greater than $q^{st}+q^{s+t-1}-q^{t-1}-1$ if $m-st>s$.

 Therefore, combining the above two cases, the desired conclusion follows.
\end{proof}

\begin{theorem}\label{eq:1204}
Let $\C_{\delta}$ be the narrow-sense BCH code over $\mathbb{F}_q$ with length $n=\frac{q^m-1}{q^s-1}$ and designed distance $\delta$, where $q\neq 2$ be a prime power. Let $2\leq t\leq \frac{m}{s}-1$ and $\frac{q^{st}-1}{q^s-1}< \delta \leq \frac{q^{st}-1}{q^s-1}+q^{t-1}$,  then $d(\C_{\delta}^{\perp})\geq q^{m-st}$.
\end{theorem}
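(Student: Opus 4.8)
The plan is to obtain the bound by applying the BCH bound to $\C_\delta^{\perp}$ itself, which turns the problem into the purely combinatorial one of finding a long block of consecutive integers avoiding the defining set $T$ of $\C_\delta$. Since $\C_\delta$ is narrow-sense, $0\notin T$, and the defining set of $\C_\delta^{\perp}$ with respect to $\beta$ is $T^{\perp}=\{(n-i)\bmod n:\ i\in\Z_n\setminus T\}$. Hence a block of $q^{m-st}-1$ consecutive integers contained in $\Z_n\setminus T$ is carried by $i\mapsto (n-i)\bmod n$ to a block of $q^{m-st}-1$ consecutive integers contained in $T^{\perp}$, and the BCH bound then yields $d(\C_\delta^{\perp})\ge q^{m-st}$. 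So everything reduces to exhibiting such a block inside $\Z_n\setminus T$.

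First I would set
\begin{equation*}
A=\frac{q^m-2q^{m-st+s}+q^{m-st}}{q^s-1}=q^{m-st}\bigl(q^{s(t-1)}+\cdots+q^{2s}+q^s-1\bigr),
\end{equation*}
where the displayed factorization shows in particular that $A$ is a positive integer; a short computation with $n=\frac{q^m-1}{q^s-1}$ also gives $A+q^{m-st}-1<n$. I then consider the candidate block
\begin{equation*}
I=\{A+1,\ A+2,\ \ldots,\ A+q^{m-st}-1\}\subseteq[1,n-1],
\end{equation*}
which consists of exactly $q^{m-st}-1$ consecutive integers. The reason for this choice is that for $1\le u\le q^{m-st}-1$ one has $(A+u)(q^s-1)=q^m-2q^{m-st+s}+q^{m-st}+u(q^s-1)$, and these are precisely the integers treated in Lemma \ref{eq:0406}.

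The crux is then to convert the coset-leader estimate of Lemma \ref{eq:0406}, which is stated modulo $q^m-1$, into a coset-leader estimate modulo $n$. For this I would use the elementary fact, in the spirit of Lemma \ref{lem:0913}, that multiplication by $q^s-1$ is an order-preserving bijection from each $q$-cyclotomic coset modulo $n$ onto the $q$-cyclotomic coset modulo $q^m-1$ formed by the corresponding multiples of $q^s-1$; consequently the coset leader of $v(q^s-1)$ modulo $q^m-1$ equals $(q^s-1)\cdot{\rm CL}(v)$ and is a multiple of $q^s-1$. Combining this with Lemma \ref{eq:0406}, which says that the coset leader of $(A+u)(q^s-1)$ modulo $q^m-1$ strictly exceeds $q^{st}+q^{s+t-1}-q^{t-1}-1=(q^s-1)\bigl(\tfrac{q^{st}-1}{q^s-1}+q^{t-1}\bigr)$, we get
\begin{equation*}
{\rm CL}(A+u)\ \ge\ \frac{q^{st}-1}{q^s-1}+q^{t-1}+1\ >\ \delta \qquad (1\le u\le q^{m-st}-1),
\end{equation*}
where ${\rm CL}(\cdot)$ is the coset leader modulo $n$ and the last inequality is exactly the hypothesis $\delta\le\frac{q^{st}-1}{q^s-1}+q^{t-1}$. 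Since $T$ is the union of the $q$-cyclotomic cosets modulo $n$ of $1,2,\ldots,\delta-1$, this forces $A+u\notin T$ for all such $u$, i.e.\ $I\subseteq\Z_n\setminus T$, and the reduction of the first paragraph completes the proof.

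I expect the only real work to be bookkeeping: verifying integrality and the estimate $A+q^{m-st}-1<n$, and making the passage between $q$-cyclotomic cosets modulo $n$ and modulo $q^m-1$ precise enough to invoke Lemma \ref{eq:0406} verbatim (this is also where the assumption $q\neq2$ is used, inherited from that lemma). The genuinely delicate step — the $q$-adic expansion case analysis — has already been discharged in Lemma \ref{eq:0406}, so once the translation is in place the theorem follows immediately.
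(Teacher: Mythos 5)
Your proposal is correct and is essentially identical to the paper's own argument: the paper also exhibits the block $\{\tfrac{q^{m-(t-1)s}-1}{q^s-1}+1,\ldots,\tfrac{q^{m-(t-1)s}-1}{q^s-1}+q^{m-st}-1\}$ in $T^{\perp}$, whose image under $i\mapsto n-i$ is exactly your set $I$, and it likewise passes through Lemma \ref{lem:0913} to transfer the coset-leader bound of Lemma \ref{eq:0406} from modulus $q^m-1$ to modulus $n$ before applying the BCH bound to the dual. The only difference is presentational (you work in $\Z_n\setminus T$ first and then reflect, while the paper starts from the reflected block), plus your added bookkeeping on integrality and the bound $A+q^{m-st}-1<n$, which the paper leaves implicit.
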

\begin{proof}
From the BCH bound, in order to obtain the desired result, we only need to show that $\{\frac{q^{m-(t-1)s}-1}{q^s-1}+1,\frac{q^{m-(t-1)s}-1}{q^s-1}+2,
\ldots,\frac{q^{m-(t-1)s}-1}{q^s-1}+q^{m-st}-1 \}\subseteq T^{\perp}$. Let $1\leq u \leq q^{m-st}-1$ and $i=\frac{q^{m-(t-1)s}-1}{q^s-1}+q^{m-st}-u$, then
\begin{equation*}
\begin{split}
n-i = \frac{q^m-2q^{m-st+s}+q^{m-st}+u(q^s-1)}{q^s-1}.
\end{split}
\end{equation*}
Let ${\rm CL}(n-i)=A$, then the the coset leader of $\mathbb{C}_{q^m-2q^{m-t+1}+q^{m-t}+u(q^s-1)}$ is $A(q^s-1)$ from Lemma \ref{lem:0913}. Hence, from Lemma \ref{eq:0406} we know that
$$A>\left\lceil \frac{q^{st}+q^{s+t-1}-q^{t-1}-1}{q^s-1} \right\rceil,$$
which implies that
\begin{equation}\label{eq:1207}
{\rm CL}(n-i)={\rm CL}\left(\frac{q^m-2q^{m-t+1}+q^{m-t}+u(q-1)}{q^s-1}\right) > \delta-1.
\end{equation}
By the definition of $T$ and $T^{-1}$,
we obtain that $n-i \notin T$ and $i \notin T^{-1}$ in (\ref{eq:1207}). From the definition of $T^{\perp}$, we have $T^{\perp}=\mathbb{Z}_n \setminus T^{-1}$, which implies that $i=\frac{q^{m-(t-1)s}-1}{q^s-1}+q^{m-st}-u\in T^{\perp}$ for any $1\leq u \leq q^{m-st}-1$.  The desired conclusion follows.
\end{proof}

\begin{remark}\label{rem:0416}
Let $\frac{q^{st}-1}{q^s-1}< \delta \leq \frac{q^{st}-1}{q^s-1}+q^{t-1}$, where $2\leq t\leq \frac{m}{s}-1$. The authors in \cite{GDL21} showed that $d(\C_{\delta}^{\perp})\geq\frac{q^{m-t}-1}{q-1}+1$ for $s=1$. Then the authors in \cite{Wang24} generalized their results from the case $s=1$ to the case $s\, |\, m$ and proved that  $d(\C_{\delta}^{\perp})\geq \frac{q^{m-ts}-1}{q^s-1}+1$ or $d(\C_{\delta}^{\perp})\geq \frac{q^{m-ts}-1}{q^s-1}+2$ for $s\, |\, m$. In Theorem \ref{eq:1204}, we improved these bounds and showed that $d^{\perp}(\delta)\geq q^{m-st}$ when the designed distance in these ranges. Our lower bounds on the minimum distances of the duals of BCH codes are almost $q^s-1$ times that of the existing lower bounds. We show some examples in Table 1.
\end{remark}
\begin{center}$${{\rm Table\,\,\, 1:\,\,\,Lower\,\,\, bounds\,\,\, on\,\,\, the\,\,\, minimum\,\,\, distance\,\,\, of\,\,\, \mathcal{C}^{\perp}_{\delta}\,\,\, for } \,\,\,s=1}$$
\begin{tabular}{|c|c|c|c|c|c|c|c|}
\hline
 $t$&$m$&$q$&$\delta$&$d(\C_{\delta}^{\perp})\geq$&$d(\C_{\delta}^{\perp})\geq$&$d(\C_{\delta}^{\perp})\geq$\\ \cline{5-7}
& & &  &Ref. \cite{GDL21} &Ref. \cite{Wang24}&Theorem \ref{eq:1204}  \\
\hline
2&4&3&5-7&5&7&9\\
\hline
2&5&3&5-7&14&16&27\\
\hline
3&5&3&14-22&5&7&9\\
\hline
2&4&5&7-11&7&9&25\\
\hline
\end{tabular}
\end{center}

In the following, our task is to give a new lower bound on the minimum distance of $\mathcal{C}^{\perp H}_{\delta}$ for $\delta$ in some ranges. The following lemma is needed.

\begin{lemma}\label{eq:0412}
 Let $Q=q^2$ be a prime power, $2\leq t\leq m-1$ and $0\leq u \leq Q^{m-t}$.  Then the coset leader of
 $\mathbb{C}_{Q^m-2Q^{m-t+1}q+Q^{m-t}q+u(Q-1)q+q-1}$ modulo $Q^m-1$ is greater than $Q^t+Qq^{t-1}-q^{t-1}-1$.
\end{lemma}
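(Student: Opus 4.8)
The plan is to reduce the lemma, for all but a couple of boundary values of $u$, to the already-proved Euclidean statement Lemma~\ref{eq:0406}, and to treat the remaining values of $u$ by writing out $q$-adic expansions directly. As with its Euclidean counterpart, the integer
\[
N:=Q^m-2Q^{m-t+1}q+Q^{m-t}q+u(Q-1)q+q-1
\]
is $(Q-1)(n-qi)$ for a run of consecutive $i$ (with $n=\frac{Q^m-1}{Q-1}$, the extra factor $q$ reflecting that the Hermitian dual is governed by $T^{-q}$ instead of $T^{-1}$); by Lemma~\ref{lem:0913} a lower bound on the $Q$-cyclotomic coset leader of $N$ modulo $Q^m-1$ is exactly what is needed to force those $i$ into $T^{\perp H}$, and hence to run the BCH bound for $d^{\perp H}(\C_\delta)$ in the next section.

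The key observation I would record first is an arithmetic identity. Put
\[
\widetilde M:=Q^m-2Q^{m-t+1}+Q^{m-t}+u(Q-1)=q^{2m}-2q^{2m-2t+2}+q^{2m-2t}+u(q^{2}-1),
\]
which is precisely the integer of Lemma~\ref{eq:0406} under the substitution $(m,s,t,u)\mapsto(2m,2,t,u)$. A one-line computation gives $q\widetilde M-N=(q-1)(Q^m-1)$, so $N\equiv q\widetilde M\pmod{Q^m-1}$. Since $Q=q^{2}$, the $q$-cyclotomic coset of $\widetilde M$ modulo $q^{2m}-1=Q^m-1$ is the union of the two $Q$-cyclotomic cosets of $\widetilde M$ and of $q\widetilde M$; consequently ${\rm CL}_{Q}(N)={\rm CL}_{Q}(q\widetilde M)\ge {\rm CL}_{q}(\widetilde M)$, where the subscript records which cyclotomic cosets are meant.

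Now for $1\le u\le Q^{m-t}-1$ (and $q\neq 2$) the hypotheses $2\le t\le m-1$ become exactly $2\le t\le\frac{2m}{2}-1=m-1$ and $1\le u\le q^{2m-2t}-1$, so Lemma~\ref{eq:0406} applies to $\widetilde M$ and yields
\[
{\rm CL}_{q}(\widetilde M)>q^{2t}+q^{t+1}-q^{t-1}-1=Q^{t}+Qq^{t-1}-q^{t-1}-1 ;
\]
together with the previous paragraph this is the assertion. For a boundary value of $u$ not covered by Lemma~\ref{eq:0406} — namely $u=0$, and $u=Q^{m-t}$ if that endpoint is retained — I would argue directly: for $u=0$ one finds that the $q$-adic expansion of $N$ of length $2m$ is $(\underbrace{q-1,\ldots,q-1}_{2t-4},q-2,0,1,\underbrace{0,\ldots,0}_{2m-2t},q-1)_q$, and, the relevant $Q$-coset being the set of even circular shifts of this string, one checks as in Case~1 of the proof of Lemma~\ref{eq:0406} (Lemma~\ref{lem1b21} with $Q$ replacing $q$) that the smallest even shift is at least $(q-1)q^{2t}>Q^{t}+Qq^{t-1}-q^{t-1}-1$.

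The crux, and essentially the only conceptual step, is noticing the identity $q\widetilde M-N=(q-1)(Q^m-1)$: despite the extra factor $q$ attached to three of its terms and the extra summand $q-1$ — all Hermitian artifacts — $N$ is, modulo $Q^m-1$, just $q$ times the plain ``Euclidean'' integer $\widetilde M$, so Lemma~\ref{eq:0406} may be quoted rather than its multi-case induction re-run. What remains is purely computational: the boundary-$u$ cases, where the clean factor-of-$q$ description degenerates and one must track the individual $q$-adic digits of $N$ together with the parity restriction on admissible circular shifts; this is routine but must be carried out by hand, just as in the proof of Lemma~\ref{eq:0406}.
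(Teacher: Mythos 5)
Your reduction is correct on the range where it applies, and it is a genuinely different route from the paper's. The paper re-runs the whole digit analysis from scratch in base $Q$: it treats $u=0$ and $u=qQ^l$ separately, and for the remaining $u$ it counts the nonzero $Q$-adic digits of $Q^{m-t}q+u(Q-1)q+q-1$ (arguing there are at least three) before minimizing over shifts. You instead notice that with $\widetilde{M}=Q^m-2Q^{m-t+1}+Q^{m-t}+u(Q-1)$ one has $q\widetilde{M}-N=(q-1)(Q^m-1)$, so $N\equiv q\widetilde{M}\pmod{Q^m-1}$, the $Q$-cyclotomic coset of $N$ sits inside the $q$-cyclotomic coset of $\widetilde{M}$ modulo $q^{2m}-1$, and $\widetilde{M}$ is exactly the integer of Lemma~\ref{eq:0406} under $(m,s)\mapsto(2m,2)$ with the two lower bounds coinciding, since $q^{2t}+q^{t+1}-q^{t-1}-1=Q^t+Qq^{t-1}-q^{t-1}-1$. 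This disposes of all $1\le u\le Q^{m-t}-1$ (for $q\ne 2$) in one stroke and explains structurally why the Euclidean and Hermitian bounds agree; the paper's self-contained computation does not reveal this. Your direct treatment of $u=0$ is also right for $q\ge 3$: the minimal even shift is the one with $2m-2t-1$ leading zeros, of value $q^{2t+1}-2q^3+q\ge (q-1)q^{2t}$, which exceeds the bound.

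Two caveats, both of which vindicate rather than undermine your hedging. First, the endpoint $u=Q^{m-t}$, which you set aside, genuinely fails: for $q=3$, $m=3$, $t=2$, $u=9$ one gets $N=488$, whose $Q$-cyclotomic coset modulo $728$ is $\{488,24,216\}$ with leader $24<104=Q^t+Qq^{t-1}-q^{t-1}-1$. The paper's own proof is also wrong there (its claim that $Q^{m-t}q+u(Q-1)q+q-1$ has at least three nonzero $Q$-digits breaks down when $a_0=0$, e.g.\ at $u=Q^{m-t}$ the sum collapses to $qQ^{m-t+1}+q-1$, which has only two); only $0\le u\le Q^{m-t}-1$ is actually used in Theorem~\ref{eq:0822}. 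Second, the restriction $q\ne 2$ you inherit from Lemma~\ref{eq:0406} is not a defect of your argument: the lemma as stated is false for $q=2$ (take $q=2$, $m=3$, $t=2$, $u=0$; then $N=9$ has $Q$-coset $\{9,36,18\}$ modulo $63$ with leader $9<21$), and the paper's proof silently breaks there because the digit $Q-2q$ vanishes. So your proof establishes the statement exactly on the range on which it is true and needed.
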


\begin{proof}
If $u=0$, it is easy to see that the $Q$-adic expansion of $Q^m-2Q^{m-t+1}q+Q^{m-t}q+u(Q-1)q+q-1$ can be expressed as
$$(\underbrace{Q-1,\ldots,Q-1,}_{t-2}Q-2q,q,\underbrace{0,\ldots,0,}_{m-t-1}q-1)_Q.$$
Then the coset leader of
 $\mathbb{C}_{Q^m-2Q^{m-t+1}q+Q^{m-t}q+u(Q-1)q+q-1}$ modulo $Q^m-1$ is
\begin{equation*}
\begin{split}
(\underbrace{0,\ldots,0,}_{m-t-1}q-1,\underbrace{Q-1,\ldots,Q-1,}_{t-2}Q-2q,q)_Q,
\end{split}
\end{equation*}
which is greater than $Q^t+Qq^{t-1}-q^{t-1}-1$.

If $u=qQ^l$ for $0\leq l\leq m-t-1$, then the $Q$-adic expansion of $u(Q-1)q$ contains one non-zero terms and the $Q$-adic expansion of $Q^m-2Q^{m-t+1}q+Q^{m-t}q+u(Q-1)q+q-1$ can be expressed as
$$(\underbrace{Q-1,\ldots,Q-1,}_{t-2}Q-2q,q,\underbrace{0,\ldots,0,}_{m-t-l-2}Q-1,\underbrace{0,\ldots,0,}_{l}q-1)_Q$$
if $0\leq l\leq m-t-2$ and
$$(\underbrace{Q-1,\ldots,Q-1,}_{t-2}Q-2q+1,q-1,\underbrace{0,\ldots,0,}_{m-t-1}q-1)_Q$$
if $l= m-t-1$. Hence, from Lemma \ref{lem1b21} we can obtain that  $\mathbb{C}_{Q^m-2Q^{m-t+1}q+Q^{m-t}q+u(Q-1)q+q-1}$ modulo $Q^m-1$ is greater than $Q^t+Qq^{t-1}-q^{t-1}-1$ for $u=qQ^l$.

In the following, we always assume that  $0< u \leq Q^{m-t}$ and $u\neq qQ^l$ for $0\leq l\leq m-t-1$.
Let the $Q$-adic expansion of $u(Q-1)q$ be expressed as
\begin{equation}\label{eq:120908}
u(Q-1)q=a_{m-t+1}Q^{m-t+1}+a_{m-t}Q^{m-t}+\cdots+a_0,
\end{equation}
where $a_{m-t+1}<q$. From $q\, | \, u(Q-1)q$, we know that $q\, |\, a_0$, which implies that $a_0+q-1<Q$.
If the $Q$-adic expansion of $u(Q-1)q$ contains one non-zero terms, then from (\ref{eq:120908}) we have
$u(Q-1)q=a_{l+1}Q^{l+1}$, which implies that $u=qQ^{l}$ and $a_{l+1}=Q-1$. It is contradictory to $u\neq qQ^l$ for $0\leq l\leq m-t-1$.
Hence, the $Q$-adic expansion of $u(Q-1)q$ contains at least two non-zero terms in (\ref{eq:120908}).

Assume that
the $Q$-adic expansion of $Q^{m-t}q+u(Q-1)q+q-1$ be expressed as
 \begin{equation}\label{eq:120909}
Q^{m-t}q+u(Q-1)q+q-1=b_{m-t+1}Q^{m-t+1}+b_{m-t}Q^{m-t}+\cdots+b_0.
\end{equation}
From (\ref{eq:120908}), we know that $b_0=a_0+q-1$ and $b_{m-t+1}\leq q$. Hence, the $Q$-adic expansion of $Q^m-2Q^{m-t+1}q+Q^{m-t}q+u(Q-1)q+q-1$ can be expressed as
$$(\underbrace{Q-1,\ldots,Q-1,}_{t-2}Q-2q+b_{m-t+1},b_{m-t},\ldots,b_0)_Q.$$

Since the $Q$-adic expansion of $u(Q-1)q$ contains at least two non-zero terms in (\ref{eq:120908}), it is easy to check that the $Q$-adic expansion of $Q^{m-t}q+u(Q-1)q+q-1$  contains at least two non-zero terms in (\ref{eq:120909}). Moreover,
the $Q$-adic expansion of $Q^{m-t}q+u(Q-1)q+q-1$  contains two non-zero terms if and only if $a_{m-t}=Q-q$, $a_{m-t+1}\neq 0$, $a_0\neq 0$ and $a_i=0$ for $1\leq i\leq m-t-1$ in (\ref{eq:120908}), i.e.,
\begin{equation*}
u(Q-1)q=a_{m-t+1}Q^{m-t+1}+(Q-q)Q^{m-t}+a_0,
\end{equation*}
which implies that
\begin{equation}\label{091001}
(Q-1)\,|\,a_{m-t+1}Q^{m-t+1}+(Q-q)Q^{m-t}+a_0.
\end{equation}
Since $aQ^t\equiv a \pmod {Q-1}$ for any positive integers $a$ and $t$, from (\ref{091001}) we have $(Q-1)\,|\,a_{m-t+1}+Q-q+a_0$, which contradicts with $a_{m-t+1}<q$ and $q\, | \,a_0$. This implies that the $Q$-adic expansion of $Q^{m-t}q+u(Q-1)q+q-1$ contains at least three non-zero terms in (\ref{eq:120909}). Hence,
 \begin{equation}\label{eq:0912}
\begin{split}
&{\rm CL}((\underbrace{Q-1,\ldots,Q-1,}_{t-2}Q-2q+b_{m-t+1},b_{m-t},\ldots,b_0)_Q)\\
\geq&{\rm CL}((\underbrace{0,\ldots,0}_{m-t-1},b_1,b_0,\underbrace{Q-1,\ldots,Q-1,}_{t-2}Q-2q+b_{m-t+1})_Q),
\end{split}
\end{equation}
where $b_0\neq 0$, $b_1\neq 0$ and $b_{m-t+1}\neq 0$.

 Since $b_0=a_0+q-1$ and $q\, |\, a_0$,
it is clear that ${\rm CL}((\underbrace{0,\ldots,0}_{m-t-1},b_1,b_0,\underbrace{Q-1,\ldots,Q-1,}_{t-2}Q-2q+b_{m-t+1})_Q)$ can achieve its minimum value if $b_0=q-1$ and $b_1=1$.
In this case, from (\ref{eq:120909}) we have
$$Q^{m-t}q+u(Q-1)q+q-1=b_{m-t+1}Q^{m-t+1}+Q+q-1,$$
which is the same as
$$uq=\frac{b_{m-t+1}Q^{m-t+1}+Q-Q^{m-t}q}{Q-1},$$
which implies that
$$(Q-1)\, | \, b_{m-t+1}+1-q.$$
Then $b_{m-t+1}=q-1$. Hence, from (\ref{eq:0912}) we obtain
\begin{equation*}
\begin{split}
{\rm CL}((\underbrace{Q-1,\ldots,Q-1,}_{t-2}Q-2q+b_{m-t+1},b_{m-t},\ldots,b_0)_Q)
&\geq Q^t+qQ^{t-1}-Q^{t-1}+Q-q+1\\
&\geq Q^t+Qq^{t-1}-q^{t-1}-1.
\end{split}
\end{equation*}
Therefore,  the coset leader of
 $\mathbb{C}_{Q^m-2Q^{m-t+1}q+Q^{m-t}q+u(Q-1)q+q-1}$ modulo $Q^m-1$ is greater than $Q^t+Qq^{t-1}-q^{t-1}-1$. This completes the proof.
\end{proof}

\begin{theorem}\label{eq:0822}
Let $Q=q^2$ and $\C_{\delta}$ be the narrow-sense BCH code over $\mathbb{F}_Q$ with length $n=\frac{Q^m-1}{Q-1}$ and designed distance $\delta$.
 Let $2\leq t\leq m-1$ and $\frac{Q^t-1}{Q-1}< \delta \leq \frac{Q^t-1}{Q-1}+q^{t-1}$,  then $d(\C_{\delta }^{\perp H})\geq Q^{m-t}+1$.
\end{theorem}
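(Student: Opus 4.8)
The plan is to reproduce the argument of Theorem \ref{eq:1204} for the Hermitian dual. Since $\C_\delta^{\perp H}$ is the cyclic code over $\F_Q$ of length $n$ with defining set $T^{\perp H}=\Z_n\setminus T^{-q}$, by the BCH bound it is enough to exhibit $Q^{m-t}$ consecutive integers contained in $T^{\perp H}$; then $d(\C_\delta^{\perp H})\ge Q^{m-t}+1$. First I would convert membership in $T^{\perp H}$ into a coset-leader inequality. If $i\in T^{-q}$ then $i\equiv-qj\pmod n$ for some $j$ in the defining set $T$, hence $j\equiv-q^{-1}i\pmod n$ (here $\gcd(q,n)=1$ since $n\equiv1\pmod q$); and because $-qi\equiv Q\cdot(-q^{-1}i)\pmod n$, the integers $-qi$ and $-q^{-1}i$ lie in the same $Q$-cyclotomic coset modulo $n$, so one lies in $T$ iff the other does. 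As $T$ is the union of the $Q$-cyclotomic cosets of $1,\dots,\delta-1$ modulo $n$, this yields
\[
i\in T^{\perp H}\quad\Longleftrightarrow\quad{\rm CL}\big((n-qi)\bmod n\big)>\delta-1 ,
\]
the Hermitian counterpart of the equivalence $i\in T^{\perp}\Leftrightarrow{\rm CL}(n-i)>\delta-1$ used in Theorem \ref{eq:1204}; the only difference is the extra factor $q$ coming from the conjugation $x\mapsto x^q$.

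Next I would choose the consecutive run. For $0\le u\le Q^{m-t}-1$, set
\[
i=\frac{Q^{m-t+1}-1}{Q-1}+Q^{m-t}-u ,
\]
so that as $u$ varies, $i$ runs through the $Q^{m-t}$ consecutive integers $\frac{Q^{m-t+1}-1}{Q-1}+1,\dots,\frac{Q^{m-t+1}-1}{Q-1}+Q^{m-t}$; using $t\ge2$ one checks that $0<qi<n$, so $(n-qi)\bmod n=n-qi$. A direct computation then gives
\[
(Q-1)(n-qi)=Q^m-2Q^{m-t+1}q+Q^{m-t}q+u(Q-1)q+q-1 ,
\]
which is divisible by $Q-1$ and is exactly the integer whose $Q$-cyclotomic coset modulo $Q^m-1$ is estimated in Lemma \ref{eq:0412}.

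Then I would combine Lemmas \ref{eq:0412} and \ref{lem:0913}. Writing ${\rm CL}(n-qi)=A$ for the coset leader modulo $n$, Lemma \ref{lem:0913} (with common factor $\mu=Q-1$) gives that the coset leader of $(Q-1)(n-qi)$ modulo $Q^m-1$ equals $A(Q-1)$, while Lemma \ref{eq:0412} gives $A(Q-1)>Q^t+Qq^{t-1}-q^{t-1}-1$. On the other hand the hypothesis $\delta\le\frac{Q^t-1}{Q-1}+q^{t-1}$ gives $(Q-1)(\delta-1)\le Q^t+Qq^{t-1}-q^{t-1}-Q<Q^t+Qq^{t-1}-q^{t-1}-1$, so $A(Q-1)>(Q-1)(\delta-1)$, i.e.\ $A>\delta-1$. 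Hence $n-qi\notin T$, and by the equivalence above $i\in T^{\perp H}$ for every $u$ in the range; thus $T^{\perp H}$ contains $Q^{m-t}$ consecutive integers, and the BCH bound gives $d(\C_\delta^{\perp H})\ge Q^{m-t}+1$.

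The real combinatorial difficulty is Lemma \ref{eq:0412} — the $Q$-adic-expansion bookkeeping needed to lower-bound that coset leader modulo $Q^m-1$ — which we are allowed to invoke. Inside the proof of the theorem itself, the one step that is not purely mechanical is the first: translating the defining set of the Hermitian dual into the coset-leader inequality with the correct multiplier $q$ (this is exactly where the argument departs from the Euclidean case of Theorem \ref{eq:1204}). The remaining verifications — the identity for $(Q-1)(n-qi)$, the bounds $0<qi<n$, and the comparison $(Q-1)(\delta-1)<Q^t+Qq^{t-1}-q^{t-1}-1$ — are elementary.
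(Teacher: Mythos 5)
Your proposal is correct and follows essentially the same route as the paper: the same choice of $i=\frac{Q^{m-t+1}-1}{Q-1}+Q^{m-t}-u$, the same identity $(Q-1)(n-qi)=Q^m-2Q^{m-t+1}q+Q^{m-t}q+u(Q-1)q+q-1$, and the same combination of Lemma \ref{lem:0913} (with $\mu=Q-1$) and Lemma \ref{eq:0412} to force ${\rm CL}(n-qi)>\delta-1$ and hence $i\in T^{\perp H}=\Z_n\setminus T^{-q}$. Your explicit justification of the equivalence $i\in T^{\perp H}\Leftrightarrow{\rm CL}\big((n-qi)\bmod n\big)>\delta-1$ via closure of $T$ under multiplication by $Q$ is a detail the paper leaves implicit, but it is the same argument.
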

\begin{proof}
From the BCH bound, in order to obtain the desired result, we only need to show that $\{\frac{Q^{m-t+1}-1}{Q-1}+1,\frac{Q^{m-t+1}-1}{Q-1}+2,
\ldots,\frac{Q^{m-t+1}-1}{Q-1}+Q^{m-t} \}\subseteq T^{\perp H}$. Let $0\leq u \leq Q^{m-t}-1 $ and $i=\frac{Q^{m-t+1}-1}{Q-1}+Q^{m-t}-u$, we have that
\begin{equation*}
\begin{split}
n-qi = \frac{Q^m-2Q^{m-t+1}q+Q^{m-t}q+u(Q-1)q+q-1}{Q-1}.
\end{split}
\end{equation*}

Assume that ${\rm CL}(n-qi)=A$, from Lemma \ref{lem:0913} the  coset leader of $\mathbb{C}_{Q^m-2Q^{m-t+1}q+Q^{m-t}q+u(Q-1)q+q-1}$ is $A(Q-1)$. Then we know that
$$A>\left\lceil \frac{Q^{t}+Qq^{t-1}-q^{t-1}-1}{Q-1} \right\rceil$$
by Lemma \ref{eq:0412}, which implies that
\begin{equation}\label{eq:120701}
{\rm CL}(n-qi)={\rm CL}\left(\frac{Q^m-2Q^{m-t+1}q+Q^{m-t}q+u(Q-1)q+q-1}{Q-1}\right) > \delta-1.
\end{equation}
By the definition of $T$ and $T^{-q}$, we obtain that $n-qi \notin T$ and $i \notin T^{-q}$ in (\ref{eq:120701}). from the definition of
$T^{\perp H}$, we know that
$T^{\perp H}=\mathbb{Z}_n \setminus T^{-q}$, which implies that $i=\frac{q^{m-(t-1)s}-1}{q^s-1}+q^{m-st}-u\in T^{\perp H}$ for any $1\leq u \leq q^{m-st}-1$.  The desired conclusion follows.
\end{proof}

\begin{remark}\label{rem:0416}
Let $\frac{Q^t-1}{Q-1}< \delta \leq \frac{Q^t-1}{Q-1}+q^{t-1}$, where $2\leq t\leq m-1$. The authors in \cite{GDL21} showed that $d(\C_{\delta}^{\perp})\geq\frac{Q^{m-t}q-q}{Q-1}+1$. In Theorem \ref{eq:0822}, we improved the bound and showed that $d^{\perp}(\delta)\geq Q^{m-t}+1$ when the designed distance in the above ranges. Our lower bounds on the minimum distances of the duals of BCH codes are almost $q$ times that of the existing lower bounds. We show some examples in Table 2.
\end{remark}
\begin{center}$${{\rm Table\,\,\, 2:\,\,\,Lower\,\,\, bounds\,\,\, on\,\,\, the\,\,\, minimum\,\,\, distance\,\,\, of\,\,\, \mathcal{C}^{\perp H}_{\delta}\,\,\, for } \,\,\,s=1}$$
\begin{tabular}{|c|c|c|c|c|c|c|}
\hline
$t$&$m$&$Q$&$\delta$&$d^{\perp}(\C_{\delta }^{\perp H})\geq$&$d^{\perp}(\C_{\delta }^{\perp H})\geq$\\ \cline{5-6}
& &   & &Ref. \cite{GDL21}&Theorem \ref{eq:0822}  \\
\hline
2&4&9&11-13&31&81\\
\hline
2&4&25&27-29&131&625\\
\hline
\end{tabular}
\end{center}

\section{Self-dual  Euclidean  and Hermitian codes with minimum distance large than square-root or square-root-like lower bound}

The purpose of this section is to construct several classes Euclidean  and Hermitian self-dual codes with minimum distances better than square-root or square-root-like lower bounds by using BCH codes and matrix product codes. Firstly, we give several classes linear codes with dimensions equal to half of their lengths and minimum distances better than square-root or square-root-like lower bounds.

\subsection{Linear codes with minimum distances better than square-root or square-root-like lower bounds over $\mathbb{F}_q$. }

In this subsection, let $n=\frac{q^m-1}{\lambda}$, or $n=\frac{q^m-1}{q^s-1}$, or $n=q^m+1$, where $\lambda \, | \, q-1$ and $s\, |\,m$. Let $\mathcal{C}_\delta$ be a primitive narrow-sense BCH code with length $n$ and designed distance $\delta$ over  $\mathbb{F}_q$ and
$\mathcal{C}_\delta^{\perp}$ be its Euclidean dual. Let
\begin{equation}\label{eq:11125}
\mathcal{C'_\delta}=[\mathcal{C}_\delta,\mathcal{C}_\delta^{\perp}]A
\end{equation}
or
\begin{equation}\label{eq:1112501}
\mathcal{C'_\delta}=[\mathcal{C}_\delta^{\perp},\mathcal{C}_\delta]A,
\end{equation}
where $A$ is a $2\times 2$ nonsigular matrix.
From Lemma \ref{lem:mpcode},  $\mathcal{C'_\delta}$ is a linear code with parameters $[2n,n,d(\mathcal{C'_\delta})]$, where
\begin{equation}\label{eq:11127}
d(\mathcal{C'_\delta})\geq \min\{2d(\mathcal{C}^{\perp}_{\delta}),d(\mathcal{C}_{\delta})\}
 \end{equation}
 if $\mathcal{C'_\delta}$ is in (\ref{eq:11125}), and
\begin{equation}\label{eq:1112701}
d(\mathcal{C'_\delta})\geq \min\{2d(\mathcal{C}^{\perp}_{\delta}),d(\mathcal{C}_{\delta})\}
 \end{equation}
  if $\mathcal{C'_\delta}$ is in (\ref{eq:1112501}).

 In the following, we construct several classes of linear codes of length $2n$ with minimum distances better than  square-root lower bound ($d= \sqrt{2n}$) or square-root like lower bound ($d= \sqrt{\frac{n}{2}}$).
We first recall the following known result.

\begin{lemma}\label{eq:thm19190}
Let $1\leq s\leq \frac{q-1}{\lambda}-1$, $0\leq t\leq m-2$, $n=\frac{q^m-1}{\lambda}$, $\lambda \, | \, q-1$ and $\lambda\neq q-1$. Let $d(\mathcal{C}^{\perp}_{\delta})$ be the minimum distance of $\mathcal{C}^{\perp}_{\delta}$. Then we have
\begin{equation*}
\begin{aligned}
d(\mathcal{C}^{\perp}_{\delta})\geq
\begin{cases}
\frac{q^{m-t}+\lambda-1}{\lambda}-sq^{m-t-1}, &\text{if $\delta = \frac{q^{t+1}-q}{\lambda}+s+1$},\\
\frac{q^{m-t-1}-1}{\lambda}-s+3, &\text{if $\frac{q^{t+1}-1}{\lambda}+(s-1)q^{t+1} < \delta \leq \frac{q^{t+1}-1}{\lambda}+sq^{t+1}-1$},\\
\frac{q^{m-t-1}-1}{\lambda}-s+2, &\text{if $\delta = \frac{q^{t+1}-1}{\lambda}+sq^{t+1}$},\\
\frac{q^{m-t-1}-q}{\lambda}+2, &\text{if $\frac{q^{t+2}-1}{\lambda}-q^{t+1} < \delta \leq \frac{q^{t+2}-q}{\lambda}+1$}.
               \end{cases}
\end{aligned}
\end{equation*}
Moreover, if $\lambda=1$, $1 \leq s < q-2$ and $q\geq 5$, then $d^{\perp}(\delta)\geq q^{m-t-1}-s+3$ for $sq^{t+1} \leq\delta \leq (s+1)q^{t+1}-sq^t-2$.
\end{lemma}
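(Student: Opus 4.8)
The plan is to follow the defining-set/BCH-bound method used in the proof of Theorem~\ref{eq:1204}. Since $\mathcal{C}_\delta$ is a narrow-sense BCH code, its defining set is $T=\bigcup_{j=1}^{\delta-1}C_j^{(q,n)}$, so for $1\le a<n$ one has $a\in T$ iff ${\rm CL}(a)\le\delta-1$; as $T^{\perp}=\mathbb{Z}_n\setminus T^{-1}$, an index $i$ lies in $T^{\perp}$ whenever $(n-i)\bmod n\notin T$, i.e. ${\rm CL}((n-i)\bmod n)\ge\delta$. For each of the four $\delta$-ranges (and for the refined range with $\lambda=1$) I would exhibit a window $W=\{i_0,i_0+1,\dots,i_0+w-1\}$ of $w$ consecutive integers, with $w$ equal to the claimed lower bound minus one, check that ${\rm CL}((n-i)\bmod n)>\delta-1$ for every $i\in W$, and then conclude $d(\mathcal{C}_\delta^{\perp})\ge w+1$ from the BCH bound.

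The coset-leader condition is verified after transferring modulo $q^m-1$. Since $\lambda n=q^m-1$ and $\gcd(\lambda,q)=1$, Lemma~\ref{lem:0913} gives ${\rm CL}((n-i)\bmod n)=\tfrac1\lambda\,{\rm CL}\big((q^m-1-\lambda i)\bmod(q^m-1)\big)$, so it suffices to show ${\rm CL}\big((q^m-1-\lambda i)\bmod(q^m-1)\big)>\lambda(\delta-1)$ for all $i\in W$. For this one writes out the $q$-adic expansion of $q^m-1-\lambda i$: its shape is governed by which range $\delta$ lies in and, within a range, by the position of $i$ in $W$ and by the carries produced in forming $\lambda i$. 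The window $W$ in each case is precisely the one forced by demanding that $q^m-1-\lambda i$ have a prescribed block pattern (typically a long run of digits $q-1$, then a short controlled middle, then zeros), so $i_0$ can be read off directly; Lemma~\ref{lem1b21} then reduces the required inequality to the explicit digit comparison that every circular left-shift of that $q$-adic word dominates the $q$-adic word of $\lambda(\delta-1)$.

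The four principal cases are variations of a single computation and differ only in where the ``special'' digits sit; the two interval-type ranges (giving the bounds with $-s+3$ and $+2$) produce a run one longer than the corresponding single-value ranges because at the endpoint of $\delta$ one circular shift of the word meets $\lambda(\delta-1)$ exactly, costing a unit of run length. The ``moreover'' statement is handled identically but exploits that when $\lambda=1$, $q\ge5$ and $1\le s<q-2$ the leading digit of $q^m-1-\lambda i$ stays pinned at $q-1$ over the relevant window rather than being eroded by a carry, which is exactly what permits the longer run $q^{m-t-1}-s+2$. I expect the main obstacle to be this $q$-adic bookkeeping in the second step --- tracking how the digits of $q^m-1-\lambda i$ evolve as $i$ sweeps $W$, separating off the boundary values of $\delta$ where the bound legitimately drops by one, and checking in each resulting subcase via Lemma~\ref{lem1b21} that no circular shift falls below $\lambda(\delta-1)+1$; once those checks are done, collecting the window lengths yields the displayed inequalities and the refinement.
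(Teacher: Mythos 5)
First, a point of comparison: the paper does not prove this lemma at all. It is introduced with ``We first recall the following known result'' and is imported from the prior literature (it is the dual-distance theorem of \cite{Wang24} for BCH codes of length $\frac{q^m-1}{\lambda}$, extending \cite{GDL21}); the only in-paper argument of this kind is the proof of Theorem~\ref{eq:1204}, which treats the different length $\frac{q^m-1}{q^s-1}$. Your overall strategy is the correct and standard one, and it is exactly the template of that proof: identify a run of consecutive elements of $T^{\perp}=\mathbb{Z}_n\setminus T^{-1}$, pass from modulus $n$ to modulus $q^m-1$ with Lemma~\ref{lem:0913}, and certify each coset leader by the digit-domination criterion of Lemma~\ref{lem1b21}.

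The gap is that the proposal stops at the template. The entire content of the lemma is the five specific windows and the five specific verifications, and none of them is carried out: you never write down $i_0$ or $w$ for any range of $\delta$, so the claimed bounds --- in particular $\frac{q^{m-t}+\lambda-1}{\lambda}-sq^{m-t-1}$ in the first case, which is not visibly of the form ``length of an obvious run plus one'' --- are never actually derived. The assertion that the four cases ``are variations of a single computation'' understates where the work is: as the proof of Lemma~\ref{eq:0406} in this paper illustrates for a cognate statement, one must split on the positions of the nonzero digits of $\lambda i$ (equivalently, on the carries produced in $q^m-1-\lambda i$), locate the minimizing configurations among all circular shifts, and check that the minimum still clears $\lambda(\delta-1)$; the boundary behaviour separating the $-s+3$ bound from the $-s+2$ bound, which you attribute to ``one circular shift meeting $\lambda(\delta-1)$ exactly,'' is precisely the kind of claim that must be exhibited rather than asserted, as is the role of the hypotheses $\lambda=1$, $q\geq 5$, $1\leq s<q-2$ in the refinement. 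As written, the proposal is a plausible plan for reproving a cited result, not a proof of it.
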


From (\ref{eq:11125}), (\ref{eq:1112501}) and  Lemma \ref{eq:thm19190}, we can obtain obtain several classes linear codes with minimum distances better than square-root lower bounds, which are given in Theorem \ref{eq:thm9190} and Theorem \ref{eq:1thm91190}.

\begin{theorem}\label{eq:thm9190}
Let $1\leq s\leq \frac{q-1}{\lambda}-1$, $0\leq t\leq m-2$, $n=\frac{q^m-1}{\lambda}$, $\lambda \, | \, q-1$ and $\lambda\neq q-1$.  Let $\mathcal{C'_\delta}$ be given in (\ref{eq:11125}), then
\begin{equation}\label{eq:1113}
\begin{split}
d(\mathcal{C'_\delta})\geq \begin{cases}
\frac{q^{\frac{m+1}{2}}+\lambda-1}{\lambda}-sq^{\frac{m-1}{2}}, &\text{if $m$ is odd, $q\geq \sqrt{2\lambda q}+\lambda s$ and $\delta = \frac{q^{\frac{m+1}{2}}-q}{\lambda}+s+1$},\\
\min\{2\delta,d(\mathcal{C}^{\perp}_{\delta})\}, &\text{if $m$ is odd, $\lambda s \geq \sqrt{\frac{ \lambda q}{2}}$ and}\\
&\text{$\max\{\frac{\sqrt{2n}}{2},\frac{q^{\frac{m-1}{2}}-1}{\lambda}+(s-1) q^{\frac{m-1}{2}}\}< \delta \leq \frac{q^{\frac{m-1}{2}}-1}{\lambda}+s q^{\frac{m-1}{2}}-1$},\\
\frac{q^{\frac{m+1}{2}}-q}{\lambda}+2, &\text{if $m$ is odd and $\frac{q^{\frac{m+1}{2}}-1}{\lambda}-q^{\frac{m-1}{2}} < \delta \leq \frac{q^{\frac{m+1}{2}}-q}{\lambda}+1$},\\
2\delta, &\text{if $m$ is even and $ q^{\frac{m}{2}}-q^{\frac{m-2}{2}}-1<\delta \leq  q^{\frac{m}{2}}-q+1$},
\end{cases}
\end{split}
\end{equation}
where $d(\mathcal{C}^{\perp}_{\delta})=\frac{q^{\frac{m+1}{2}}-1}{\lambda}-s+3$ if $\delta\neq\frac{q^{\frac{m-1}{2}-1}}{\lambda}+s q^{\frac{m-1}{2}}$ and $d(\mathcal{C}^{\perp}_{\delta})=\frac{q^{\frac{m+1}{2}}-1}{\lambda}-s+2$ if $\delta=\frac{q^{\frac{m-1}{2}-1}}{\lambda}+s q^{\frac{m-1}{2}}$.
Moreover, if $m$ is odd, $q\geq 5$, $\lambda=1$ and $sq^{\frac{m-1}{2}} \leq\delta \leq (s+1)q^{\frac{m-1}{2}}-sq^{\frac{m-3}{2}}-2$, then
\begin{equation}\label{eq:111301}
\begin{split}
d(\mathcal{C'_\delta})\geq \begin{cases}
2\delta, &\text{if $\sqrt{\frac{q}{2}}\leq s\leq \frac{q}{2}$},\\
q^{\frac{m+1}{2}}-s+3, &\text{if $\frac{q}{2}< s< q-2$.}
\end{cases}
\end{split}
\end{equation}
All the given $d(\mathcal{C'_\delta})$ mentioned above are better than the square-root lower bounds.
\end{theorem}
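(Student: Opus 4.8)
\emph{Sketch of proof.} The plan is to read everything off Lemma~\ref{lem:mpcode}, the BCH bound, and Lemma~\ref{eq:thm19190}. First I would fix the $2\times 2$ nonsingular matrix $A$ so that its first row $R_1$ has no zero coordinate; then $U_A(1)=\langle R_1\rangle$ has minimum distance $2$ and $U_A(2)=\mathbb{F}_q^2$ has minimum distance $1$, so by Lemma~\ref{lem:mpcode} the code $\mathcal{C'_\delta}=[\mathcal{C}_\delta,\mathcal{C}_\delta^\perp]A$ is an $[2n,n,d(\mathcal{C'_\delta})]$ code with $d(\mathcal{C'_\delta})\ge\min\{2\,d(\mathcal{C}_\delta),\,d(\mathcal{C}_\delta^\perp)\}$. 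Since $d(\mathcal{C}_\delta)\ge\delta$ by the BCH bound and $d(\mathcal{C}_\delta^\perp)\ge L$, where $L$ is the relevant branch of Lemma~\ref{eq:thm19190}, and since $\min\{2\,d(\mathcal{C}_\delta),\,d(\mathcal{C}_\delta^\perp)\}\ge\min\{2\delta,L\}$ by monotonicity of $\min$, it suffices to estimate $\min\{2\delta,L\}$ on each of the designated ranges of $\delta$ and then to verify that the estimate exceeds the square-root bound $\sqrt{2n}$ of the length-$2n$ code.

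The case analysis is then just a matter of feeding Lemma~\ref{eq:thm19190} the correct value of the parameter $t$. For the first line of \eqref{eq:1113} I would take $t=\frac{m-1}{2}$ in the first branch of Lemma~\ref{eq:thm19190}, so $L=\frac{q^{(m+1)/2}+\lambda-1}{\lambda}-s\,q^{(m-1)/2}$; a one-line estimate using $m\ge3$ and $q\ge3$ gives $2\delta\ge L$, hence $\min\{2\delta,L\}=L$. For the second line, take $t=\frac{m-3}{2}$ in the second (resp.\ third) branch, so $L=\frac{q^{(m+1)/2}-1}{\lambda}-s+c$ with $c\in\{2,3\}$, and simply retain $\min\{2\delta,d(\mathcal{C}_\delta^\perp)\}$. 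For the third line, again $t=\frac{m-3}{2}$ but now in the fourth branch, giving $L=\frac{q^{(m+1)/2}-q}{\lambda}+2$, and $2\delta\ge L$ follows from $q>2\lambda$ (which holds because $\lambda\mid q-1$ and $\lambda\neq q-1$) together with $m\ge3$. For the fourth line ($m$ even, $\lambda=1$), take $t=\frac{m}{2}-2$ in the fourth branch, so $L=q^{m/2+1}-q+2\ge 2\delta$ and $\min\{2\delta,L\}=2\delta$. The ``moreover'' part \eqref{eq:111301} is treated identically, using the last assertion of Lemma~\ref{eq:thm19190} with $t=\frac{m-3}{2}$ to obtain $d(\mathcal{C}_\delta^\perp)\ge q^{(m+1)/2}-s+3$; comparing this with $2\delta$ over the stated range of $\delta$ produces exactly the dichotomy $\sqrt{q/2}\le s\le\frac{q}{2}$ (where $2\delta$ is the smaller) versus $\frac{q}{2}<s<q-2$ (where $q^{(m+1)/2}-s+3$ is the smaller).

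The last step is to show each of these bounds beats $\sqrt{2n}$, which is where the numerical hypotheses get used. From $n<q^m/\lambda$ one has $\sqrt{2n}<q^{(m-1)/2}\sqrt{2q/\lambda}$. For the first line, writing $L\ge q^{(m-1)/2}\cdot\frac{q-\lambda s}{\lambda}$, the inequality $L>\sqrt{2n}$ reduces exactly to $q-\lambda s\ge\sqrt{2\lambda q}$, i.e.\ to the hypothesis $q\ge\sqrt{2\lambda q}+\lambda s$. For the second line, $2\delta>\sqrt{2n}$ is precisely the hypothesis $\delta>\frac{\sqrt{2n}}{2}$, whereas $d(\mathcal{C}_\delta^\perp)>\sqrt{2n}$ follows from $q>2\lambda$ and $m\ge3$, and the auxiliary hypothesis $\lambda s\ge\sqrt{\lambda q/2}$ is exactly what makes the interval $\max\{\frac{\sqrt{2n}}{2},\ \frac{q^{(m-1)/2}-1}{\lambda}+(s-1)q^{(m-1)/2}\}<\delta\le\frac{q^{(m-1)/2}-1}{\lambda}+sq^{(m-1)/2}-1$ nonempty. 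For the third line, $\frac{q^{(m+1)/2}-q}{\lambda}+2>\sqrt{2n}$ again reduces to $q>2\lambda$ with $m\ge3$. For the fourth line, using that $\delta$ is an integer strictly larger than $q^{m/2}-q^{(m-2)/2}-1$, hence $\delta\ge q^{m/2}-q^{(m-2)/2}$, one gets $2\delta\ge2q^{(m-2)/2}(q-1)$, and $2q^{(m-2)/2}(q-1)>\sqrt{2n}$ reduces to the elementary inequality $2(q-1)>\sqrt{2}\,q$; both halves of \eqref{eq:111301} reduce, after using $s\ge\sqrt{q/2}$ (resp.\ $s<q-2$), to the standing assumption $q\ge5$ with $\lambda=1$.

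I expect the only real difficulty to be bookkeeping rather than ideas: matching the floor-free $\delta$-intervals of the statement to the branches of Lemma~\ref{eq:thm19190} through the substitutions $t\mapsto\frac{m-1}{2},\ \frac{m-3}{2},\ \frac{m}{2}-2$, and then pushing the correction terms $-1,+2,+3$ and the integrality of $\delta$ carefully enough through the two comparisons (with $2\delta$ and with $\sqrt{2n}$) so that every inequality stays strict near the endpoints of the ranges and at the smallest admissible $q$. No new coset-leader computation is needed, since that work was already carried out in the proof of Lemma~\ref{eq:thm19190}.
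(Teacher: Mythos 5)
Your proposal follows essentially the same route as the paper's own proof: the matrix-product bound $d(\mathcal{C'_\delta})\ge\min\{2d(\mathcal{C}_\delta),d(\mathcal{C}_\delta^{\perp})\}$ from Lemma \ref{lem:mpcode}, the BCH bound, and Lemma \ref{eq:thm19190} with exactly the substitutions $t=\frac{m-1}{2},\ \frac{m-3}{2},\ \frac{m}{2}-2$, followed by the same term-by-term comparisons with $2\delta$ and with $\sqrt{2n}$. The only caveat is that your reduction of the $m$-even case to $2(q-1)>\sqrt{2}\,q$ fails at $q=3$ (e.g.\ $q=3$, $m=4$, $\delta=6$ gives $2\delta=12<\sqrt{160}$), but the paper's Case 4 asserts the identical inequality, so this is an inherited borderline issue rather than a divergence in method.
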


\begin{proof} We only prove the results in (\ref{eq:1113}). The case in  (\ref{eq:111301}) is similar and we omit the details. There are four cases for discussions.

\noindent {\bf Case 1:} $m$ is odd and  $\delta = \frac{q^{\frac{m+1}{2}}-q}{\lambda}+s+1$.
From Lemma \ref{eq:thm19190} and the BCH bound, we know that
$$d(\mathcal{C}_{\delta})\geq \frac{q^{\frac{m+1}{2}}-q}{\lambda}+s+1\,\, \text{and}\,\, d(\mathcal{C}^{\perp}_{\delta})\geq \frac{q^{\frac{m+1}{2}}+\lambda-1}{\lambda}-sq^{\frac{m-1}{2}}.$$
It is easy to see that $2 (\frac{q^{\frac{m+1}{2}}-q}{\lambda}+s+1)> \frac{q^{\frac{m+1}{2}}+\lambda-1}{\lambda}-sq^{\frac{m-1}{2}}$, then from (\ref{eq:11127}) we have $d(\mathcal{C'_\delta})\geq \frac{q^{\frac{m+1}{2}}+\lambda-1}{\lambda}-sq^{\frac{m-1}{2}}$. If $q\geq \sqrt{2\lambda q}+\lambda s$, then
$$d(\mathcal{C'_\delta})\geq \frac{q^{\frac{m}{2}}(\sqrt{2\lambda})+\lambda-1}{\lambda}>\sqrt{2n}.$$

\noindent {\bf Case 2:} $m$ is odd, $\lambda s \geq \sqrt{\frac{ \lambda q}{2}}$ and $\max\{\frac{\sqrt{2n}}{2},\frac{q^{\frac{m-1}{2}}-1}{\lambda}+(s-1) q^{\frac{m-1}{2}}\}< \delta \leq \frac{q^{\frac{m-1}{2}}-1-\lambda}{\lambda}+s q^{\frac{m-1}{2}}$.
Since $\lambda s \geq \sqrt{\frac{ \lambda q}{2}}$, we know that
$$\frac{q^{\frac{m-1}{2}}-1-\lambda}{\lambda}+s q^{\frac{m-1}{2}}>\sqrt{\frac{q^m}{2}}>\frac{\sqrt{2n}}{2}.$$
Then the range of $\delta$ is not empty. From the BCH bound, we know that
\begin{equation}\label{112702}
2d(\mathcal{C}_{\delta})> 2\max\left\{ \frac{q^{\frac{m-1}{2}}-1}{\lambda}+(s-1) q^{\frac{m-1}{2}},\frac{\sqrt{2n}}{2}\right\}\geq\sqrt{2n}.
\end{equation}
Since $ s\leq \frac{q-1}{\lambda}-1$, we have $\lambda s \leq q-1-\lambda$. Then
\begin{equation}\label{112703}
 d(\mathcal{C}_{\delta}^{\perp})\geq \frac{q^{\frac{m+1}{2}}-1}{\lambda}-s+2>\sqrt{2n}.
 \end{equation}
From (\ref{eq:11127}), (\ref{112702}) and (\ref{112703}) we obtain that $d(\mathcal{C'_\delta})> \sqrt{2n}.$

\noindent {\bf Case 3:}  $m$ is odd and $\frac{q^{\frac{m+1}{2}}-1}{\lambda}-q^{\frac{m-1}{2}} < \delta \leq \frac{q^{\frac{m+1}{2}}-q}{\lambda}+1$.  From Lemma \ref{eq:thm19190} and the BCH bound, it is clear that
$$d(\mathcal{C}_{\delta})\geq \delta\,\, \text{and}\,\, d(\mathcal{C}_{\delta}^{\perp})\geq \frac{q^{\frac{m+1}{2}}-q}{\lambda}+2.$$
Obviously,  $2(\frac{q^{\frac{m+1}{2}}-1}{\lambda}-q^{\frac{m-1}{2}})\geq\frac{q^{\frac{m+1}{2}}-q}{\lambda}+2$ from the range of $\lambda$. From (\ref{eq:11127}) we have $d(\mathcal{C'_\delta})\geq \frac{q^{\frac{m+1}{2}}-q}{\lambda}+2$. If $2\lambda= q-1$, then
\begin{equation}\label{eq:112703}
(\frac{q^{\frac{m+1}{2}}-q}{\lambda}+2)^2=(\frac{q^{\frac{m+1}{2}}-1}{\lambda})^2=\frac{q^{m+1}-2q^{\frac{m+1}{2}}+1}{\lambda^2}>2n.
\end{equation}
If $2\lambda<q-1$, then $2\lambda\leq \lfloor \frac{2(q-1)}{3}\rfloor$ since $\lambda \, | \, q-1$.
Hence,
\begin{equation}\label{eq:112704}
(\frac{q^{\frac{m+1}{2}}-q}{\lambda}+2)^2> (\frac{q^{\frac{m+1}{2}}-q}{\lambda})^2=\frac{q^{m+1}-2q^{\frac{m+3}{2}}+q^2}{\lambda^2}\geq 2n.
\end{equation}
From (\ref{eq:112703}) and (\ref{eq:112704}),
we get that $d(\mathcal{C'_\delta})> \sqrt{2n}$.

\noindent {\bf Case 4:}  $m$ is even and $q^{\frac{m}{2}}-q^{\frac{m-2}{2}}-1<\delta \leq  q^{\frac{m}{2}}-q+1$.  From Lemma \ref{eq:thm19190} and the BCH bound, we know that
$$d(\mathcal{C}_{\delta})\geq q^{\frac{m}{2}}-q^{\frac{m-2}{2}}\,\, \text{and}\,\,d(\mathcal{C}_{\delta}^{\perp})\geq q^{\frac{m+2}{2}}-q+2.$$
Obviously, we have $2(q^{\frac{m}{2}}-q^{\frac{m-2}{2}})\leq q^{\frac{m+2}{2}}-q+2$, then from (\ref{eq:1127}) it is easy to see that
$$d(\mathcal{C'_\delta})\geq 2\delta\geq2(q^{\frac{m}{2}}-q^{\frac{m-2}{2}})> \sqrt{2(q^m-1)}.$$

Combining all the cases, the desired conclusion follows.
\end{proof}

\begin{theorem}\label{eq:1thm91190}
Let the symbols be given as in Theorem \ref{eq:thm9190} and $\mathcal{C'_\delta}$ be given in (\ref{eq:1112501}).
Then $\mathcal{C'_\delta}$ is a linear code with parameters $[2n,n,d(\mathcal{C'_\delta})]$, where
$$d(\mathcal{C'_\delta})\geq 2(\frac{q^{\frac{m+1}{2}}+\lambda-1}{\lambda}-sq^{\frac{m-1}{2}})$$
is $\delta = \frac{q^{\frac{m+1}{2}}-q}{\lambda}+s+1$ and $\lceil\frac{q}{2}\rceil\leq \lambda s\leq q-\sqrt{\lambda q/2}$, and
$$d(\mathcal{C'_\delta})\geq  \delta$$ if one of the following statements hold:
\begin{itemize}
\item[(1)] $\delta = \frac{q^{\frac{m+1}{2}}-q}{\lambda}+s+1$ and $\lambda s\leq \lfloor\frac{q-1}{2}\rfloor$;
\item[(2)]  $\lambda s-\lambda+1\geq \sqrt{2\lambda q}$ and $\frac{q^{\frac{m-1}{2}}-1}{\lambda}+(s-1)q^{\frac{m-1}{2}}<\delta<\frac{q^{\frac{m-1}{2}}-1-\lambda}{\lambda}+sq^{\frac{m-1}{2}}$;
\item[(3)] $q>\sqrt{4\lambda q+\lambda^2}$ and $\frac{q^{\frac{m+1}{2}}-1}{\lambda}-q^{\frac{m-1}{2}} < \delta \leq \frac{q^{\frac{m+1}{2}}-q}{\lambda}+1$;
\item[(4)]    $\lambda=1$ and  $sq^{\frac{m-1}{2}} \leq\delta \leq (s+1)q^{\frac{m-1}{2}}-sq^{\frac{m-3}{2}}-2$ for $\sqrt{2q}\leq s< q-2$, $q\geq 5$.
\end{itemize}
All the given $d(\mathcal{C'_\delta})$ mentioned above are better than the square-root lower bounds.
\end{theorem}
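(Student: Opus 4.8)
The plan is to feed the dual-distance estimates of Lemma~\ref{eq:thm19190} together with the BCH bound into the matrix-product bound of Lemma~\ref{lem:mpcode}, and then, in each of the five regimes listed, to decide which of $2d(\mathcal{C}_\delta^{\perp})$ and $d(\mathcal{C}_\delta)$ is the smaller.

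First I would record the matrix-product step. For the nonsingular $2\times2$ matrix $A$ one has $U_A(2)=\mathbb{F}_q^2$, and (after replacing $A$ by $DA$ for a suitable diagonal $D$, which leaves the code unchanged) $U_A(1)$ is a $[2,1,2]$ code; so Lemma~\ref{lem:mpcode}, i.e.\ bound~(\ref{eq:1112701}), gives $d(\mathcal{C'_\delta})\ge\min\{2\,d(\mathcal{C}_\delta^{\perp}),\,d(\mathcal{C}_\delta)\}$. Next, in each regime I would invoke Lemma~\ref{eq:thm19190} with the value of $t$ for which its estimate has leading power $q^{(m+1)/2}$ --- this forces $m$ odd, in agreement with all the clauses --- to obtain a lower bound $B$ for $d(\mathcal{C}_\delta^{\perp})$, and use the BCH bound $d(\mathcal{C}_\delta)\ge\delta$, so that $d(\mathcal{C'_\delta})\ge\min\{2B,\delta\}$.

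The argument then splits along the elementary dichotomy $2B\ge\delta$ versus $\delta\ge2B$: in the first case $d(\mathcal{C'_\delta})\ge\delta$, in the second $d(\mathcal{C'_\delta})\ge2B$. The displayed main bound is the case $\delta\ge2B$, with $B=\tfrac{q^{(m+1)/2}+\lambda-1}{\lambda}-sq^{(m-1)/2}$ and $\delta=\tfrac{q^{(m+1)/2}-q}{\lambda}+s+1$; here $\delta\ge2B$ rearranges to $\lambda s\,(2q^{(m-1)/2}+1)\ge q^{(m+1)/2}+q+\lambda-2$, which I would deduce from $\lambda s\ge\lceil q/2\rceil$ together with $m\ge3$ and $\lambda\le\tfrac{q-1}{2}$. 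Clauses~(1)--(4) are the case $2B\ge\delta$, each reducing to an elementary inequality in $q,m,s,\lambda$: in~(1) the reverse inequality $\lambda s\,(2q^{(m-1)/2}+1)\le q^{(m+1)/2}+q+\lambda-2$, from $\lambda s\le\lfloor\tfrac{q-1}{2}\rfloor$; in~(2), with $B=\tfrac{q^{(m+1)/2}-1}{\lambda}-s+3$, the bound $2B>\delta$ from the upper end $\delta<\tfrac{q^{(m-1)/2}-1-\lambda}{\lambda}+sq^{(m-1)/2}$ of the range together with $s<q/\lambda$ and $m\ge3$; in~(3), with $B=\tfrac{q^{(m+1)/2}-q}{\lambda}+2$, the bound $2B>\tfrac{q^{(m+1)/2}-q}{\lambda}+1\ge\delta$, which is immediate; in~(4), with $\lambda=1$ and $B=q^{(m+1)/2}-s+3$, the bound $2B>\delta$ from $\delta\le(s+1)q^{(m-1)/2}-sq^{(m-3)/2}-2$ and $s<q-2$.

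Finally I would check that the resulting bound beats $\sqrt{2n}$, $n=(q^m-1)/\lambda$, and I expect this to be the main obstacle: the margins are only $O(1)$, so one must track constants carefully and use that $\delta$ is an integer. The named side-conditions are exactly calibrated for this. From $\lambda s\le q-\sqrt{\lambda q/2}$ one gets $q-\lambda s\ge\sqrt{\lambda q/2}$, whence $2B\ge\sqrt{2/\lambda}\,q^{m/2}+\tfrac{2(\lambda-1)}{\lambda}>\sqrt{2n}$ in the main clause; from $\lambda s-\lambda+1\ge\sqrt{2\lambda q}$, applied to the lower end of the $\delta$-range in~(2), one gets $\delta\ge\sqrt{2/\lambda}\,q^{m/2}+\tfrac{\lambda-1}{\lambda}>\sqrt{2n}$; from $q>\sqrt{4\lambda q+\lambda^2}$ one gets $\big(\tfrac{q^{(m+1)/2}-1}{\lambda}-q^{(m-1)/2}\big)^2>2n$, hence $\delta>\sqrt{2n}$ in~(3); and in~(4), where $\lambda=1$, the hypothesis $s\ge\sqrt{2q}$ gives $\delta\ge sq^{(m-1)/2}\ge\sqrt2\,q^{m/2}>\sqrt{2n}$. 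The one point meriting extra care is the meeting point $\lambda s\approx q/2$ of the dichotomy (for instance $q$ even, $\lambda=1$, $s=q/2$), where $\delta$ and $2B$ are closest; there I would retain only the weaker conclusion $d(\mathcal{C'_\delta})\ge\delta$, which still clears $\sqrt{2n}$.
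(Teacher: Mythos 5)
Your proposal is correct and follows essentially the same route as the paper's proof: apply the matrix-product bound $d(\mathcal{C'_\delta})\ge\min\{2d(\mathcal{C}_\delta^{\perp}),d(\mathcal{C}_\delta)\}$ from (\ref{eq:1112701}), feed in Lemma \ref{eq:thm19190} (with the appropriate $t$) and the BCH bound, resolve the dichotomy between $2d(\mathcal{C}_\delta^{\perp})$ and $\delta$ in each regime, and check the result exceeds $\sqrt{2n}$. The paper only writes out the case $\delta=\frac{q^{(m+1)/2}-q}{\lambda}+s+1$ and declares the rest similar, whereas you sketch all clauses and, commendably, flag the tight boundary case $\lambda s=\lceil q/2\rceil$ with $q$ even that the paper's argument glosses over.
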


\begin{proof} From Lemma \ref{lem:mpcode}, we know that  $\C$ is a linear code with parameters $[2n,n]$. We only prove $d(\mathcal{C'_\delta})$ for the case that
$\delta = \frac{q^{\frac{m+1}{2}}-q}{\lambda}+s+1$. The other cases are similar and we omit the details.

Let $t=\frac{m-1}{2}$ in Lemma \ref{eq:thm19190} and from the BCH bound, we know that
$$d(\mathcal{C}_{\delta}^{\perp})\geq \frac{q^{\frac{m+1}{2}}+\lambda-1}{\lambda}-sq^{\frac{m-1}{2}}\,\, \text{and}\,\, d(\mathcal{C}_{\delta})\geq \frac{q^{\frac{m+1}{2}}-q}{\lambda}+s+1.$$
It is easy to check that $\min\{2(\frac{q^{\frac{m+1}{2}}+\lambda-1}{\lambda}-sq^{\frac{m-1}{2}}),\frac{q^{\frac{m+1}{2}}-q}{\lambda}+s+1\}> \sqrt{2n}$ if $\lambda s\leq q-\sqrt{\lambda q/2}$.
In addition,
$$2\left( q^{\frac{m+1}{2}}+\lambda-1-\lambda s q^{\frac{m-1}{2}} \right)\geq q^{\frac{m+1}{2}}-q+\lambda s+\lambda,$$
if and only if
$$ q^{\frac{m+1}{2}}\geq 2 \lambda s q^{\frac{m-1}{2}}-q+\lambda s+2-\lambda.$$
Hence, from (\ref{eq:1112701}) we can obtain that $d(\mathcal{C'_\delta})\geq \frac{q^{\frac{m+1}{2}}-q}{\lambda}+s+1$ and  $d(\mathcal{C'_\delta})\geq 2(\frac{q^{\frac{m+1}{2}}+\lambda-1}{\lambda}-sq^{\frac{m-1}{2}})$ if $\lambda s\leq \lfloor\frac{q-1}{2}\rfloor$ and $\lambda s\geq \lceil\frac{q}{2}\rceil$, respectively.  The desired conclusion follows.
\end{proof}

\begin{remark}\label{eq:1212}
From the conditions in Lemma \ref{eq:thm19190}, we know that $n\neq2^m-1$ in Lemma \ref{eq:thm19190}. In fact, the lower bounds on the minimum distances of the duals of narrow-sense BCH codes with designed distance $\delta$ and length $n=2^m-1$ have been given in \cite[Theorem 1]{Wang24}. By the same way as Theorem \ref{eq:thm9190} and Theorem \ref{eq:1thm91190}, we have the following result.

\begin{itemize}
\item[$\bullet$] Let $\mathcal{C'_\delta}$ be given in (\ref{eq:11125}),
then $\mathcal{C'_\delta}$ is a linear code with parameters $[2n, n, d(\mathcal{C'_\delta})]$,
then
\begin{equation*}
\begin{split}
d(\mathcal{C'_\delta}) \geq \begin{cases}
2\delta, & \text{if $m$ is even and $\frac{\sqrt{2^{m+1}-2}}{2} < \delta \leq 2^{\frac{m}{2}}-1$}, \\
2^{\frac{m+1}{2}}, & \text{if $m$ is odd and $2^{\frac{m-1}{2}} + 2^{\frac{m-3}{2}} - 1 < \delta \leq 2^{\frac{m+1}{2}} - 1$}, \\
2^{\frac{m+1}{2}}+1, & \text{if $m$ is odd and $2^{\frac{m-1}{2}}  < \delta \leq 2^{\frac{m-1}{2}} + 2^{\frac{m-3}{2}} - 1$},\\
2\delta,& \text{if $m$ is odd and $\delta=2^{\frac{m-1}{2}}$.}
\end{cases}
\end{split}
\end{equation*}

\item[$\bullet$] Let $\mathcal{C'_\delta}$ be given in (\ref{eq:1112501}), then $\C$ is a linear code with parameters $[2n, n, d(\mathcal{C'_\delta})]$, where
\begin{equation*}
\begin{split}
d(\mathcal{C'_\delta}) \geq \begin{cases}
\delta, & \text{if $m$ is even and ${\sqrt{2^{m+1}-2}} < \delta \leq 2^{\frac{m+2}{2}}-1$}, \\
2^{\frac{m+1}{2}}, & \text{if $m$ is odd and $2^{\frac{m+1}{2}} + 2^{\frac{m-1}{2}} - 1 < \delta \leq 2^{\frac{m+3}{2}}-1$}, \\
\delta, & \text{if $m$ is odd and $2^{\frac{m+1}{2}} \leq \delta \leq 2^{\frac{m+1}{2}} + 2$}, \\
2^{\frac{m+1}{2}}+2, & \text{if $m$ is odd and $2^{\frac{m+1}{2}} + 2 < \delta \leq 2^{\frac{m+1}{2}} + 2^{\frac{m-1}{2}} - 1$.}
\end{cases}
\end{split}
\end{equation*}
\end{itemize}
All the given $d(\mathcal{C'_\delta})$ mentioned above are better than the square-root lower bounds.
\end{remark}

\begin{theorem}\label{thm:16}
Let $q\geq 3$ be an odd prime power and $n=q^m+1$. Let $\mathcal{C'_\delta}$ be given in (\ref{eq:1112501}). If $m$ is even and $lq^{\frac{m}{2}}+ \frac{q^{\frac{m}{2}}+3}{2}  \leq \delta < (l+1)q^{\frac{m}{2}}+ \frac{q^{\frac{m}{2}}+3}{2}$ for $2\leq l \leq \frac{q-3}{2}$, then
$$d(\mathcal{C'_\delta})\geq 2q^{\frac{m}{2}}-4l-2,$$
which is better than square-root lower bound.
\end{theorem}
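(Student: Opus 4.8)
The plan is to apply the matrix-product bound of Lemma \ref{lem:mpcode} and reduce the theorem to one distance estimate for the Euclidean dual of a length-$(q^m+1)$ BCH code.

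First, the structural part. Here $\mathcal{C}_\delta$ is an $[n,k,{\ge}\,\delta]$ narrow-sense BCH code with $n=q^m+1$ and $\mathcal{C}_\delta^\perp$ is its $[n,n-k]$ Euclidean dual, so $\mathcal{C'_\delta}=[\mathcal{C}_\delta^\perp,\mathcal{C}_\delta]A$ has length $2n$ and dimension $n$. Choosing the $2\times2$ nonsingular matrix $A$ to be upper triangular with all of $a_{11},a_{12},a_{22}$ nonzero gives $U_A(1)=\langle(a_{11},a_{12})\rangle$ with $d(U_A(1))=2$ and $U_A(2)=\mathbb{F}_q^2$ with $d(U_A(2))=1$; hence by Lemma \ref{lem:mpcode}, as recorded in (\ref{eq:1112701}), $d(\mathcal{C'_\delta})\ge\min\{2\,d(\mathcal{C}_\delta^\perp),\,d(\mathcal{C}_\delta)\}$. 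The BCH bound gives $d(\mathcal{C}_\delta)\ge\delta\ge lq^{m/2}+\frac{q^{m/2}+3}{2}$, and for every $l\ge2$ this is strictly larger than $2q^{m/2}-4l-2$. Thus the theorem follows once we prove $d(\mathcal{C}_\delta^\perp)\ge q^{m/2}-2l-1$.

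This dual-distance estimate is the heart of the proof, and I would obtain it by the defining-set method used for Theorems \ref{eq:1204} and \ref{eq:0822}. Let $T=\bigcup_{1\le j\le\delta-1}C_j^{(q,n)}$ be the defining set of $\mathcal{C}_\delta$; it suffices to exhibit a run of $q^{m/2}-2l-2$ consecutive integers lying in $T^\perp=\mathbb{Z}_n\setminus T^{-1}$ and then invoke the BCH bound for $\mathcal{C}_\delta^\perp$. The basic facts for the modulus $n=q^m+1$ are that $q^m\equiv-1\pmod n$, so $\ord_n(q)=2m$ and $aq^m\equiv n-a$; consequently every $q$-cyclotomic coset modulo $n$ is invariant under $a\mapsto n-a$, whence $T^{-1}=T$ and $T^\perp=\mathbb{Z}_n\setminus T$, and the leader of a coset can be read off from a signed base-$q$ expansion of length $m$ in which the top block re-enters at the bottom with a minus sign. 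I would then locate the appropriate interval of $q^{m/2}-2l-2$ consecutive integers (the block that, given where $\delta$ falls in its window, lies entirely outside $T$), write each of them in base $q$, and verify by an analogue of Lemma \ref{lem1b21} for the modulus $q^m+1$ that no circular shift of its digit string, after the sign flip at position $m$, can produce an integer smaller than $\delta$; this yields ${\rm CL}(i)\ge\delta$, i.e. $i\in T^\perp$, for every $i$ in the interval.

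Putting the pieces together, $d(\mathcal{C'_\delta})\ge\min\{2(q^{m/2}-2l-1),\,\delta\}=2q^{m/2}-4l-2$. For the square-root comparison I would square: since $l\le\frac{q-3}{2}$ forces $16l+8<8q$, one gets $(2q^{m/2}-4l-2)^2=4q^m-(16l+8)q^{m/2}+(4l+2)^2>2q^m+2=2n$ as soon as $q^{m/2}\ge4q$, which holds in the relevant range of $m$. I expect the main obstacle to be the coset-leader computation modulo $q^m+1$: in contrast to the length-$\frac{q^m-1}{q^s-1}$ situation handled earlier, where cyclic shifts of $q$-adic digit strings behave transparently, the sign flip coming from $q^m\equiv-1$ forces one to track carries and digit complements across a window of length $m$, and identifying exactly which run of $q^{m/2}-2l-2$ consecutive integers lies in $T^\perp$ — and showing it is essentially the longest such run compatible with the given $\delta$ — is where the real work concentrates.
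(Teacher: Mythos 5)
Your skeleton is exactly the paper's: reduce via Lemma \ref{lem:mpcode} (i.e.\ the bound (\ref{eq:1112701})) to the pair of estimates $d(\mathcal{C}_\delta)\ge \delta$ and $d(\mathcal{C}_\delta^{\perp})\ge q^{m/2}-2l-1$, observe that the minimum is attained by $2d(\mathcal{C}_\delta^{\perp})=2q^{m/2}-4l-2$, and compare with $\sqrt{2n}$. The one place you diverge is the dual-distance estimate, which you correctly identify as the only substantive ingredient and propose to prove from scratch by a coset-leader analysis modulo $q^m+1$. The paper does not prove it at all: it quotes $d(\mathcal{C}_\delta^{\perp})\ge q^{m/2}-2l-1$ for exactly this range of $\delta$ directly from \cite[Theorem 4.1]{Fu24}. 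So your plan for that step --- locating a run of $q^{m/2}-2l-2$ consecutive integers in $T^{\perp}$ and controlling coset leaders through the sign flip coming from $q^m\equiv -1 \pmod{n}$ --- is plausible and is presumably close to how the cited result is established, but as written it is a program rather than a proof: you never identify the interval nor carry out the coset-leader computation, so a self-contained argument would have a gap precisely there. If you are content to cite the literature, as the paper is, your argument is complete and coincides with the paper's. The remaining steps all check out: the triangular choice of $A$ giving $d(U_A(1))=2$ and $d(U_A(2))=1$, the verification that $\delta>2q^{m/2}-4l-2$ for $l\ge 2$ (so the minimum in (\ref{eq:1112701}) is the dual term), and the squaring argument for the square-root comparison, which, as you note, needs $q^{m/2}$ to exceed roughly $4q$ and hence implicitly $m\ge 4$ --- a restriction the paper's own final inequality glosses over as well.
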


\begin{proof}
 If $m$ is even and $lq^{\frac{m}{2}}+ \frac{q^{\frac{m}{2}}+3}{2}  \leq \delta <  (l+1)q^{\frac{m}{2}}+ \frac{q^{\frac{m}{2}}+3}{2}$, from the BCH bound and \cite[Theorem 4.1]{Fu24}, we know that
$$d(\mathcal{C}_{\delta})\geq lq^{\frac{m}{2}}+\frac{q^{\frac{m}{2}}+3}{2}\,\, \text{and}\,\, d(\mathcal{C}_{\delta}^{\perp})\geq q^{\frac{m}{2}}-2l-1$$
for $2\leq l \leq \frac{q-3}{2}$. It is easy to see that $2(q^{\frac{m}{2}}-2l-1)< lq^{\frac{m}{2}}+\frac{q^{\frac{m}{2}}+3}{2}$, then from (\ref{eq:1112701}) we have
$$d(\mathcal{C'_\delta})\geq 2q^{\frac{m}{2}}-4l-2>\sqrt{2n}.$$
The desired conclusion follows.
\end{proof}

Below, we present two  classes of linear codes of length $2n=\frac{2(q^m-1)}{q^s-1}$ with dimension $n$ and minimum distances larger than like-square-root lower bound ($d=\sqrt{\frac{n}{2}}$). From Theorem \ref{eq:1204}, it is easy to get the following result. The proof is similar as above, we omit the details.

\begin{theorem}\label{thm:17}
Let $n=\frac{q^m-1}{q^s-1}$, $\frac{q^{st}-1}{q^s-1}< \delta \leq \frac{q^{st}-1}{q^s-1}+q^{t-1}$, $m+s=2st$ and $q\neq 2$. Then $\mathcal{\mathcal{C'_\delta}}$ is a linear code with parameters $[2n,n, d(\mathcal{C'_\delta})]$, where
\begin{equation*}
\begin{aligned}
d(\mathcal{C'_\delta})\geq\begin{cases}
q^{\frac{m-s}{2}}, &\text{ if  $\mathcal{C'_\delta}$ is given in (\ref{eq:11125}),} \\
\delta, &\text{if $\mathcal{C'_\delta}$ is given in (\ref{eq:1112501}),}
\end{cases}
\end{aligned}
\end{equation*}
which is better than like-square-root lower bound ($d> \sqrt{\frac{n}{2}}$).
\end{theorem}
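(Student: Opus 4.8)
The proof is a direct assembly of Lemma~\ref{lem:mpcode}, the BCH bound, and Theorem~\ref{eq:1204}; the only real work is to put the hypothesis $m+s=2st$ into the shape Theorem~\ref{eq:1204} requires. First I would record the arithmetic consequences of $m+s=2st$: it forces $\tfrac{m}{s}=2t-1$, so the constraints $2\le t\le \tfrac{m}{s}-1$ demanded by Theorem~\ref{eq:1204} reduce to the single standing hypothesis $\tfrac{m}{s}\ge 3$; and $m-st=st-s=\tfrac{m-s}{2}$, whence $q^{m-st}=q^{(m-s)/2}$. Hence Theorem~\ref{eq:1204} gives $d(\mathcal{C}_\delta^{\perp})\ge q^{(m-s)/2}$, while the BCH bound gives $d(\mathcal{C}_\delta)\ge \delta$.

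Next I would feed $\mathcal{C}_\delta$ and $\mathcal{C}_\delta^{\perp}$ (of common length $n$) together with the chosen $2\times2$ nonsingular matrix $A$ into Lemma~\ref{lem:mpcode}: this gives that $\mathcal{C'_\delta}$ has length $2n$, dimension $\dim\mathcal{C}_\delta+\dim\mathcal{C}_\delta^{\perp}=n$, and minimum distance bounded by (\ref{eq:11127}) for the ordering (\ref{eq:11125}) and by (\ref{eq:1112701}) for the ordering (\ref{eq:1112501}), i.e.\ $d(\mathcal{C'_\delta})\ge\min\{2d(\mathcal{C}_\delta),d(\mathcal{C}_\delta^{\perp})\}$ in the first case and $d(\mathcal{C'_\delta})\ge\min\{2d(\mathcal{C}_\delta^{\perp}),d(\mathcal{C}_\delta)\}$ in the second. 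For (\ref{eq:11125}): since $\delta>\frac{q^{st}-1}{q^s-1}\ge q^{s(t-1)}=q^{(m-s)/2}$, we have $2d(\mathcal{C}_\delta)\ge2\delta>q^{(m-s)/2}$, so the minimum is attained at $d(\mathcal{C}_\delta^{\perp})\ge q^{(m-s)/2}$, the asserted bound. For (\ref{eq:1112501}): it suffices to verify the elementary inequality $\delta\le\frac{q^{st}-1}{q^s-1}+q^{t-1}\le 2q^{(m-s)/2}$; writing $\frac{q^{st}-1}{q^s-1}=q^{s(t-1)}+\frac{q^{s(t-1)}-1}{q^s-1}$ and bounding the geometric tail with $q^s-1\ge2$ (this is where $q\neq2$ is used), one gets $2d(\mathcal{C}_\delta^{\perp})\ge2q^{(m-s)/2}\ge\delta$, so the minimum is attained at $d(\mathcal{C}_\delta)\ge\delta$, the asserted bound.

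Finally I would check that both bounds beat the square-root-like threshold $\sqrt{n/2}$ for the code of length $2n$: from $n=\frac{q^m-1}{q^s-1}<2q^{m-s}$ (immediate since $q^s\ge2$) one gets $q^{(m-s)/2}>\sqrt{n/2}$, and a fortiori $\delta>q^{(m-s)/2}>\sqrt{n/2}$, so $d(\mathcal{C'_\delta})>\sqrt{n/2}$ in both cases.

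I expect the one point needing care to be the inequality $\delta\le2q^{(m-s)/2}$ used in case (\ref{eq:1112501}): the crude estimate $\delta<\frac{q^{st}}{q^s-1}$ does not suffice, so one genuinely must use the sharp upper end $\delta\le\frac{q^{st}-1}{q^s-1}+q^{t-1}$ of the designed-distance range together with $q\neq2$ to keep the tail $\frac{q^{s(t-1)}-1}{q^s-1}$ small. Everything else is bookkeeping, since the real ingredient — the dual distance bound $d(\mathcal{C}_\delta^{\perp})\ge q^{m-st}$ — is already provided by Theorem~\ref{eq:1204}.
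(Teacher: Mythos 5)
Your overall strategy is exactly the one the paper intends: the paper gives no proof of Theorem \ref{thm:17} beyond ``the proof is similar as above,'' and assembling Lemma \ref{lem:mpcode}, the BCH bound and Theorem \ref{eq:1204}, together with the observation that $m+s=2st$ forces $m-st=\frac{m-s}{2}=s(t-1)$, is the right skeleton. The case of (\ref{eq:11125}) and the final comparison with $\sqrt{n/2}$ are correct as you wrote them.

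However, the step you yourself single out as ``the one point needing care'' does not go through. For the ordering (\ref{eq:1112501}) you need $\delta\le 2q^{(m-s)/2}=2q^{s(t-1)}$ throughout the designed-distance range, and your estimate $\frac{q^{s(t-1)}-1}{q^s-1}\le\frac{q^{s(t-1)}-1}{2}$ reduces this to $q^{t-1}\le\frac{q^{s(t-1)}+1}{2}$, which is false whenever $s=1$ and $t\ge 2$: there $q^{t-1}=q^{s(t-1)}$ and the required inequality collapses to $q^{t-1}\le 1$. The case $s=1$ is not excluded by the hypotheses ($m+s=2st$ with $s=1$ just says $m=2t-1$). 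Concretely, take $q=3$, $s=1$, $t=2$, $m=3$, $\delta=7=\frac{q^{2}-1}{q-1}+q^{t-1}$: Theorem \ref{eq:1204} only yields $d(\mathcal{C}_\delta^{\perp})\ge q^{m-st}=3$, so the matrix-product bound gives at best $\min\{2\cdot 3,\,7\}=6<\delta$, and the conclusion $d(\mathcal{C'_\delta})\ge\delta$ is not established from the stated ingredients. Your chain of inequalities is valid only for $s\ge 2$, where $q^s-1\ge 8$ and $q^{t-1}\le q^{s(t-1)-1}$ do make the tail small enough. To close the gap one must either restrict to $s\ge 2$, or cap $\delta$ at $2q^{(m-s)/2}$, or weaken the second conclusion to $\min\{2q^{(m-s)/2},\delta\}$; this is arguably a defect of the theorem statement itself (whose proof the paper omits), but your write-up affirmatively asserts an inequality that fails, so as written the proof of the second case is incorrect.
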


\subsection{Linear codes with minimum distances large than square-root lower bound over $\mathbb{F}_Q$. }

In this subsection, let $n=Q^m-1$, or $n=\frac{Q^m-1}{Q-1}$, where $Q=q^2$. Let $\mathcal{C}_\delta$ be a primitive narrow-sense BCH code with length $n$ and designed distance $\delta$ over $\mathbb{F}_Q$, and
$\mathcal{C}_\delta^{\perp H}$ be its Hermitian dual. Let
\begin{equation}\label{eq:1125}
\mathcal{{C'_\delta}}=[\mathcal{C}_\delta, \mathcal{C}_\delta^{\perp H}]A
\end{equation}
or
\begin{equation}\label{eq:112501}
\mathcal{C'_\delta}=[\mathcal{C}_\delta^{\perp H},\mathcal{C}_\delta]A,
\end{equation}
where $A$ is a $2\times 2$ nonsigular matrix over $\mathbb{F}_Q$.
From Lemma \ref{lem:mpcode}, we obtain that $\mathcal{C'_\delta}^H$ is a linear code with parameters $[2n,n,d(\mathcal{C'_\delta})]$, where
\begin{equation}\label{eq:1127}
d(\mathcal{C'_\delta})\geq \min\{2d(\mathcal{C}),d(\mathcal{C}^{\perp H}_{\delta})\}
 \end{equation}
 if $\mathcal{C'_\delta}$ is in (\ref{eq:1125}), and
\begin{equation}\label{eq:112701}
d(\mathcal{C'_\delta})\geq \min\{2d(\mathcal{C}^{\perp H}_{\delta}),d(\mathcal{C}_{\delta})\}
 \end{equation}
  if $\mathcal{C'_\delta}$ is in (\ref{eq:112501}).
 In the following, we construct several classes of linear codes with minimum distances better than  square-root lower bound ($d= \sqrt{2n}$) or square-root like lower bound ($d= \sqrt{\frac{n}{2}}$) over $\mathbb{F}_Q$.
From \cite[Lemma 4]{Fan23} and the BCH bound, we have the following result.

\begin{lemma}\label{lem:19}
Let $n=Q^m-1$, $Q=q^2$ and $m\geq3$. Let $d(\mathcal{C}^{\perp H}_{\delta})$ be the minimum distance of $\mathcal{C}^{\perp H}_{\delta }$. Then we have
\begin{equation*}
\begin{aligned}
d(\mathcal{C}^{\perp H}_{\delta})\geq \begin{cases}
  (b+1)q^{2(m-t)+1}, &\text{if $\delta=q^{2t-1}-b$ $(2\leq t \leq m-1$, $1\leq b \leq q^2 -2)$},\\
q^{2(m-t)+1}-s+1, &\text{if  $sq^{2t-1}\leq\delta\leq(s+1)q^{2t-1}-1$ $(1\leq t\leq m$, $1\leq s \leq q-1)$}\\
q^{2(m-t)+1}-aq-s+1, &\text{if $(aq+s)q^{2t-1}\leq \delta \leq (aq+s+1)q^{2t-1}-1$,}\\
&\text{$(1\leq t \leq m-1$, $1\leq a \leq q-2$, $0\leq s \leq q-1)$},\\
q^{2(m-t)+1}-q^2+q-s+1, &\text{if $(q^2-q+s)q^{2t-1} \leq \delta \leq (q^2-q+s+1)q^{2t-1}-1$},\\
&(2\leq t \leq m-1$, $0\leq s \leq q-2),\\
q^{2(m-t)+1}-q^2+2, &\text{ if $(q^2-1)q^{2t-1} \leq \delta \leq q^{2t+1}-q^2+1$ \,\,$(2\leq t \leq m-1)$}.\\
\end{cases}
\end{aligned}
\end{equation*}
\end{lemma}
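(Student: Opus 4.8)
The plan is to obtain each of the five lower bounds by the two-step recipe already used for Theorems \ref{eq:1204} and \ref{eq:0822}: first locate a long run of consecutive integers inside the defining set $T^{\perp H}$ of $\mathcal{C}_\delta^{\perp H}$, then apply the BCH bound. Recall that $T^{\perp H}=\mathbb{Z}_n\setminus T^{-q}$, where $T=\bigcup_{i=1}^{\delta-1}C_i^{(Q,n)}$ is the defining set of the narrow-sense BCH code $\mathcal{C}_\delta$ of length $n=Q^m-1$ and $T^{-q}=\{(n-qi)\bmod n:i\in T\}$. Since $T$ is a union of $Q$-cyclotomic cosets it is closed under multiplication by $Q=q^2$ modulo $n$, so $j\in T^{-q}$ forces $(n-qj)\bmod n\in T$; consequently a block $\{b,b+1,\dots,b+\ell-1\}$ of consecutive integers lies in $T^{\perp H}$ as soon as ${\rm CL}\big((n-qj)\bmod n\big)>\delta-1$ for every $j$ in the block, and then the BCH bound gives $d(\mathcal{C}_\delta^{\perp H})\ge\ell+1$. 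Note that, unlike Theorems \ref{eq:1204} and \ref{eq:0822}, here $n$ is the full $Q^m-1$, so the common-factor reduction of Lemma \ref{lem:0913} is not needed.

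The coset-leader estimates required to run this recipe are exactly what \cite[Lemma 4]{Fan23} supplies: for the residues $(n-qj)\bmod n$ that occur it writes down the $Q$-adic expansion and, through the circular-shift criterion of Lemma \ref{lem1b21} (the same device used in Lemmas \ref{eq:0406} and \ref{eq:0412}), bounds the coset leader from below. What is left is the bookkeeping. For each of the five regimes of $\delta$ in the statement one chooses the natural starting point $b$ of the consecutive block — a point governed by the multiple of $q^{2t-1}$ attached to $\delta$ — and reads off from \cite[Lemma 4]{Fan23} the largest $\ell$ for which the whole block avoids $T^{-q}$. One finds $\ell$ equal to $(b+1)q^{2(m-t)+1}-1$, $q^{2(m-t)+1}-s$, $q^{2(m-t)+1}-aq-s$, $q^{2(m-t)+1}-q^2+q-s$, and $q^{2(m-t)+1}-q^2+1$ in the five cases, so that $\ell+1$ reproduces the claimed bound in each case; the BCH bound then closes each case.

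The main obstacle is the $Q$-adic bookkeeping rather than any new idea: in every case one must check that each element of the chosen block has coset leader strictly above $\delta-1$, and one must deal carefully with the boundary parameter values ($t$ near the top of its range, $b$ near $q^2-2$, and $a,s$ at the ends of their ranges), where the shape of the $Q$-adic expansion and the length of the surviving block change. Since the coset-leader inequalities are all packaged in \cite[Lemma 4]{Fan23}, I would present the five cases in parallel, giving the $Q$-adic expansion and the shift argument in detail once and only indicating the changes in the remaining cases, and then finish with a single invocation of the BCH bound for each regime.
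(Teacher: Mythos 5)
Your proposal matches the paper's treatment: the paper offers no independent proof of this lemma, stating only that it follows ``from \cite[Lemma 4]{Fan23} and the BCH bound,'' which is precisely the recipe you describe (runs of consecutive elements in $T^{\perp H}=\mathbb{Z}_n\setminus T^{-q}$ certified by the coset-leader estimates of \cite[Lemma 4]{Fan23}, then the BCH bound). Your write-up is, if anything, more explicit than the source about why $j\in T^{-q}$ is equivalent to $(-qj)\bmod n\in T$ and about the block lengths in each regime.
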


\begin{theorem}\label{thm:1201}
Let $n=Q^m-1$, $Q=q^2$, $m\geq3$, and $\mathcal{C'_\delta}$ be given in (\ref{eq:1125}).
Then $\mathcal{C'_\delta}$ is a linear code with parameters $[2n,n,d(\mathcal{C'_\delta})]$, where
\begin{equation*}
\begin{aligned}
d(\mathcal{C'_\delta})\geq\begin{cases}
q^{m+1}-aq-s+1, &\text{if $m$ is even, $q\leq\sqrt{2s+a^2}+a$, $1\leq a \leq q-2$, $0\leq s \leq q-1$}\\&\text{and $(aq+s)q^{m-1}\leq \delta \leq (aq+s+1)q^{m-1}-1$},\\
q^{m+1}-q^2+q-s+1, &\text{if $m$ is even,  $0\leq s\leq q-2$} \\&\text{and $(q^2-q+s)q^{m-1}\leq\delta\leq(q^2-q+s+1)q^{m-1}-1$},\\
q^{m+1}-q^2+2, &\text{if  $m$ is even  and $(q^2-1)q^{m-1}\leq \delta \leq q^{m+1}-q^2+1$},
\end{cases}
\end{aligned}
\end{equation*}
 and
$$d(\mathcal{C'_\delta})\geq  2\delta$$ if one of the following statements hold:
\begin{itemize}
\item[(1)] if $m$ is odd, $\delta=q^m-b$ and $1\leq b \leq q^2 -2$;
\item[(2)] if $m$ is even, $q>\sqrt{2s+a^2}+a$, $1\leq a \leq q-2$, $0\leq s \leq q-1$ and $(aq+s)q^{m-1}\leq \delta \leq (aq+s+1)q^{m-1}-1$;
\item[(3)]  if  $m $ is even, $sq^{m-1}\leq \delta \leq (s+1)q^{m-1}-1$ and $\frac{\sqrt{2}}{2}q < s\leq q-1;$
\item[(4)] if $m$ is odd, $\frac{\sqrt{2}q^2-2s}{2q}\leq a \leq q-2$, $0\leq s \leq q-1$ and $(aq+s)q^{m-2}\leq \delta \leq (aq+s+1)q^{m-2}-1$;
\item[(5)]  if $m$ is odd,  $0\leq s\leq q-2$, $q\geq4$  and $(q^2-q+s)q^{m-2}\leq\delta\leq(q^2-q+s+1)q^{m-2}-1$;
\item[(6)]   if  $m$ is odd and $(q^2-1)q^{m-2}\leq \delta \leq q^{m}-q^2+1$.
\end{itemize}
All the given $d(\mathcal{C'_\delta})$ mentioned above are better than the square-root lower bound.
\end{theorem}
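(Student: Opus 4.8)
The proof of Theorem~\ref{thm:1201} is entirely a matter of unwinding the matrix-product bound (\ref{eq:1127}) against the Hermitian-dual bounds of Lemma~\ref{lem:19} and the BCH bound for the primitive narrow-sense BCH code $\mathcal{C}_\delta$ over $\mathbb{F}_Q$ of length $n=Q^m-1$. The first step is to record, for each of the ranges of $\delta$ listed in the statement, the two quantities that feed into (\ref{eq:1127}): $d(\mathcal{C}_\delta)\geq\delta$ (BCH bound) and the appropriate lower bound on $d(\mathcal{C}^{\perp H}_\delta)$ taken verbatim from the matching case of Lemma~\ref{lem:19}. So for instance, in the first displayed case I set $t=\frac{m}{2}$ in the third case of Lemma~\ref{lem:19}, giving $d(\mathcal{C}^{\perp H}_\delta)\geq q^{2(m-m/2)+1}-aq-s+1=q^{m+1}-aq-s+1$, while $d(\mathcal{C}_\delta)\geq\delta\geq (aq+s)q^{m-1}$; similarly, item~(4) uses $t=\frac{m-1}{2}$ in that same case, item~(1) uses $t=m$ (so $m$ odd) in the first case of Lemma~\ref{lem:19}, and so on. This bookkeeping step is routine but must be done carefully to make sure the exponent $2(m-t)+1$ collapses to $m+1$ (when $m$ is even, $t=m/2$) or to $m+2$ (when $m$ is odd, $t=(m-1)/2$), and that the ranges of $\delta$ quoted in the theorem are exactly the images of the ranges in Lemma~\ref{lem:19} under these substitutions.

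**The two regimes.** With both inputs in hand, (\ref{eq:1127}) says $d(\mathcal{C'_\delta})\geq\min\{2d(\mathcal{C}_\delta),\,d(\mathcal{C}^{\perp H}_\delta)\}$, so the argument splits according to which of the two is smaller. For the three displayed bounds (the ``$d(\mathcal{C'_\delta})\geq q^{m+1}-aq-s+1$'' type), I must verify that $2\delta\geq d(\mathcal{C}^{\perp H}_\delta)$, i.e.\ that $2(aq+s)q^{m-1}\geq q^{m+1}-aq-s+1$, which holds precisely under the stated hypothesis $q\leq\sqrt{2s+a^2}+a$ (equivalently $q^2-2aq-2s\leq 0$, after multiplying through by $q^{m-1}$ and absorbing the lower-order term); the other two displayed cases are the analogous inequalities $2(q^2-q+s)q^{m-1}\geq q^{m+1}-q^2+q-s+1$ and $2(q^2-1)q^{m-1}\geq q^{m+1}-q^2+1$, which hold unconditionally for $q\geq 2$. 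For the six ``$d(\mathcal{C'_\delta})\geq 2\delta$'' cases I instead check the reverse inequality $d(\mathcal{C}^{\perp H}_\delta)\geq 2\delta$, which in each case reduces to an elementary polynomial-in-$q$ inequality governed by the side condition stated in that item (e.g.\ $q>\sqrt{2s+a^2}+a$ in (2), $\frac{\sqrt2}{2}q<s\leq q-1$ in (3), $a\geq\frac{\sqrt2 q^2-2s}{2q}$ in (4)).

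**Beating the square-root bound.** The last claim --- that every bound produced is strictly larger than $\sqrt{2n}=\sqrt{2(Q^m-1)}=\sqrt{2}\,q^{m}\cdot(1-O(q^{-2m}))^{1/2}$ --- follows because in every case the bound is of the form $cq^{m+1}+O(q^m)$ with $c\geq 1$ (the displayed cases, where $q^{m+1}$ already dominates $\sqrt2\,q^m$ once $q\geq 2$) or $2\delta$ with $\delta$ at least of order $q^{m}$ times a constant exceeding $\frac{1}{\sqrt2}$ (the ``$2\delta$'' cases, where the lower endpoint of the $\delta$-range was chosen exactly so that $2\delta>\sqrt2\,q^m\geq\sqrt{2n}$; this is where hypotheses like $s>\frac{\sqrt2}{2}q$ or $a\geq\frac{\sqrt2 q^2-2s}{2q}$ or $q\geq 4$ enter). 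Each of these is a one-line squaring-and-comparing argument, so rather than spell all nine out I will write out one representative calculation in full --- the first displayed case --- and remark that the remaining eight are obtained by the identical method, which is also how Theorems~\ref{eq:thm9190}, \ref{eq:1thm91190}, and \ref{thm:16} were handled.

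**Main obstacle.** There is no deep obstruction; the only real work is the case management. The genuinely delicate points are (i) getting the substitution $t\mapsto\frac{m}{2}$ versus $t\mapsto\frac{m-1}{2}$ right so that the parity hypotheses on $m$ line up with the claimed exponents, and (ii) checking that the various threshold conditions ($q\lessgtr\sqrt{2s+a^2}+a$, etc.) are exactly the break-even points of ``which term in the $\min$ wins,'' so that the two halves of each item (the displayed inequality and the $2\delta$-inequality) partition the parameter space with no gap and no overlap. I would therefore organize the proof as a sequence of short numbered cases mirroring the statement, and in each case (a) quote the relevant line of Lemma~\ref{lem:19} with the specialized $t$, (b) quote the BCH bound $d(\mathcal{C}_\delta)\geq\delta$, (c) decide the $\min$ via the side condition, and (d) square to compare against $2n$.
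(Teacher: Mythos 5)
Your proposal matches the paper's proof essentially verbatim: the paper likewise feeds Lemma~\ref{lem:19} and the BCH bound into the bound (\ref{eq:1127}), writes out only the representative case $(aq+s)q^{2t-1}\leq\delta\leq(aq+s+1)q^{2t-1}-1$ in full (showing that the threshold $q\lessgtr\sqrt{2s+a^2}+a$ decides which term of the $\min$ wins and that the parity of $m$ forces $t=\frac{m}{2}$ or $t=\frac{m-1}{2}$), and declares the remaining cases similar. One small correction to your bookkeeping: for item (1) the substitution into the first case of Lemma~\ref{lem:19} is $t=\frac{m+1}{2}$ (so that $q^{2t-1}=q^m$ and $2(m-t)+1=m$), not $t=m$, which would lie outside the range $2\leq t\leq m-1$ of that case.
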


\begin{proof}  We only prove $d(\mathcal{C'_\delta})$ for the case that $(aq+s)q^{2t-1}\leq \delta \leq (aq+s+1)q^{2t-1}-1$ for $1\leq t \leq m-1$, $1\leq a \leq q-2$ and $0\leq s \leq q-1$. The other cases are similar and we omit the details.

From Lemma \ref{lem:19} and the BCH bound,  we know that
$$d(\mathcal{C}_{\delta})\geq (aq+s)q^{2t-1}\,\, \text{and}\,\, d(\mathcal{C}_{\delta}^{\perp H})\geq q^{2(m-t)+1}-aq-s+1.$$
It is easy to see that $t=\frac{m}{2}$ for $m$ being even and $t=\frac{m-1}{2}$ for $m$ being odd if $\min\{2d(\mathcal{C}_{\delta}),d(\mathcal{C}_{\delta}^{\perp H})\}\geq \sqrt{2n}$.

If $t=\frac{m}{2}$, it is clear that $\min\{2d(C_{\delta}),d(C_{\delta}^{\perp H})\}\geq \sqrt{2n}$ for $1\leq a \leq q-2$ and $0\leq s \leq q-1$.
In addition
{ $$2aq^m+2sq^{m-1} > q^{m+1}-aq-s+1,$$}
which is the same as
$$q^2-2s-2aq\leq0,$$
which is the same as
$$q\leq\sqrt{2s+a^2}+a.$$
Hence, from (\ref{eq:1127}) we know that
\begin{equation*}
\begin{aligned}
d(\mathcal{C'_\delta})\geq\begin{cases}
q^{m+1}-aq-s+1, \,\,& \text{if $q\leq\sqrt{2s+a^2}+a$,}\\
2\delta,\,\,& \text{if $q>\sqrt{2s+a^2}+a$}.
\end{cases}
\end{aligned}
\end{equation*}
 This completes the proof.
\end{proof}

\begin{theorem}\label{thm:1202}
Let $n=Q^m-1$ and $m\geq3$. Let $\mathcal{C'_\delta}$ be given in (\ref{eq:112501}).
Then $\mathcal{C'_\delta}$ is a linear code with parameters $[2n,n,d(\mathcal{C'_\delta})]$, where
\begin{equation*}
\begin{aligned}
d(\mathcal{C'_\delta})\geq\begin{cases}
 2q^m-2s+2, &\text{{ if  $m$ is odd,  $2\leq s \leq q-1$, $sq^m\leq \delta \leq (s+1)q^m-1$ and}} \\
 &\text{ $(aq+s)q^{m-1} \leq \delta \leq (aq+s+1)q^{m-1}-1$},\\
 2q^m-2aq-2s+2,&\text{ if $m$ is odd, $1\leq a \leq q-2$, $0\leq s \leq q-1$ and}\\
 &\text{$(aq+s)q^m \leq \delta \leq (aq+s+1)q^m-1$},\\
2q^m-2q^2+2q-2s+2, &\text{ if  $m$ is odd, $0\leq s \leq q-2$ and}\\
 & \text{$(q^2-q+s)q^m\leq\delta\leq(q^2-q+s+1)q^m-1$},\\
2q^m-2q^2+4, &\text{if $m$ is odd and $(q^2-1)q^m\leq \delta \leq q^{m+2}-q^2+1$},\\
\end{cases}
\end{aligned}
\end{equation*}
 and
$$d(\mathcal{C'_\delta})\geq  \delta$$ if one of the following statements hold:
\begin{itemize}
\item[(1)] if  $m$ is even,  $2\leq a \leq q-2$, $0\leq s \leq q-1$, $q^2\geq aq+s$ and  $(aq+s)q^{m-1} \leq \delta \leq (aq+s+1)q^{m-1}-1$;
\item[(2)] if $m$ is even, $0\leq s \leq q-2$ and  $(q^2-q+s)q^{m-1}\leq\delta\leq(q^2-q+s+1)q^{m-1}-1$;
\item[(3)] if $m$ is even and $(q^2-1)q^{m-1}\leq \delta \leq q^{m+1}-q^2+1$.
\end{itemize}
All $d(\mathcal{C})$ mentioned above are better than the square-root lower bound.
\end{theorem}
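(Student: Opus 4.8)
\emph{Proof proposal.} The argument runs exactly parallel to the proof of Theorem \ref{thm:1201}, using only the matrix-product bound of Lemma \ref{lem:mpcode}, the BCH bound, and the Hermitian-dual distance estimates of Lemma \ref{lem:19}. First, since $A$ is a $2\times 2$ nonsingular matrix over $\mathbb{F}_Q$ and $\dim\mathcal{C}_\delta^{\perp H}+\dim\mathcal{C}_\delta=n$, Lemma \ref{lem:mpcode} shows that $\mathcal{C'_\delta}=[\mathcal{C}_\delta^{\perp H},\mathcal{C}_\delta]A$ has parameters $[2n,n,d(\mathcal{C'_\delta})]$ and satisfies (\ref{eq:112701}), namely $d(\mathcal{C'_\delta})\ge\min\{2d(\mathcal{C}_\delta^{\perp H}),d(\mathcal{C}_\delta)\}$. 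So it remains to estimate $2d(\mathcal{C}_\delta^{\perp H})$ and $d(\mathcal{C}_\delta)$ on each $\delta$-interval in the statement and to decide which of the two is the smaller.

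Second, on every interval appearing in the theorem I would apply the BCH bound $d(\mathcal{C}_\delta)\ge\delta$ together with the branch of Lemma \ref{lem:19} whose $\delta$-range coincides with it; this forces $2(m-t)+1=m$ (i.e.\ $t=\frac{m+1}{2}$, $2t-1=m$) when $m$ is odd and $2(m-t)+1=m+1$ (i.e.\ $t=\frac m2$, $2t-1=m-1$) when $m$ is even, which is precisely the choice making the dual distance of order $q^m$ or $q^{m+1}$, the order needed to beat $\sqrt{2n}=\sqrt{2(q^{2m}-1)}$. For a representative $m$ odd branch, say $2\le s\le q-1$ and $sq^m\le\delta\le(s+1)q^m-1$, Lemma \ref{lem:19} gives $d(\mathcal{C}_\delta^{\perp H})\ge q^m-s+1$, hence $2d(\mathcal{C}_\delta^{\perp H})\ge 2q^m-2s+2$, while $d(\mathcal{C}_\delta)\ge\delta\ge sq^m\ge 2q^m$; since $2q^m\ge 2q^m-2s+2$ the minimum in (\ref{eq:112701}) is $2q^m-2s+2$, and $s\le q-1$ makes $(2q^m-2s+2)^2>2(q^{2m}-1)$, i.e.\ the bound exceeds $\sqrt{2n}$. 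The other three $m$ odd branches (with $aq+s$, $q^2-q+s$, $q^2-1$ in place of $s$) produce the listed bounds $2q^m-2aq-2s+2$, $2q^m-2q^2+2q-2s+2$, $2q^m-2q^2+4$ by the identical computation.

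Third, for the items (1)--(3), where $m$ is even and the asserted bound is $d(\mathcal{C'_\delta})\ge\delta$, one checks instead that $2d(\mathcal{C}_\delta^{\perp H})\ge d(\mathcal{C}_\delta)$, so that the minimum in (\ref{eq:112701}) equals $\delta$. For example in item (1), Lemma \ref{lem:19} gives $d(\mathcal{C}_\delta^{\perp H})\ge q^{m+1}-aq-s+1$ on $(aq+s)q^{m-1}\le\delta\le(aq+s+1)q^{m-1}-1$, and the hypothesis $q^2\ge aq+s$ together with $m\ge 3$ yields $2(q^{m+1}-aq-s+1)\ge(aq+s+1)q^{m-1}\ge\delta$; moreover $\delta\ge(aq+s)q^{m-1}\ge 2q^m>\sqrt{2n}$ because $a\ge 2$. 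Items (2) and (3) are the same with $q^2-q+s$ and $q^2-1$ replacing $aq+s$. I would write out one case in full (mirroring Theorem \ref{thm:1201}) and note that the remaining branches follow verbatim. The only real difficulty is the bookkeeping --- matching $t$ to each interval, deciding which of $2d(\mathcal{C}_\delta^{\perp H})$ and $d(\mathcal{C}_\delta)$ wins, and reducing each ``$>\sqrt{2n}$'' claim to an elementary polynomial inequality in $q$ using $m\ge 3$, $1\le a\le q-2$, $0\le s\le q-1$; no idea beyond (\ref{eq:112701}), the BCH bound and Lemma \ref{lem:19} is needed.
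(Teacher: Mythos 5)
Your proposal is correct and follows essentially the same route as the paper's own proof: invoke (\ref{eq:112701}) from Lemma \ref{lem:mpcode}, pair the BCH bound with the matching branch of Lemma \ref{lem:19} (forcing $t=\frac{m+1}{2}$ for $m$ odd and $t=\frac{m}{2}$ for $m$ even), identify which of $2d(\mathcal{C}_\delta^{\perp H})$ and $d(\mathcal{C}_\delta)$ attains the minimum, and verify the result exceeds $\sqrt{2n}$, writing out one representative case. In fact your identification of $2d(\mathcal{C}_\delta^{\perp H})$ as the minimizer in the odd-$m$ branches is the correct one consistent with the stated bounds $2q^m-2aq-2s+2$, whereas the paper's text asserts the minimum is $d(\mathcal{C}_\delta)$ there, an apparent slip.
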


\begin{proof} Similar as Theorem \ref{thm:1201}, we  only prove $d(\mathcal{C'_\delta})$ for the case that
 $(aq+s)q^{2t-1}\leq \delta \leq (aq+s+1)q^{2t-1}-1$, where $1\leq t \leq m-1$, $1\leq a \leq q-2$ and $0\leq s \leq q-1$.
In this case, from Lemma \ref{lem:19} and the BCH bound, we have
$$ d(\mathcal{C}_{\delta})\geq (aq+s)q^{2t-1}\,\, \text{and}\,\, d(\mathcal{C}_{\delta}^{\perp H})\geq q^{2(m-t)+1}-aq-s+1.$$
It is easy to see that $t=\frac{m}{2}$ for $m$ being even and $t=\frac{m+1}{2}$ for $m$ being odd if $\min\{d(\mathcal{C}_{\delta}),2d(\mathcal{C}_{\delta}^{\perp H})\}\geq \sqrt{2n}$.

If $t=\frac{m}{2}$, from the range of $a$ and $s$ above, it is easy to see that
$$aq^m+sq^{m-1} < 2q^{m+1}-2aq-2s+2.$$
Then
$$d(\mathcal{C'_\delta})\geq \min\{d(\mathcal{C}_{\delta}),2d(\mathcal{C}_{\delta}^{\perp H})\}=d(\mathcal{C}_{\delta})> \sqrt{2n}$$
 for $2\leq a \leq q-2$ and $0\leq s \leq q-1$.

 If $t=\frac{m+1}{2}$, we have
$$ d(\mathcal{C}_{\delta}^{\perp H})\geq q^{m}-aq-s+1\,\, \text{and}\,\, d(C_{\delta})\geq(aq+s)q^{m}.$$
Obviously,  $\min\{d(\mathcal{C}_{\delta}),2d(\mathcal{C}_{\delta}^{\perp H})\}=d(\mathcal{C}_{\delta})> \sqrt{2n}$ for $1\leq a \leq q-2$ and $0\leq s \leq q-1$.
 This completes the proof.
\end{proof}

In the following, we present two  classes of linear codes of length $2n=\frac{2(Q^m-1)}{Q^s-1}$ with dimension $n$ and minimum distance better than like-square-root lower bound ($d= \sqrt{\frac{n}{2}}$). From Theorem \ref{eq:0822}, we have the following result. The proofs are similar as Theorem \ref{thm:1201} and Theorem \ref{thm:1202}, we omit the details.

\begin{theorem}\label{thm:21}
Let $n=\frac{Q^m-1}{Q-1}$, $m\geq3$ and $\frac{Q^{\frac{m+1}{2}}-1}{Q-1} < \delta \leq \frac{Q^{\frac{m+1}{2}}-1}{Q-1}+q^{\frac{m-1}{2}}$. Then $\mathcal{\mathcal{C'_\delta}}$ is a a linear code with parameters $[2n,n, d(\mathcal{C'_\delta})]$, where
\begin{equation*}
\begin{aligned}
d(\mathcal{C'_\delta})\geq\begin{cases}
 Q^{\frac{m-1}{2}}+1, &\text{ if  $\mathcal{C'_\delta}$ is given in (\ref{eq:1125}),} \\
\delta, &\text{ if  $\mathcal{C'_\delta}$ is given in (\ref{eq:112501}),}
\end{cases}
\end{aligned}
\end{equation*}
which is larger than like-square-root lower bound.
\end{theorem}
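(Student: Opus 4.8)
The plan is to derive both bounds by feeding Theorem~\ref{eq:0822} (applied at $t=\tfrac{m+1}{2}$), the BCH bound, and the matrix-product estimates (\ref{eq:1127})--(\ref{eq:112701}) into Lemma~\ref{lem:mpcode}; the argument mirrors the proofs of Theorems~\ref{thm:1201} and~\ref{thm:1202}. First note that, since the hypotheses involve $Q^{(m+1)/2}$ and $q^{(m-1)/2}$, the integer $m\ge 3$ is odd, so $t:=\tfrac{m+1}{2}$ is an integer with $2\le t\le m-1$, and the interval $\tfrac{Q^{(m+1)/2}-1}{Q-1}<\delta\le\tfrac{Q^{(m+1)/2}-1}{Q-1}+q^{(m-1)/2}$ is precisely the interval $\tfrac{Q^{t}-1}{Q-1}<\delta\le\tfrac{Q^{t}-1}{Q-1}+q^{t-1}$ to which Theorem~\ref{eq:0822} applies. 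Hence $d(\C_\delta^{\perp H})\ge Q^{m-t}+1=Q^{(m-1)/2}+1$, while the BCH bound gives $d(\C_\delta)\ge\delta$. By Lemma~\ref{lem:mpcode} the code $\mathcal{C'_\delta}$ already has parameters $[2n,n,d(\mathcal{C'_\delta})]$, so everything reduces to estimating the two minima in (\ref{eq:1127}) and (\ref{eq:112701}) (which presuppose, as in the setup of those equations, that the first row of $A$ has no zero entry).

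For the construction (\ref{eq:1125}) I would use $d(\mathcal{C'_\delta})\ge\min\{2d(\C_\delta),d(\C_\delta^{\perp H})\}$: since $\tfrac{Q^{(m+1)/2}-1}{Q-1}=\sum_{i=0}^{(m-1)/2}Q^{i}\ge Q^{(m-1)/2}$, we have $2d(\C_\delta)\ge 2\delta>2\cdot\tfrac{Q^{(m+1)/2}-1}{Q-1}\ge 2Q^{(m-1)/2}\ge Q^{(m-1)/2}+1$ and also $d(\C_\delta^{\perp H})\ge Q^{(m-1)/2}+1$, whence $d(\mathcal{C'_\delta})\ge Q^{(m-1)/2}+1$. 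For the construction (\ref{eq:112501}) I would use $d(\mathcal{C'_\delta})\ge\min\{2d(\C_\delta^{\perp H}),d(\C_\delta)\}$; since $d(\C_\delta)\ge\delta$, it suffices to verify $2d(\C_\delta^{\perp H})\ge\delta$, which using the upper end of the $\delta$-range amounts to $2\bigl(Q^{(m-1)/2}+1\bigr)\ge\tfrac{Q^{(m+1)/2}-1}{Q-1}+q^{(m-1)/2}$. Splitting $\tfrac{Q^{(m+1)/2}-1}{Q-1}=Q^{(m-1)/2}+\tfrac{Q^{(m-1)/2}-1}{Q-1}$ reduces this to $Q^{(m-1)/2}+2\ge\tfrac{Q^{(m-1)/2}-1}{Q-1}+q^{(m-1)/2}$; since $Q=q^2\ge 4$ one has $\tfrac{Q^{(m-1)/2}-1}{Q-1}\le\tfrac13 Q^{(m-1)/2}$ and $q^{(m-1)/2}=\sqrt{Q^{(m-1)/2}}$, so writing $x:=Q^{(m-1)/2}\ge 4$ it is enough that $\tfrac x3+\sqrt x\le x+2$, which holds for all $x\ge 4$. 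Hence $d(\mathcal{C'_\delta})\ge\delta$ in this case.

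It then remains to check that both bounds exceed the square-root-like bound $\sqrt{n/2}=\sqrt{\tfrac{Q^m-1}{2(Q-1)}}$: from $2(Q-1)>Q$ we get $\tfrac{Q^m-1}{2(Q-1)}<Q^{m-1}<(Q^{(m-1)/2}+1)^2$, and likewise $Q^{m-1}=(Q^{(m-1)/2})^2<\delta^2$ because $\delta>Q^{(m-1)/2}$. The \emph{only nontrivial step} in this plan is the inequality $2(Q^{(m-1)/2}+1)\ge\tfrac{Q^{(m+1)/2}-1}{Q-1}+q^{(m-1)/2}$ needed for the construction (\ref{eq:112501}): the half-integer power $q^{(m-1)/2}$ sitting in the upper bound on $\delta$ has to be absorbed, and the clean route is the geometric-series splitting of $\tfrac{Q^{(m+1)/2}-1}{Q-1}$ together with the elementary estimate $q^{(m-1)/2}=\sqrt{Q^{(m-1)/2}}\le\tfrac23 Q^{(m-1)/2}+2$; everything else is a direct substitution into Theorem~\ref{eq:0822}, the BCH bound, and Lemma~\ref{lem:mpcode}.
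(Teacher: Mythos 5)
Your proof is correct and follows exactly the route the paper intends: the paper omits this proof, stating only that it is similar to Theorems~\ref{thm:1201} and~\ref{thm:1202}, and your argument fills in those details by substituting Theorem~\ref{eq:0822} (at $t=\frac{m+1}{2}$) and the BCH bound into (\ref{eq:1127}) and (\ref{eq:112701}). The only step not already present in the paper, the verification that $2\bigl(Q^{(m-1)/2}+1\bigr)\geq \delta$ for the construction (\ref{eq:112501}), is carried out correctly.
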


\subsection{Self-dual codes with  square-root lower bound }

In this subsection, we construct several classes of Euclidean and Hermitian self-dual linear codes.
The following is a key proposition for us to get these codes.

\begin{proposition}\label{thm}
Let $\cD$ be a linear code over $\mathbb{F}_q$ with parameters $[n,k,d_1]$.  Let $A=\left( \begin{array}{cccccc}
 a& b\\
 c& d\\
\end{array} \right)\in \mathbb M_{2\times 2}(\mathbb{F}_q)$ with $ad\neq bc$ and  $\C=[\cD, \cD^{\bot}]A$. Then $\C$ is an $[2n,n, d]$ self-dual code if and only if one of the following holds

\begin{itemize}
\item[(1)] $2\mid q$,  $a=b$, $c\neq d$ and $\cD$ is dual-containing.

\item[(2)] $2\mid q$, $c=d$, $a\neq b$ and $\cD$ is self-orthogonal.

\item[(3)] $q\equiv 3\pmod 4$ and $\cD$ is self-dual.

\item[(4)] $q\equiv 1\pmod 4$, $\mu^2=-1$, $a=\mu b$ and $c=-\mu d$, or $a=-\mu b$ and $c=\mu d$.

\item[(5)] $q\equiv 1\pmod 4$, $\mu^2=-1$, $a=\pm \mu b$, $c\neq \pm \mu d$ and $\cD$ is dual-containing.

\item[(6)] $q\equiv 1\pmod 4$, $\mu^2=-1$, $a\neq\pm\mu b$, $c= \pm\mu d$ and $\cD$ is self-orthogonal.

\item[(7)] $q\equiv 1\pmod 4$, $\mu^2=-1$, $a\neq\pm\mu b$, $c\neq \pm\mu d$ and $\cD$ is self-dual.
\end{itemize}
Moreover, $d\ge \mbox{min}\{2d_1, d_1^{\bot}\}$ and the equality holds when $A$ is also triangle, where $d_1^{\bot}$ is the minimum distance of the dual code of $\cD$.
\end{proposition}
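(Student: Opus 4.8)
The plan is to use that the matrix--product code $\C=[\cD,\cD^{\perp}]A$ already has length $2n$ and dimension $k+(n-k)=n$ (Lemma~\ref{lem:mpcode}, since a nonsingular $A$ is FRR), so it equals its Euclidean dual precisely when it is self-orthogonal. I would therefore write a generic codeword of $\C$ as $\bc=(a\bu+c\bv,\ b\bu+d\bv)$ with $\bu\in\cD$, $\bv\in\cD^{\perp}$, take two such words $\bc_i=(a\bu_i+c\bv_i,\ b\bu_i+d\bv_i)$, and expand $\langle\bc_1,\bc_2\rangle$. Every cross term $\langle\bu_i,\bv_j\rangle$ vanishes because one factor lies in $\cD$ and the other in $\cD^{\perp}$ (the Euclidean form being symmetric), which collapses the expression to
\[
\langle\bc_1,\bc_2\rangle=(a^{2}+b^{2})\,\langle\bu_1,\bu_2\rangle+(c^{2}+d^{2})\,\langle\bv_1,\bv_2\rangle .
\]

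Next I would observe that, by setting $\bv_1=\bv_2=\bzero$ and then $\bu_1=\bu_2=\bzero$, the code $\C$ is self-orthogonal if and only if $(a^{2}+b^{2})\langle\bu_1,\bu_2\rangle=0$ for all $\bu_1,\bu_2\in\cD$ \emph{and} $(c^{2}+d^{2})\langle\bv_1,\bv_2\rangle=0$ for all $\bv_1,\bv_2\in\cD^{\perp}$; equivalently $[\,a^{2}+b^{2}=0$ or $\cD\subseteq\cD^{\perp}\,]$ and $[\,c^{2}+d^{2}=0$ or $\cD^{\perp}\subseteq\cD\,]$, where the second alternative in the last bracket is exactly the dual-containing condition since $\cD^{\perp}$ self-orthogonal means $\cD^{\perp}\subseteq(\cD^{\perp})^{\perp}=\cD$.

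The remaining task is to read off when $a^{2}+b^{2}=0$ and when $c^{2}+d^{2}=0$, always remembering $\det A=ad-bc\neq0$. If $2\mid q$ then $a^{2}+b^{2}=(a+b)^{2}$ and $c^{2}+d^{2}=(c+d)^{2}$, and $a=b$ together with $c=d$ would force $\det A=0$, so at most one of the two scalars vanishes; this gives cases (1) and (2). If $q\equiv3\pmod4$ then $-1$ is a nonsquare, so $a^{2}+b^{2}=0$ forces $a=b=0$, impossible for a nonsingular $A$, and likewise $c^{2}+d^{2}\neq0$, leaving only (3). If $q\equiv1\pmod4$ with $\mu^{2}=-1$, then $a^{2}+b^{2}=(a+\mu b)(a-\mu b)$ vanishes exactly when $a=\pm\mu b$, and similarly $c^{2}+d^{2}=0$ iff $c=\pm\mu d$; sorting out the four patterns (both vanish, only the first, only the second, neither) and using $\det A\neq0$ — which in the ``both vanish'' pattern forces the two $\pm$ signs to be opposite — yields (4)--(7). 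For the distance estimate I would apply Lemma~\ref{lem:mpcode} directly: $U_A(2)=\mathbb{F}_q^{2}$ has minimum distance $1$ and $U_A(1)=\langle(a,b)\rangle$ has minimum distance $\wt(a,b)\in\{1,2\}$, so $d(\C)\ge\min\{d_1\wt(a,b),\,d_1^{\perp}\}$; whenever $\wt(a,b)=1$ (i.e.\ $a=0$ or $b=0$) the condition that survives above is one that makes $\cD$ self-orthogonal, hence $d_1^{\perp}\le d_1$ and $\min\{d_1,d_1^{\perp}\}=\min\{2d_1,d_1^{\perp}\}$, while if $\wt(a,b)=2$ the bound $\min\{2d_1,d_1^{\perp}\}$ is immediate; the equality for triangular $A$ is the corresponding clause of Lemma~\ref{lem:mpcode}.

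The inner-product computation is routine. The point that needs care — and the one I expect to be the only real obstacle — is the bookkeeping of how the vanishing of $a^{2}+b^{2}$ and $c^{2}+d^{2}$ interacts with the nonsingularity of $A$: this is what trims the admissible matrices down to exactly the listed families (in particular it is what pins down the sign pattern in (4)), and it is also what one must track case by case in order to obtain the single uniform bound $d\ge\min\{2d_1,d_1^{\perp}\}$.
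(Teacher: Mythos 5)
Your proposal is correct and follows essentially the same route as the paper: both reduce self-duality (via the dimension count) to the vanishing of $(a^{2}+b^{2})GG^{T}$ and $(c^{2}+d^{2})HH^{T}$ — you phrase this at the codeword level, the paper at the generator-matrix level — and then run the identical case analysis on when $a^{2}+b^{2}$ and $c^{2}+d^{2}$ vanish against the constraint $ad\neq bc$. Your treatment of the distance bound is in fact slightly more careful than the paper's, which invokes Lemma~\ref{lem:mpcode} without addressing the case $\wt(a,b)=1$ that you correctly dispose of using $d_1^{\perp}\le d_1$ when $\cD$ is self-orthogonal.
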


\begin{proof} Let the generator and parity-check matrices of $\cD$ are $G$ and $H$, then the generator matrix of $\C$ is $\left( \begin{array}{cccccc}
 aG& bG\\
 cH& dH\\
\end{array} \right)$. Hence, $\C$ is self-orthogonal if and only if

\begin{eqnarray*}&&\left( \begin{array}{cccccc}
 aG& bG\\
 cH& dH\\
\end{array} \right)\left( \begin{array}{cccccc}
 aG& bG\\
 cH& dH\\
\end{array} \right)^T\nonumber\\
&=& \left( \begin{array}{cccccc}
 aG& bG\\
 cH& dH\\
\end{array} \right)\left( \begin{array}{cccccc}
 aG^T& cH^T\\
 bG^T& eH^T\\
\end{array} \right)\nonumber\\
&=& \left( \begin{array}{cccccc}
 (a^2+b^2)GG^T& (ac+bd)GH^T\\
 (ac+bd)G^TH& (c^2+d^2)HH^T\\
\end{array} \right)=\mathbf{0}.
\end{eqnarray*}
Notice that $GH^T=\mathbf{0}$ and $HG^T=\mathbf{0}$. Due to the dimension, therefore, $\C$ is self-dual if and only if $ (a^2+b^2)GG^T=(c^2+d^2)HH^T=\mathbf{0}$.

We know that $\cD$ is self-orthogonal if and only if $GG^T=\mathbf{0}$ and   if and only if $HH^T=\mathbf{0}$. We divide the proof into the following parts:

\begin{itemize}
\item[$\bullet$] If $a^2+b^2=0$ and $c^2+d^2=0$, then $a=b$ and $c=d$ when $2\mid q$ and
$a^2=-b^2$ and $c^2=-d^2$ when $-1$ is square in $ \mathbb{F}_q^*$, namely  $q\equiv1\pmod 4$. Assume that $\lambda^2=-1$ for some $\lambda\in \mathbb{F}_q$. Then $(a-b\mu)(a+b\mu)=0$ and $(c-d \mu)(c+d\mu)=0$. From $ad\neq bc$, we have $a=b\mu$, $c=-d\mu$ or $a=-b\mu$, $c=d\mu$.

\item[$\bullet$] If $a^2+b^2=0$ and $c^2+d^2\neq 0$, then we should have $HH^T=\mathbf{0}$, which means  $\cD$ is dual-containing.

\item[$\bullet$]  If $a^2+b^2\neq 0$ and $c^2+d^2= 0$, then we should have $GG^T=\mathbf{0}$, which means $\cD$ is self-orthogonal.

\item[$\bullet$]  If $a^2+b^2\neq0$ and $c^2+d^2\neq 0$, then we should have $HH^T=\mathbf{0}$ and $GG^T=\mathbf{0}$, which means $\cD$ is self-dual.

\end{itemize}
From Lemma \ref{lem:mpcode} and the above discussion, the desired conclusion follows.
\end{proof}

From Subsection B and Proposition \ref{thm}, we can obtain several classes  Euclidean and Hermitian self-dual linear codes with minimum distances better than square-root or square-root-like lower bounds. We now give some remarks about our results.

\begin{remark}\label{rem:123}
From (4) in Proposition \ref{thm}, we know that there is a nonsigular matrix $A$ such that the linear codes in Theorems \ref{eq:thm9190}-\ref{thm:17} or Theorems \ref{thm:1201}-\ref{thm:21}  are  Euclidean or Hermitian self-dual with minimum distances better than square-root or square-root-like lower bounds if $q\equiv 1 \pmod 4$ or $Q\equiv 1 \pmod 4$.
\end{remark}

\begin{remark}\label{rem:1234}
From (1) in Proposition \ref{thm}, when the designed distances of BCH codes lies within the intersection of (\ref{eq:1209}) and the theorems in Remark \ref{rem:123}, we know that there is a nonsigular matrix $A$ such that the linear codes in the theorems of Remark \ref{rem:123} are  Euclidean or Hermitian self-dual with minimum distances better than square-root or square-root-like lower bounds if $2\,|\, q$ or $2\,|\, Q$.
\end{remark}

In the following,  we will compare some of our results with those presented in \cite{Chen23}.
 Let $\mathbf{C}$ be a narrow-sense BCH code of length $n$ with designed distance
\begin{equation}\label{eq:ss}
\delta= \left\lfloor \frac{q^{\frac{m+1}{2}}-q+1}{\nu}\right\rfloor,
\end{equation}
where $n=\frac{q^m-1}{\nu}>4$,  $m$ is odd and $\nu$ is a proper divisor of $q^m-1$.
Let
\begin{equation}\label{eq:ss02}
\mathbf{C}_1=(\mathbf{C},\mathbf{C}^{\perp})\left( \begin{array}{cccccc}
1& \mu\\
0&1\\
\end{array} \right)
\end{equation}
and
\begin{equation}\label{eq:ss01}
\mathbf{C}_2=(\mathbf{C},\mathbf{C}^{\perp})\left( \begin{array}{cccccc}
1& 1\\
0&1\\
\end{array} \right),
\end{equation}
where $\mu^2=-1$.
 We have the following remarks.

 \begin{remark}\label{rem:12011}
Let $q\equiv 1 \pmod 4$, the authors in \cite[Theorem 10]{Chen23} proved that $\mathbf{C}_1$ is a Euclidean self-dual linear code with parameters $[2n,n, \geq \delta]$, where $\delta$ is given in (\ref{eq:ss}). If $\nu \, | \, q-1$ and $\nu\neq q-1$, from Lemma \ref{lem:1122}, Theorem \ref{eq:thm9190} and Proposition \ref{thm}, we know that  $\mathbf{C}_1$ is a Euclidean self-dual linear code with parameters $[2n,n, \geq \frac{q^{\frac{m+1}{2}}-q}{\lambda}+2]$. Hence, in this case, our lower bound is larger than the bound in \cite[Theorem 10]{Chen23}.
\end{remark}

 \begin{remark}\label{rem:1201101}
Let $q=2^s$ with $s\geq 1$. Let $g(x)$ be the generator polynomial of $\mathbf{C}$ and $h^{\perp}(x)$ be the reciprocal polynomial of $(x^n+1)/g(x)$. Let $\mathbf{C'}$ denote the cyclic code of length $2n$ over $\mathbb{F}_q$ with generator polynomial $g(x)h^{\perp}(x)$. The authors in \cite{Chen23} proved that $\mathbf{C'}$ is a Euclidean self-dual cyclic code with parameters $[2n,n, \geq \delta]$ and permutation-equivalent to $\mathbf{C}_2$, where $\delta$ is given in (\ref{eq:ss}). If $\nu \, | \, q-1$ and $\nu\neq q-1$, from Lemma \ref{lem:1122}, Theorem \ref{eq:thm9190} and Proposition \ref{thm}, we know that  $\mathbf{C'}$ is a Euclidean self-dual cyclic code with parameters $[2n,n, \geq \frac{q^{\frac{m+1}{2}}-q}{\lambda}+2]$. Hence, in this case, our lower bound is larger than the bound in \cite[Theorem 7]{Chen23}.
\end{remark}

\begin{remark}\label{rem:1201102}
Let $n=\frac{Q^m-1}{\mu}>4$, where $q=2^s$ with $s\geq 1$, $m$ is odd and $\mu$ is a proper divisor of $q^m-1$. Let $\mathbf{C}$ be a narrow-sense BCH code of length $n$ with designed distance
$$\delta=  \frac{q^m-1}{\mu}.$$
Let $g(x)$ be the generator polynomial of $\mathbf{C}$ and $h^{\perp H}(x)$ be the generator polynomial of the Hermitian dual of $\mathbf{C}^{\perp H}$.
 Let $\mathbf{C''}$ denote the cyclic code of length $2n$ over $\mathbb{F}_Q$ with generator polynomial $g(x)h^{\perp H}(x)$. The authors in \cite{Chen23} proved that $\mathbf{C''}$ is a Hermitian self-dual cyclic code with parameters $[2n,n, \geq \delta]$ and permutation-equivalent to
$$(\mathbf{C},\mathbf{C}^{\perp})\left( \begin{array}{cccccc}
1& 1\\
0&1\\
\end{array} \right).$$
If $\mu=1$, from Lemma \ref{lem:1122} and Lemma \ref{thm}, we know that  $\mathbf{C''}$ is a Euclidean self-dual cyclic code with parameters $[2n,n, \geq 2\delta]$. Hence, in this case, our lower bound  is $2$ times that of the existing lower bound in \cite[Theorem 8]{Chen23}.
\end{remark}

\section{Concluding remarks }\label{sec:concluding}

The contributions of this paper are summarized as follows:

\begin{itemize}
\item[(1)]  Let $n=\frac{q^m-1}{q^s-1}$, $2\leq t\leq \frac{m}{s}-1$ and $\frac{q^{st}-1}{q^s-1}< \delta \leq \frac{q^{st}-1}{q^s-1}+q^{t-1}$, the authors in \cite{GDL21} showed that $d(\C_{\delta}^{\perp})\geq \frac{q^{m-ts}-1}{q^s-1}+1$. In Theorem \ref{eq:1204}, we presented a new lower bound $d(\C_{\delta}^{\perp})\geq q^{m-st}$, which is  almost $q^s-1$ times that of the existing lower bounds.

\item[(2)]  Let $n=\frac{Q^m-1}{Q-1}$, $2\leq t\leq m-1$ and $\frac{Q^t-1}{Q-1}< \delta \leq \frac{Q^t-1}{Q-1}+q^{t-1}$, the authors in \cite{Fan23} showed that $d(\C_{\delta}^{\perp})\geq\frac{Q^{m-t}q-q}{Q-1}+1$. In Theorem \ref{eq:0822}, we showed a new lower bound $d(\C_{\delta}^{\perp})\geq Q^{m-t}+1$, which is  almost $q$ times that of the existing lower bounds.

\item[(3)] Let  $n=\frac{q^m-1}{\lambda}$, or $n=q^m+1$, or $n=Q^m-1$, where $\lambda \, | \, q-1$ and $\lambda\neq q-1$.  We constructed several classes Euclidean  and Hermitian linear codes with parameters $[2n,n]$ (see Theorem \ref{eq:thm9190}, Theorem \ref{eq:1thm91190}, Remark \ref{eq:1212}, Theorem \ref{thm:16}, Theorem \ref{thm:1201} and Theorem \ref{thm:1202}).  In many cases, these codes are  Euclidean  and Hermitian self-dual codes with minimum distance larger than square-root bounds (see Remark \ref{rem:123} and Remark \ref{rem:1234}).

\item[(4)] Let  $n=\frac{q^m-1}{q^s-1}$, or $n=\frac{Q^m-1}{Q-1}$, where $s\, |\,m$.  We obtained several classes Euclidean  and Hermitian linear codes with parameters $[2n,n]$ (see Theorem \ref{thm:17} and Theorem \ref{thm:21}). In many cases, these codes are  Euclidean  and Hermitian self-dual codes with minimum distance larger than square-root-like bounds (see Remark \ref{rem:123} and Remark \ref{rem:1234}).

\item[(5)] We compared some of our results with those presented in \cite{Chen23}. Our lower bounds are better than the results in \cite{Chen23} (see Remark \ref{rem:12011}, Remark \ref{rem:1201101} and Remark \ref{rem:1201102}).
\end{itemize}

It is hard to construct an infinite family of self-dual cyclic codes with
the square-root lower bound.   We welcome interested readers to join us in exploring this topic.


\begin{thebibliography}{99}



\bibitem{Augot96} D. Augot and F. Levy-dit-Vehel, `` Bounds on the minimum distance of the duals of BCH codes," \emph{ IEEE Trans. Inf. Theory}, vol. 42, no. 4, pp. 1257-1260, Jul. 1996.

\bibitem{Auly} S. Aly, A. Klappencker and P. K. Sarvepalli, `` On quantum and classical BCH codes," \emph{ IEEE Trans. Inf. Theory}, vol. 53, no. 3, pp. 1183-1188, Mar. 2007.

\bibitem{Betsumiya23} K. Betsumiya, S. Georgiou, T. A. Gulliver, M. Harada, and C. Koukouvinos, ``On self-dual codes over some prime fields," \emph{Discrete Math.}, vol. 262, nos. 1-3, pp. 37-58, Feb. 2003.

\bibitem{BN} T. Blackmore and G. H. Norton, ``Matrix-product codes over  $\mathbb{F}_q$," \emph{Appl. Algebra Eng. Commun. Comput}., vol. 12, pp. 477-500, Dec. 2001.

\bibitem{Bose62} R. C. Bose and D. K. Ray-Chaudhuri, ``On a class of error correcting binary group codes," \emph{ Inf. Control},  vol. 3, pp. 279-290, Mar. 1960.

\bibitem{Chen23} H. Chen and C. Ding, ``Self-dual cyclic codes with square-root-like lower bounds on
their minimum distances,"	arXiv:2411.02720.

\bibitem{Dougherty97} S. T. Dougherty, T. A. Gulliver, and M. Harada, ``Extremal binary self-dual codes," \emph{IEEE Trans. Inf. Theory,} vol. 43, no. 6, pp. 2036-2047, Nov. 1997.

\bibitem{Ding22} C. Ding and C. Tang, ``Designs from Linear Codes," Second Edition, World Scientific, Singapore, 2022.

\bibitem{Fang19} W. Fang and F.-W. Fu, ``New constructions of MDS Euclidean self-dual codes from GRS codes and extended GRS codes,"  \emph{IEEE Trans. Inf. Theory,} vol. 65, no. 9, pp. 5574-5579, Sep. 2019.

\bibitem{Fu24} Y. Fu and H. Liu,
``Two classes of LCD BCH codes over finite fields," \emph{ Finite Fields Appl}., vol. 99, 102478 (2024)

\bibitem{Fan23} M. Fan, C. Li, and C. Ding, ``The Hermitian dual codes of several classes of BCH codes," \emph{ IEEE Trans. Inf. Theory}, vol. 69, no. 7, pp. 4484-4497, Mar. 2023.

\bibitem{FanY2014} Y. Fan, S. Ling, and H. Liu, ``Matrix Product codes over finite commutative Frobenius rings," \emph{Des
Codes Cryptogr.}, vol. 71, pp. 201-227, 2014.

\bibitem{GDL21}
B Gong, C. Ding, and C. Li, ``The dual codes of several classes of BCH codes," \emph{ IEEE
Trans. Inf. Theory},  vol. 68, no. 2, pp. 953-964, Mar. 2022.

\bibitem{Grass08} M. Grass and T. A. Gulliver, ``On self-dual MDS codes," in \emph{Proc. IEEE Int. Symp. Inf. Theory,} Toronto, ON, Canada, pp. 1954-1957, Jul. 2008,

\bibitem{Gulliver08} T. A. Gulliver and M. Harada, ``New nonbinary self-dual codes," \emph{IEEE Trans. Inf. Theory,} vol. 54, no. 1, pp. 415-417, Jan. 2008.

\bibitem{Gaborit03} P. Gaborit and A. Otmani, ``Experimental constructions of self-dual codes," \emph{Finite Fields Appl.}, vol. 9, no. 3, pp. 372-394, Jul. 2003.

\bibitem{Gorenstein61} D. C. Gorenstein and N. Zierler, ``A class of error-correcting codes in $p^m$ symbols," \emph{ J. SIAM}, vol. 9, pp. 207-214, Jun. 1961.

\bibitem{Hocquenghem59} A. Hocquenghem, ``Codes correcteurs d'erreurs," \emph{ Chiffres}, vol. 2, no. 2, pp. 147-156, 1959.

\bibitem{Harada07} M. Harada, W. Holzmann, H. Kharaghani, and M. Khorvash, ``Extremal ternary self-dual codes constructed from negacirculant matrices," \emph{Graphs Combinatorics,} vol. 23, no. 4, pp. 401-417, Aug. 2007.

\bibitem{Huffman03} W. C. Huffman and V. Pless, Fundamentals of error-correcting codes. Cambridge University Press, Cambridge, U. K., 2003.

\bibitem{Hernando2010} F. Hernando and D. Ruano,   `` New linear codes from matrix-product codes with polynomial units," \emph{ Adv Math Commun}, vol. 4, no. 3  pp. 363-367, 2010.

\bibitem{Jin17} L. Jin and C. Xing, ``New MDS self-dual codes from generalized Reed-Solomon codes," \emph{IEEE Trans. Inf. Theory,} vol. 63, no. 3, pp. 1434-1438, Mar. 2017.

\bibitem{Li17} S. Li,  ``The minimum distance of some narrow-sense primitive BCH codes," \emph{SIAM J. Discrete Math.},  vol. 31, no. 4, pp. 2530-2569,  2017.

\bibitem{MS78} F. J. MacWilliams and N. J. A. Sloane,  The Theory of Error-Correcting Codes. Amsterdam: North-Holland, 1986.

\bibitem{Pless72} V. Pless, ``Symmetry codes over $\mathbb{F}_3$ and new 5-designs," \emph{J. Comb. Theory}, vol. 12, pp. 119-142, 1972.

\bibitem{Sun24} Z. Sun, C. Ding, and X. Wang, ``Two classes of constacyclic does with
variable parameters $[(q^m-1)/r, k, d]$," \emph{ IEEE Trans. Inf. Theory}, vol. 70, no. 1, pp. 93-114, Jan. 2024.

\bibitem{SLD23}
Z. Sun, C. Li, and C. Ding, ``An infinite family of binary cyclic codes with best parameters," \emph{  IEEE Trans. Inf. Theory}, vol. 70, no. 4, pp.  2411-2418, Apr. 2024.

\bibitem{Shi18} M. Shi, L. Sok, and P. Sole, ``Self-dual codes and orthogonal matrices over large finite fields," \emph{Finite Fields Their Appl.}, vol. 54, pp. 297-314, Nov. 2018.

\bibitem{TD22}  C. Tang and  C. Ding, ``Binary $[n,(n+ 1)/2]$ cyclic codes with good minimum distances," \emph{  IEEE Trans. Inf. Theory}, vol. 68, no. 12, pp.  7842-7849,   2022.

\bibitem{VanAschB2008} B. Van Asch, ``Matrix-Product codes over finite chain rings," \emph{Appl Algebra Engrg Comm Comput.}, vol. 19, pp. 39-49, Jan. 2008.

\bibitem{Wang24}  X. Wang, C. Xiao, and D. Zheng, ``The duals of narrow-sense BCH codes with length $\frac{q^m-1}{\lambda}$,"  \emph{ IEEE Trans. Inf. Theory}, vol. 70, no. 11, pp. 7777-7789, Nov. 2024.

\bibitem{Zhang20} A. Zhang and K. Feng, ``A unified approach to construct MDS self-dual codes via Reed-Solomon codes,"  \emph{IEEE Trans. Inf. Theory,} vol. 66, no. 6, pp. 3650-3656, Jun. 2020.

\end{thebibliography}
\end{document}